\newcommand{\ignore}[1]{}	
\newcommand{\myPara}[1]{\subsection{#1}}
\newcommand{\B}{\mathbb{B}}
\newcommand{\D}{\mathbb{D}}
\newcommand{\Ds}{\mathbb{D^\ast}}
\newcommand{\I}{\mathbb{I}}
\newcommand{\J}{\mathbb{J}}
\newcommand{\R}{\mathbb{R}}
\newcommand{\Z}{\mathbb{Z}}
\newcommand{\bx}{\mathbf{x}}
\newcommand{\mcB}{\mathcal B}
\newcommand{\gradient}[1]{\nabla #1}
\newcommand{\ol}[1]{\overline{#1}}
\newcommand{\qh}{\operatorname{qh}}
\newcommand{\qv}{\operatorname{qv}}
\newcommand{\ip}[2]{\langle#1,#2\rangle}
\newcommand{\norm}[1]{||\, #1 \,||}
\newcommand{\dt}[1]{\textit{#1}}
\newcommand{\FF}{\mathbb{F}}
\newcommand{\ZZ}{\mathbb{Z}}
\newcommand{\eps}{\varepsilon}
\newcommand{\wt}[1]{\widetilde{#1}}
\newcommand{\intbox}{{\setlength{\unitlength}{.33mm}\,\framebox(4,7){}\,}}
\newtheorem{rem}{Remark}[section]
\newenvironment{remark}{\begin{rem}\em}{\hspace*{\fill}$\Box$\bigskip\noindent\end{rem}}
\newenvironment{main-theorem*}[1]{{\textbf{#1}}}{}
\newtheorem{lem}[rem]{Lemma}
\newtheorem{proposition}{Proposition}[section]
\newtheorem{cor}[rem]{Corollary}
\newtheorem{dfn}[rem]{Definition}
\newtheorem*{fundIneq}{Fundamental Inequality \cite[Theorem 4.4.1]{jw-dedsa-91}}
\newtheorem*{Euler}{Error in Euler's method \cite[Theorem 4.5.2]{jw-dedsa-91}}
\begin{document}




\title{\textbf{Certified Computation of planar Morse-Smale Complexes}}


\author{A. Chattopadhyay \thanks{ School of Computing, University of Leeds, 
  Leeds, UK. {\tt A.Chattopadhyay@leeds.ac.uk}
                      }
\and
G. Vegter\thanks{Johann Bernoulli Institute of Mathematics and
          Computer Science, University of Groningen, The Netherlands. {\tt
            G.Vegter@rug.nl}}
\and
C. K. Yap\thanks{Courant Institute of Mathematical Sciences\, New York University, New York, USA. {\tt
            yap@cs.nyu.edu}}
}

\maketitle

\begin{abstract}
The Morse-Smale  complex  is an  important tool  for
global  topological  analysis  in  various problems  of  computational
geometry and topology. Algorithms  for Morse-Smale complexes have been
presented        in       case        of        piecewise       linear
manifolds~\cite{EdelsHarZomo03}.  However, previous  research  in this
field is  incomplete in the case  of smooth functions. 
In the current paper we address the following question: Given an
arbitrarily complex Morse-Smale system on a planar
domain, is it possible to compute its certified (topologically
correct) Morse-Smale complex? Towards this, we develop an algorithm using
interval arithmetic to compute certified critical
points and separatrices forming  the Morse-Smale
complexes  of  smooth   functions on bounded planar domain.  Our
algorithm can also compute geometrically close Morse-Smale complexes.
\end{abstract}

\paragraph*{keyword}

Morse-Smale Complex, Certified Computation, Interval Arithmetic.



%
%
\section{Introduction}
\ignore{%
\begin{figure}[h!]
  \begin{center}
    \psfrag{a}{\tiny{:~saddle}}  
    \psfrag{b}{\tiny{:~maximum}} 
    \psfrag{c}{\tiny{:~minimum}}
    \excludedFig{\includegraphics[width=0.35\textwidth]{postscript/symbols.eps}}
  \end{center}
  \begin{center}
    \excludedFig{\includegraphics[width=0.28\textwidth]{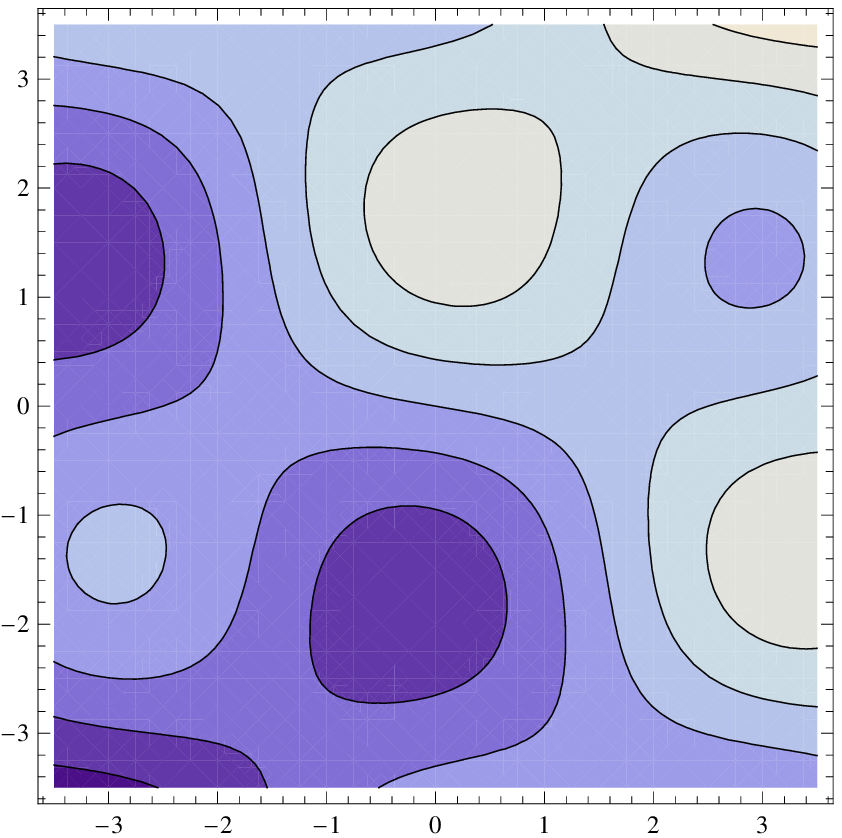}~~
      \includegraphics[width=0.28\textwidth]{postscript/test2eps0pt5angpiby10.eps}}~~
   \caption{{MS-Complex of the function $\cos x\;\sin y+0.2\,(x+y)$}: (1) Contour plot, (2) MS-Complex}
    \label{fig:mscomplex-1}
  \end{center}
\end{figure}
}
Geometrical shapes occurring
in the real world are often extremely complex.  To analyze them,
one associates a sufficiently smooth scalar field
with the shape, e.g., a density function or a function interpolating gray values.
Using this function, topological and geometrical information about the shape
may be extracted, e.g., by computing its \textit{Morse-Smale complex}.
The cells of this complex are maximal connected sets consisting of orthogonal trajectories of the contour
lines---curves of steepest ascent---with the same critical point of the function as origin and the
same critical point as destination.
The leftmost plots in Figures \ref{fig:output1}\subref{fig:squares} and 
\ref{fig:output1}\subref{fig:circular} 
illustrate the level sets of such a density function $h$, and 
the rightmost pictures the Morse-Smale complex
of $h$ as computed by the algorithm in this paper.
This complex reveals the global topology of the shape. 
Recently, the Morse-Smale complex has been successfully applied
in different areas like molecular shape analysis, image analysis, data
and vector field simplification, visualization and detection of voids and clusters in galaxy
distributions~\cite{CazalsChazals03, GyuBreHamPas08}.
\begin{figure}[h!]
  \vspace*{-1ex}
  \centering
  \subfigure[Contour plot (left) and Morse-Smale complex (right) of $h(x,y)=\cos x\;\sin y+0.2\,(x+y)$
      	inside box {$[-3.5,3.5]\times[-3.5,3.5]$}. CPU-time: 11 seconds.]{
          \def\width{0.21\textwidth}
     \includegraphics[width=\width]{postscript/contour-test2.eps}~~
     \includegraphics[width=\width]{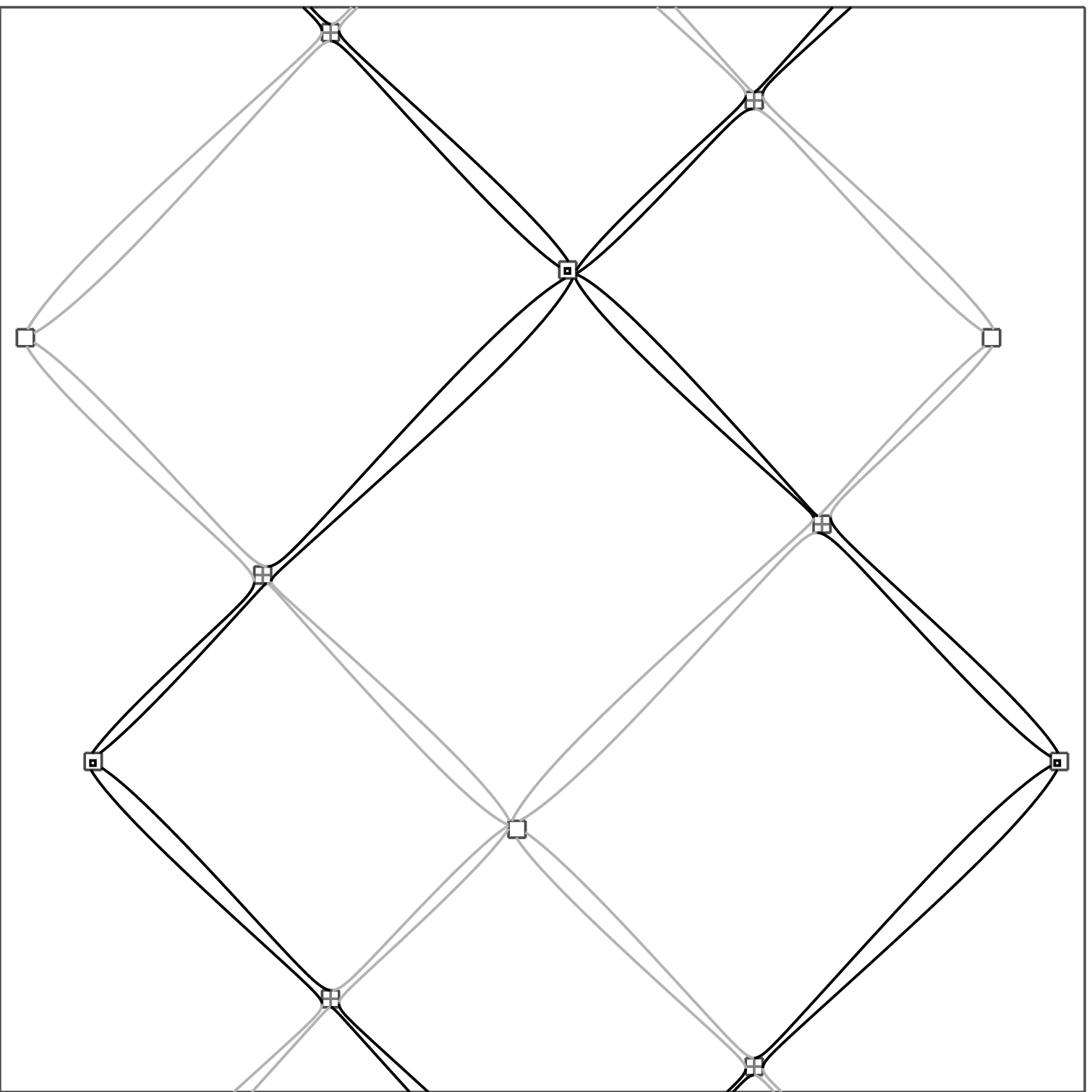}
    \label{fig:squares}
  }~~~\\
  \subfigure[Contour plot (left) and Morse-Smale Complex (right) of 
  $h(x,y) = 10x -\frac{13}{2}(x^2+y^2)+\frac{1}{3}\,(x^2+y^2)^2$
  inside box {$[-5,5]\times[-5,5]$}. CPU-time: 0.5 seconds.]{
    \def\width{0.21\textwidth}
    \includegraphics[width=\width]{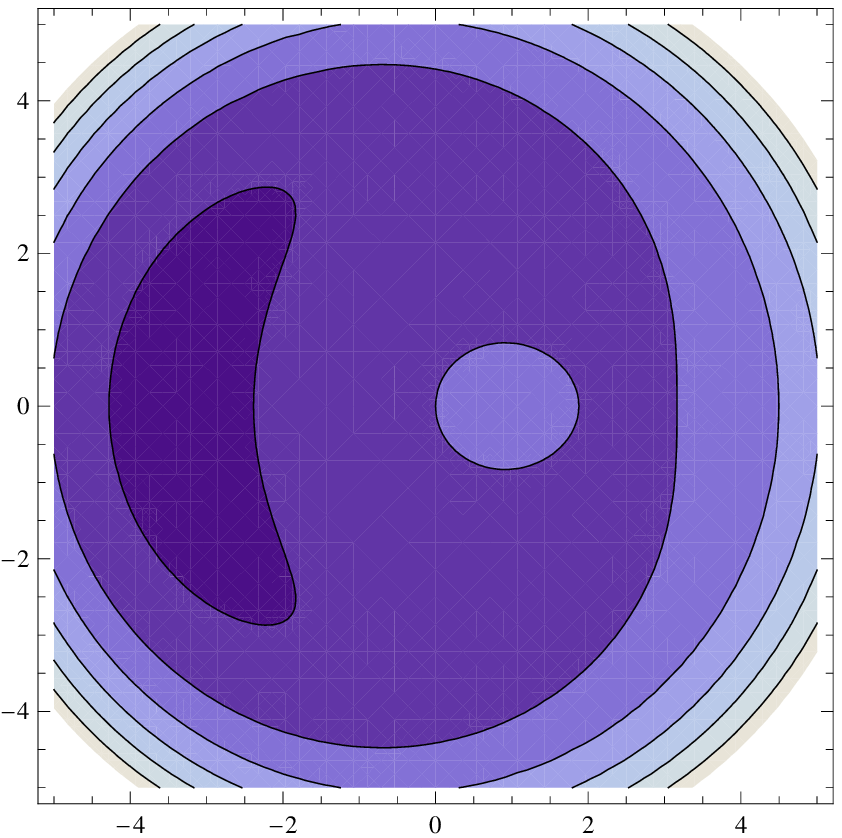}~~
    \includegraphics[width=\width]{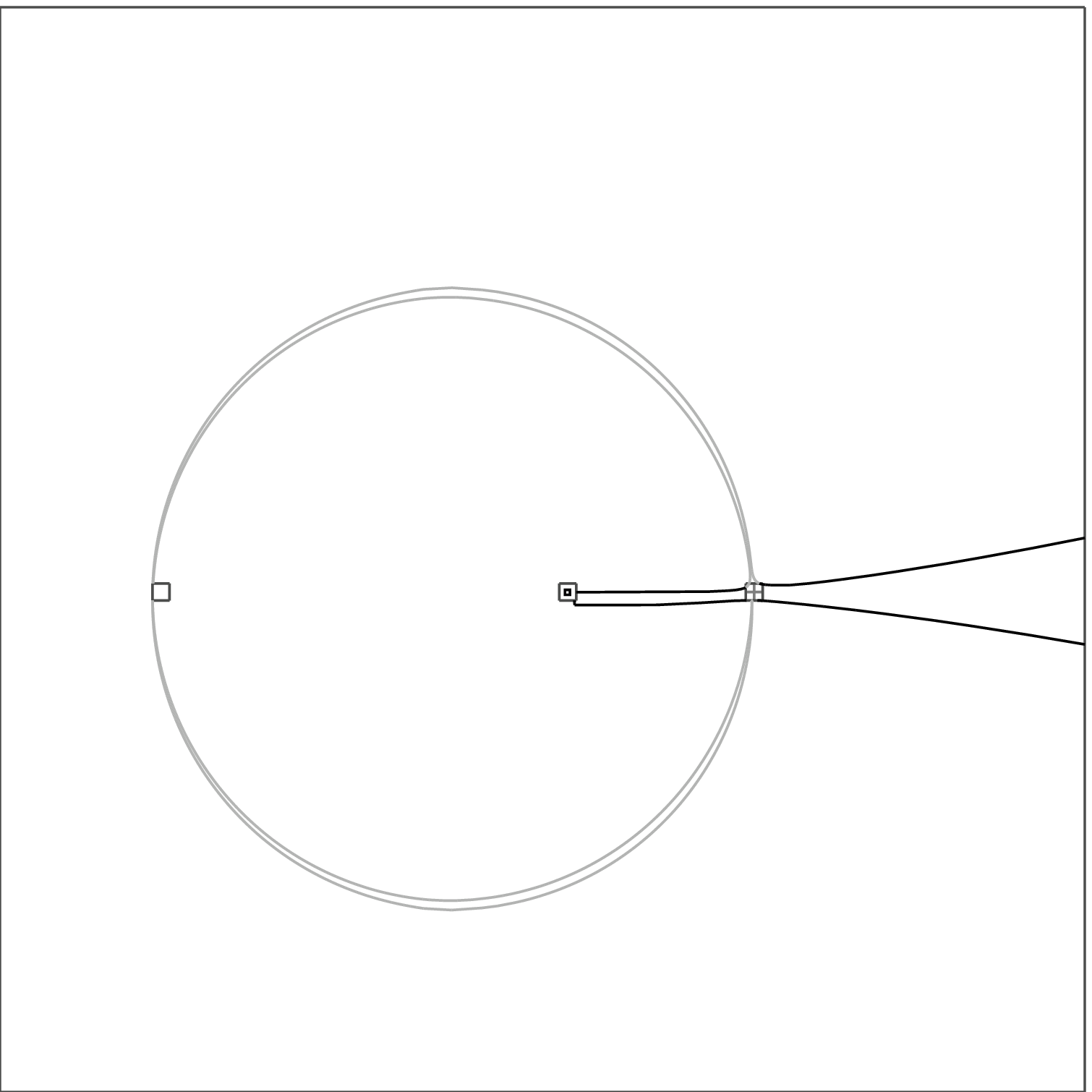}
    \label{fig:circular}
  }
  
  \vspace*{-2ex}
  \caption{Contour plots of Morse-Smale functions, and their Morse-Smale complexes.}
  \label{fig:output1}
  \vspace*{-2ex}
\end{figure}

\myPara{Problem statement} 
A Morse function $h: \mathbb{R}^2 \rightarrow \mathbb{R}$
is a real-valued function with \textit{non-degenerate critical points} (i.e.,
critical points with non-singular Hessian matrix).
Non-degenerate critical points are \textit{isolated}, and are either maxima, or minima, or saddle points.
They correspond to singular points of the \textit{gradient vector field} 
$\nabla h$ of $h$, of type sink, source or saddle, respectively.
Regular integral curves of the gradient vector field $\nabla h$ are orthogonal trajectories of the
regular level curves of $h$.
We are interested in the configuration of integral curves of the
gradient vector field. 
An \textit{unstable (stable) separatrix} of a saddle point is the set of all
regular points whose forward (backward) integral curve emerges from the
saddle point. Section~\ref{sec:prelim} contains a more precise definition.
A non-degenerate saddle point has two stable and two unstable separatrices.
A \textit{Morse-Smale function} is a Morse function whose stable and
unstable separatrices are disjoint.
In particular, the unstable separatrices flow into a sink, and separate the unstable regions of two
sources. Similarly, the stable separatrices emerge from a source, and separate the stable regions of
two sinks.
The corresponding gradient vector field is called a \textit{Morse-Smale system} (MS-system).
The \textit{Morse-Smale complex} (MS-complex for short) is a complex consisting
of all singularities, separatrices and the open regions forming their complement, of the
\textit{MS-system}. In other words, a cell of the MS-complex is a maximal connected subset of the
domain of $h$ consisting of points whose integral curves have the same origin and destination.
See also~\cite{EdelsHarNatPas03, EdelsHarZomo03, pm-gtdsi-82} and Section~\ref{sec:prelim}.
The MS-complex describes the global  structure of a Morse-Smale function.

Existing algorithms for MS-complexes~\cite{EdelsHarNatPas03, EdelsHarZomo03} compute the
complex of a piecewise linear function on a piecewise linear manifold,
or, in other words, of a discrete gradient-like vector field.
When $h$ is an analytic function, we cannot use these algorithms
without first creating a piecewise linear approximation $\wt{h}$.
However, the MS-complex of $\wt{h}$ is not guaranteed to be combinatorially equivalent to
the MS-complex of the smooth vector field. The topological correctness depends on how close
the approximation $\wt{h}$ is to $h$.  
%
%
%
%
	Here ``topological correctness'' of the computed MS-complex
	$\widetilde{M}$
	means that there is a homeomorphism $f$ of the domain that
	induces a homeomorphism of each cell $\widetilde{c}\in\widetilde{M}$ 
	to a cell $c\in M$ where $M$ is the true MS-complex, and, moreover,
	this induced map $\widetilde{c}\mapsto c$ is an isomorphism
	of $\widetilde{M}$ and $M$.   An isomorphism of
	two MS-complexes preserves the types of cells and 
	their incidence relations.  We can also require $f$ to
	be an \textit{$\eps$-homeomorphism} for some specified $\eps>0$,
        i.e., the distance of a point and its $f$-image does not exceed $\eps$.
As far as we know this problem has never been rigorously studied.  
Therefore, the main problem of this paper is
{\em to compute a piecewise-linear complex that is 
$\eps$-homeomorphic
to the MS-complex of a smooth Morse-Smale function $h$}.
In short, we seek an exact computation in
the sense of the Exact Geometric Computation (EGC) paradigm~\cite{LiYap04}.
Note that it is unclear whether many fundamental problems from
analysis are exactly computable in the EGC sense.
In particular, the current state-of-the-art in EGC does not (yet) provide a good approach for coping
with degenerate situations, and, in fact, this paradigm needs to be extended to
incorporate degeneracies. Therefore, we have to \textit{assume} that the gradients we start out with
are Morse-Smale systems. However, \textit{generic} gradients are Morse-Smale
systems~\cite{pm-gtdsi-82}, so the presence of degenerate singularities and of saddle-saddle
connections is exceptional.
Note that in restricted contexts, like the class of polynomial functions, absence of degenerate
critical points (the first, and local, Morse-Smale condition) can be detected. However, even
(most) polynomial gradient systems cannot be integrated explicitly, so absence of saddle-saddle connections
(the second, and global, Morse-Smale condition) cannot be detected with current approaches. 
Detecting such connections even in a restricted context remains a challenging open problem.

\ignore{
\myPara{Overview}
\label{sec:overview}
We present a solution for the main problem which produces output such as illustrated in
Figures \ref{fig:output1}\subref{fig:squares}-\subref{fig:circular}.
In particular, our algorithm produces: 

\medskip\noindent
$\bullet$
(arbitrarily small) isolated \textit{certified boxes} each containing a unique
 saddle, source or sink;
\\
$\bullet$
\textit{certified initial and terminal intervals} (on the boundary of saddleboxes),
each of which is guaranteed to contain a unique point corresponding to a
stable or unstable separatrix;
\\
$\bullet$
disjoint \textit{certified funnels (strips)} around each separatrix, each
of which contains exactly one separatrix, and as close to the separatrix as desired.

\medskip\noindent
The construction of the MS-complex of a gradient system $\gradient{h}$
consists of two main steps: constructing disjoint certified boxes for its singular
points, and constructing disjoint certified strips (funnels)
enclosing its separatrices.
Singular points of the gradient system are computed by solving the system of
equations $h_x(x,y) = 0, h_y(x,y) = 0$. This is a special instance of the more
general problem of solving a \dt{generic system} of two equations
$f(x,y) = 0$, $g(x,y) = 0$.  Here, generic
means that the Jacobi matrix at any solution is non-singular
(geometrically it means the two curves $f=0$ and $g=0$
intersect transversally).
Section~\ref{sec:eqSolving} presents a method to compute disjoint isolating boxes for
all solutions of such generic systems of two equations in two unknowns.
In Section~\ref{sec:singularities} these boxes are refined further. Saddle-boxes are
augmented with four disjoint intervals in their boundary, one for each intersection
of the boundary with the stable and unstable separatrices of the enclosed saddle
point. We also show that these intervals can be made arbitrarily small, which is
crucial in the second stage of the algorithm.
Sink- and source-boxes are refined by computing boxes---not necessarily
axis-aligned---around the sink or source on
the boundaries of which the gradient system is transversal (pointing into the sink-box
and out of the source-box). This implies that all integral curves reaching (emerging
from) such a refined sink-box (source-box) lie inside this box beyond (before)
the points of intersection. 

Having constructed isolating boxes for the critical points of the MS-function $h$, 
we isolate the two unstable (stable) separatrices of each saddle point by tracing them 
in forward (backward) direction to the sinks (sources) that are their destination (origin).
Reliable techniques for this tracing step are not readily available.
In the simplest case where $h$ is a polynomial (with rational coefficients), isolating
the critical points can be done using known algebraic algorithms.
But even in this restricted context integral curves are non-algebraic, 
so computing isolating sets for separatrices is a problem.
To do so, we turn the Euler method for tracing integral curves
into a certified algorithm, applying quantitative error bounds on the Euler method from
\cite{jw-dedsa-91} to construct isolating strips for all separatrices.
Section~\ref{sec:sepStrips} describes the second stage of the algorithm, in which
isolating strips (funnels) for the stable and unstable separatrices are constructed.
The boundary curves of funnels enclosing an unstable separatrix are polylines with
initial point on a saddle box and terminal point on a sink box. The gradient vector
field is transversally pointing inward at each point of these polylines. 
The initial points of the polylines are connected by the unstable interval through
which the separatrix leaves its saddle box, and, hence, enters the funnel.
The terminal points of these polylines 
lie on the boundary of the same sink-box. See also Figure~\ref{fig:3funnels}.
Given this direction of the gradient
system on the boundary of the funnel, the unstable separatrix enters the sink-box and
tends to the enclosed sink, which is its destination. 
Although the width of the funnel may grow exponentially in the distance from the
saddle-box, this growth is controlled in a known way. We exploit the computable (although very
conservative) upper bound on this growth rate to obtain funnels that isolate
separatrices from each other, and, hence, form a good approximation of the
Morse-Smale complex together with the source- and sink-boxes.
These upper bounds are also used to prove that the algorithm, which may need several
subdivision steps, terminates.
We have implemented this algorithm, using Interval Arithmetic. 
Section~\ref{sec:implementation} presents sample output of our algorithm.


}

\myPara{Our contribution} We present an algorithm for computing such a
certified approximation of the MS-complex of a given smooth
Morse-Smale function on the plane, 
as illustrated in Figures \ref{fig:output1}\subref{fig:squares}-\subref{fig:circular}.
In particular, the algorithm produces:

\begin{itemize}
\item (arbitrarily small) isolated  \textit{certified boxes} each containing a unique
  saddle, source or sink;

\item  \textit{certified initial and terminal intervals} (on the boundary of saddleboxes),
each of which is guaranteed to contain a unique point corresponding to a
stable or unstable separatrix;

\item disjoint \textit{certified funnels (strips)} around each separatrix, each
  of which contains exactly one separatrix and can be as close to
  the separatrix as desired.
\end{itemize}

\medskip\noindent
\textbf{Note.} The current version is an extensive elaboration of our previous
paper~\cite{cvy12} by incorporating all the theoretical results necessary to
establish our method of certified Morse-Smale complex computation. The aim
in \cite{cvy12} was more on providing an water-tight algorithm; however,
the scope of showing all the theoretical details was
limited. We complete that analysis part in the current extensive version. 
In Section~\ref{sec:eqSolving}, under certified critical-box computation, we provide the details of the relevant lemmas
which were missing in~\cite{cvy12}. In Section~\ref{sec:singularities}, we establish rigorous
theoretical foundations for refining the saddle-, source- and sink
boxes that are used in computing the initial and terminating intervals of the stable and
unstable separatrices. All the theoretical results in this section are
new additions to the current version.
The final method section (Section~\ref{sec:sepStrips}) for the computation of disjoint
certified funnels (strips) is now restructured into three subsections,
each completes the relevant theoretical and algorithmic analysis.

\myPara{Overview} Section~\ref{sec:prelim} starts with a brief review of Morse-Smale systems, their
singular points and their invariant manifolds. We also recall the basics of Interval
Arithmetic, the computational context which provides us with the necessary
certified methods.
The construction of the Morse-Smale complex of a gradient system $\gradient{h}$
consists of two main steps: constructing disjoint certified boxes for its singular
points, and constructing disjoint certified strips (funnels) enclosing its separatrices.
Singular points of the gradient system are computed by solving the system of
equations $h_x(x,y) = 0, h_y(x,y) = 0$. This is a special instance of the more
general problem of solving a generic system of two equations $f(x,y) = 0$, $g(x,y) =
0$. Generic means that the Jacobi matrix at any solution is non-singular, 
or, geometrically speaking, that the two curves $f=0$ and $g=0$ intersect transversally.
In our context, this genericity condition reduces to the fact that at singular points
of the gradient $\gradient{h}$ the Hessian matrix is non-singular.
Section~\ref{sec:eqSolving} presents a method to compute disjoint isolating boxes for
all solutions of such generic systems of two equations in two unknowns. This
method yields disjoint isolating boxes for the singular points of the gradient
system.
In Section~\ref{sec:singularities} these boxes are refined further. Saddle-boxes are
augmented with four disjoint intervals in their boundary, one for each intersection
of the boundary with the stable and unstable separatrices of the enclosed saddle
point. We also show that these intervals can be made arbitrarily small, which is
crucial in the second stage of the algorithm.
Sink- and source-boxes are refined by computing boxes---not necessarily
axis-aligned---around the sink or source on
the boundaries of which the gradient system is transversal (pointing into the sink-box
and out of the source-box). This implies that all integral curves reaching (emerging
from) such a refined sink-box (source-box) lie inside this box beyond (before)
the point of intersection. 

Section~\ref{sec:sepStrips} describes the second stage of the algorithm, in which
isolating strips (funnels) for the stable and unstable separatrices are constructed.
The boundary curves of funnels enclosing an unstable separatrix are polylines with
initial point on a saddle box and terminal point on a sink box. The gradient vector
field is transversally pointing inward at each point of these polylines. 
The initial points of the polylines are connected by the unstable interval through
which the separatrix leaves its saddle box, and, hence, enters the funnel.
The terminal points of these polylines 
lie on the boundary of the same sink-box. See also Figure~\ref{fig:3funnels}. Given this direction of the gradient
system on the boundary of the funnel, the unstable separatrix enters the sink-box and
tends to the enclosed sink, which is its $\omega$-limit. 
Although the width of the funnel may grow exponentially in the distance from the
saddle-box, this growth is controlled. We exploit the computable (although very
conservative) upper bound on this growth rate to obtain funnels that isolate
separatrices from each other, and, hence, form a good approximation of the
Morse-Smale complex together with the source- and sink-boxes.
These upper bounds are also used to prove that the algorithm, which may need several
subdivision steps, terminates.

We have implemented this algorithm, using Interval Arithmetic. 
Section~\ref{sec:implementation} presents sample output of our algorithm.
\ref{sec:math} contains guaranteed error bounds for the Euler method for
solving ordinary differential equations, and \ref{sec:smallSepIntervals}
sketches a method for narrowing the separatrix intervals in the boundaries of the
saddle boxes.


\myPara{Related Work}
Milnor~\cite{Milnor68} provides a basic set-up for Morse theory.
The survey paper~\cite{Biasotti08}, focusing on
geometrical-topological  properties of real functions, gives an
excellent overview of recent works on MS-complexes.
Originally, Morse theory was developed for smooth functions on smooth manifolds.
Banchoff~\cite{Banchoff70}  introduced the equivalent definition of
critical points on polyhedral surfaces.
Many of the recent developments on MS-complexes are based on this definition.
A completely different discrete version of Morse theory is provided by
Forman~\cite{Forman98}.

\medskip\noindent
\textbf{Different methods for computation.}
In the literature there are two different method for computing the
Morse-Smale complexes: (a) boundary based approaches and (b) region
based approaches. Boundary based methods compute boundaries of the
cells of the MS-complex, i.e., the integral curves connecting
a saddle to a source, or a saddle to a sink~\cite{Takahashi95, Bajaj98,EdelsHarZomo03}.
On the other hand, watershed algorithms for image segmentation are
considered as region based approaches~\cite{Meyer94}.  
Edelsbrunner et.al~\cite{EdelsHarZomo03} computes the Morse-Smale complex of
piecewise linear manifolds using a paradigm called
\textit{Simulation of Differentiability}. In higher dimensions they
give an algorithm for computing Morse Smale complexes of piecewise linear
3-manifolds~\cite{EdelsHarNatPas03}.

Morse-Smale complexes have also been applied
in shape analysis and data simplification. Computing MS-complexes is
strongly related to vector field visualization~\cite{HelmanHess91}. In
a similar  context, designing vector fields on surfaces has been
studied for many graphics applications~\cite{ZhangTurk06}.
Cazals et.al.~\cite{CazalsChazals03} applied discrete Morse theory to
molecular shape analysis.

This paper contributes to the
emerging area of Exact Numerical Algorithms
for geometric problems \cite{yap:praise:09}. 
Recent algorithms of this genre
(e.g., \cite{pv-imis-07,lin-yap:cxy:11})
are numerical subdivision algorithms
based on interval function evaluation and sign evaluation.
\section{Preliminaries}
\label{sec:prelim}
In this section we briefly review the necessary mathematical
background on Morse functions, Morse-Smale systems, their
singular points and their invariant manifolds. We also recall the
basics of our computational model and Interval
Arithmetic which are necessary for our
certified computation algorithm.

\subsection{Mathematical Background}
\label{subsec:mathbgr}
\medskip\noindent\textbf{Morse functions.}~
A function    $h:\mathcal{D} \subset   \mathbb{R}^2 \rightarrow
\mathbb{R}$ is  called a \textit{Morse  function} if all  its critical
points     are    non-degenerate.      The     \textit{Morse    lemma}
~\cite{Milnor68} states  that  near a  non-degenerate  critical
point  $a$ it is  possible to  choose local  co-ordinates $x,y$ in
which $h$ is expressed  as  $h(x,y)=h(a)\pm x^2\pm  y^2$.  
Existence  of
these local  co-ordinates implies that  non-degenerate critical points
are isolated. 
The number of minus  signs is called the \textit{index} $i_h(a)$ of
$h$ at  $a$.  Thus a  two variable Morse  function has three  types of
non-degenerate critical  points: minima  (index 0), saddles  (index 1)
and maxima (index 2).

\medskip\noindent\textbf{Integral curves.}
An integral curve $\bx:I\subset\mathbb{R}\rightarrow{\mathcal{D}}$ passing
through a point $p_0$ on $\mathcal{D}$ is a unique maximal curve satisfying:
$ 
   \dot{\bx}(t)=\nabla{h(\bx(t))}, ~~~ \bx(0) = p_0,
$ 
for all $t$ in the \textit{interval} $I$. Integral curves corresponding to the gradient
vector field of a smooth function
$h:\mathcal{D}\rightarrow \mathbb{R}$ have the following properties:

\medskip\noindent
1.~Two integral curves are either disjoint or same.
\\
2.~The integral curves cover all the points of $\mathcal{D}$.
\\
3.~The integral curves of the gradient vector field of $h$ form a
  partition of $\mathcal{D}$.
\\
4.~The integral curve $\bx(t)$ through a critical point $p_0$ of $h$ is
  the constant curve $\bx(t)=p_0$.
\\
5.~The integral curve $\bx(t)$ through a regular point $p$ of $h$ is
  injective, and if $\displaystyle\lim_{t\rightarrow \infty}\,\bx(t)$ or
  $\displaystyle\lim_{t\rightarrow -\infty}\,x(t)$ exists, it is a critical
  point of $h$. This implies integral curves corresponding to gradient
  vector field are never closed curves.
\\
6.~The function $h$ is strictly increasing along the integral curve
  of a regular point of $h$.
\\
7.~Integral curves are perpendicular to the regular level sets of $h$.

\medskip\noindent\textbf{Stable and unstable manifolds.} 
Consider the integral curve
$x(t)$ passing    through     a    point     $p$. If the    limit
$\displaystyle\lim_{t\rightarrow   \infty}\,\bx(t)$ exists, it is   called   the
$\omega$-limit  of  $p$ and  is  denoted  by $\omega(p)$.   Similarly,
$\displaystyle\lim_{t\rightarrow   -\infty}\,\bx(t)$   is   called   the
$\alpha$-limit  of  $p$ and  is  denoted  by  $\alpha(p)$ -- again
provided this limit exists.
The  stable manifold  of   a  singular  point   $p$  is  the   set
$W^s(p)=\{q\in \mathcal{D} \mid \omega(q)=p\}$.   Similarly, the  unstable
manifold of  a
singular     point      $p$     is     the      set     $W^u(p)=\{q\in
\mathcal{D} \mid \alpha(q)=p\}$.   Here  we  note  that both  $W^s(p)$  and
$W^u(p)$ contain the singular point $p$ itself~\cite{hirsch74}. 

Now, the stable and unstable manifolds of a saddle point are
1-dimensional manifolds. A stable manifold of a saddle point consists of
two integral curves converging to the saddle point. Each of these
integral curves (not including the saddle point) are called the stable
separatrices of the saddle point. Similarly, an unstable manifold of a saddle point consists of
two integral curves diverging from the saddle point and each of these
integral curves (not including the saddle point) are called the unstable
separatrices of the saddle point.

\medskip\noindent\textbf{The Morse-Smale complex.}
A Morse function on $\mathcal{D}$ is called a Morse-Smale (MS) function if
its stable and unstable separatrices are disjoint. In
particular, a Morse-Smale function on a two-dimensional domain has
no integral curve connecting two saddle points, since in that case a stable
separatrix of one of the saddle points would coincide with an unstable separatrix of the
other saddle point. 
The MS-complex associated with a MS-function
$h$ on $\mathcal{D}$ is the subdivision of $\mathcal{D}$ formed by the
connected components of the intersections $W^s(p)\cap W^u(q)$, where
$p$, $q$ range over all singular points of $h$.
\begin{figure}
  \centering
  \psfrag{a}{\scriptsize{Sink}}
  \psfrag{b}{\scriptsize{Saddle}}
  \psfrag{c}{\scriptsize{Source}}
  \psfrag{d}{\tiny{Stable Separatrix}}
  \psfrag{e}{\tiny{Unstable Separatrix}}
  \includegraphics[scale=0.40]{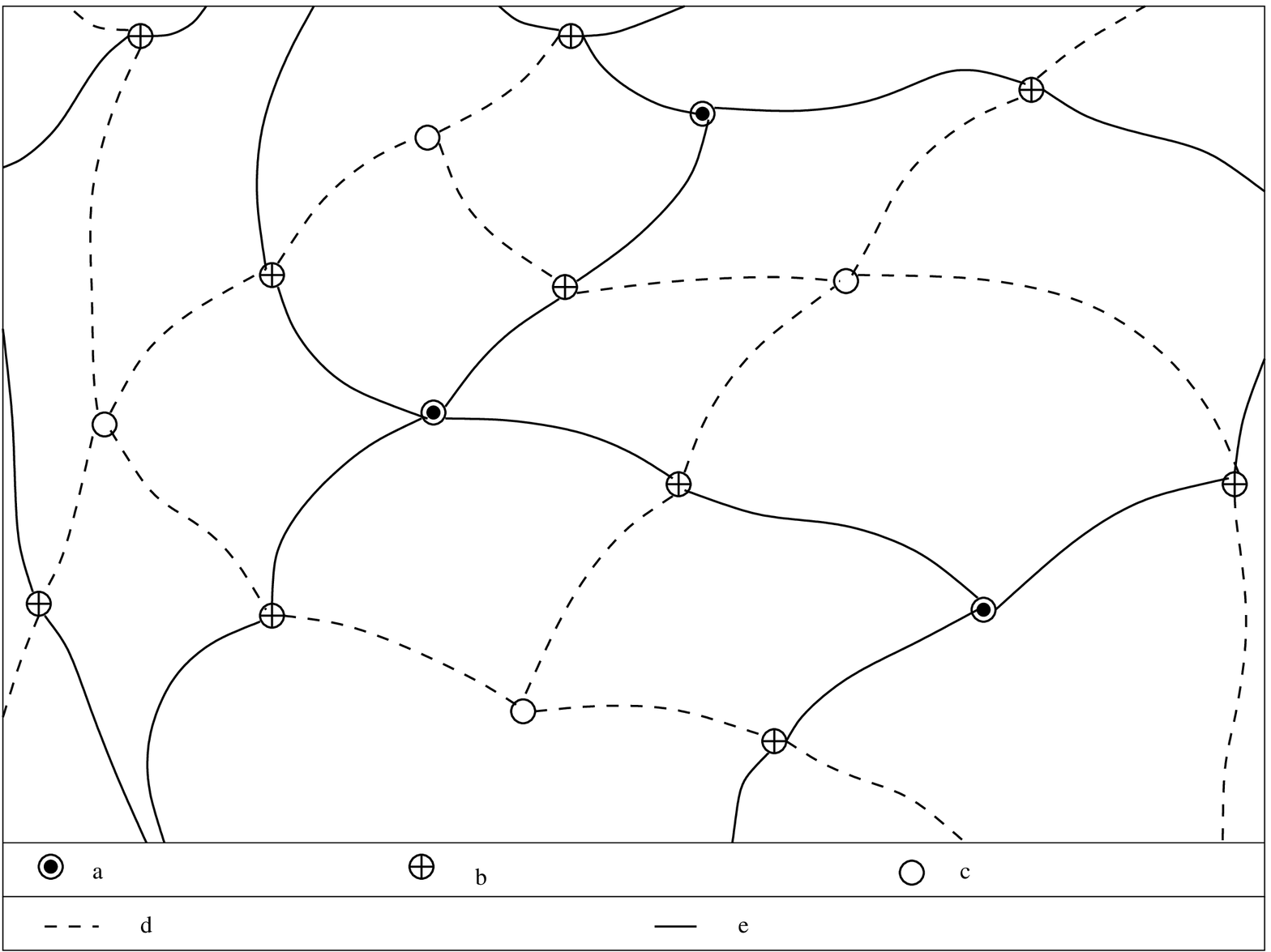}
 \vspace*{-2ex}
  \caption{Morse-Smale complex}
  \label{fig:mscomplex}
\end{figure}
If $\mathcal{D} = \mathbb{R}^2$, then, according to the Quadrangle Lemma~\cite{EdelsHarZomo03}, 
each region of the MS-complex is a quadrangle with vertices of index
$0, 1, 2, 1$, in this order around the region.  

\ignore{
\begin{remark}
  \label{rem:transversality}  
  In this paper, the domain $\mathcal{D}$ is a finite union of axis-aligned
  boxes. Furthermore, we (have to) assume that all stable and unstable separatrices of
  the saddle points are transversal to the boundary.
  Computationally this means that any sufficiently close approximation of these
  separatrices is transversal to the boundary as well.
\end{remark}
}

\medskip\noindent\textbf{Stability of equilibrium points.}
We note that a gradient vector field of a MS-function
$h:\mathcal{D} \rightarrow \mathbb{R}$ can have three kinds of
equilibria or \textit{singular points}, namely, sinks
(corresponding to maxima of $h$), saddles (saddles of $h$) and sources
(corresponding to minima of $h$). These singular points can be
distinguished based on the local behavior of the integral curves around
those points. Locally, a sink has a neighborhood, which is
a stable 2-manifold. Similarly, locally
a source has a neighborhood, which is an unstable 2-manifold. Locally, a
saddle has a stable 1-manifold and an unstable 1-manifold crossing each other
at the saddle point. A sink is called a stable equilibrium
point, where as a source or a saddle is called unstable equilibrium point.
We note that, a source corresponding to a MS-function $h$ is a sink
corresponding to the function $-h$.

\ignore{
\medskip\noindent\textbf{Interval Arithmetic (IA).}
Interval arithmetic is used to obtain guaranteed numerical results by 
computing intervals that are guaranteed to contain the numerical value
to be computed. See, e.g., \cite{m-p-96}. 
A \textit{range function} $\intbox{F}$ for a
function $F:\mathbb{R}^m \rightarrow \mathbb{R}^n$ computes for each $m$-dimensional
interval $I$ (i.e., an $m$-box) an $n$-dimensional interval
$\intbox{F}(I)$, such that $F(I) \subset \intbox{F}(I)$. A range function is
said to be \textit{convergent} if the diameter of the output interval
converges to $0$ when the diameter of the input interval shrinks to $0$.
Convergent range functions exist for the basic operators and
functions, so all range functions are assumed to be convergent.
Moreover, we assume that the sign of functions can be evaluated exactly at
\textit{dyadic numbers}, i.e., numbers of the form $n\,2^m$, with $m,n\in\Z$.
\textit{Dyadic boxes} have vertices whose coordinates are dyadic
numbers. Note that the set of dyadic boxes is invariant under
subdivision (bisection), an important property for our method.
}

\subsection{Computational Model}
\label{subsec:model}
Our computational model
has two simple foundations:
(1) BigFloat packages
and (2) interval arithmetic (IA) \cite{m-p-96}.
Together, these are able to deliver efficient and guaranteed results in 
the implementation of our algorithms.
A BigFloat package is a software implementation of exact ring
($+,-,\times$) operations, division by $2$, and exact comparisons,
over the set $\FF=\{m2^n: m,n\in\ZZ\}$ of \dt{dyadic numbers}.
In practice, we can use IEEE machine arithmetic as a filter for 
BigFloat computation to speed up many computation.
\textit{Range functions} form a basic tool of IA:
given any function $F:\mathbb{R}^m \rightarrow \mathbb{R}^n$,
a \dt{range function} $\intbox{F}$ for $F$ computes for each $m$-~dimensional
interval $I$ (i.e., an $m$-box) an $n$-dimensional interval
$\intbox{F}(I)$, such that $F(I) \subset \intbox{F}(I)$. A range function is
said to be \textit{convergent} if the diameter of the output interval
converges to $0$ when the diameter of the input interval shrinks to $0$.
Convergent range functions exist for the basic operators and
functions, so all range functions are assumed to be convergent.
Moreover, we assume that the sign of functions can be evaluated exactly at
dyadic numbers.  All our boxes are
\textit{dyadic boxes}, meaning that their corners have dyadic coordinates.

\medskip\noindent\textbf{Interval implicit function theorem.}
To introduce a useful tool from IA, we recall some notation for
\textit{interval matrices}.
An $n\times n$ interval matrix $\intbox M$ is defined as

$$\intbox{M}=\{M|M_{ij}\in\intbox M_{ij}, i,j\in \{1\ldots n\}\}$$
Also note that we write 
 $$0\notin \det \intbox{M}$$
if there exists no matrix $M\in\intbox{M}$ such that
$\det{M}=0$ where $\det$ represents the determinant of the corresponding matrix.

\noindent





If $I=I_x\times I_y$ is a 2D-interval (box) in $\mathbb{R}^2$, 
the \textit{interval Jacobian determinant} $\intbox \frac{\partial(f,g)}{\partial(x,y)}$ is the
$2\times 2$ interval determinant given by
\begin{displaymath}
  \intbox \frac{\partial(f,g)}{\partial(x,y)} (I)=
  \begin{vmatrix}
    \intbox\dfrac{\partial f}{\partial x}(I) &  \intbox\dfrac{\partial f}{\partial y}(I)\\[1.6ex] 
    \intbox\dfrac{\partial g}{\partial x}(I) &  \intbox\dfrac{\partial g}{\partial y}(I)
  \end{vmatrix}
\end{displaymath}



\begin{proposition}{(\textbf{Interval Implicit Function Theorem, Snyder~\cite{s-gmcgc-92,s-iacg-92}})}
  \label{thm:intervalIFT}
  Let $F:\mathbb{R}\times\mathbb{R}\rightarrow\mathbb{R}$ be a
  $C^{1}$-map with components $f$ and $g$. If $I\subset
  \mathbb{R}\times \mathbb{R}$ is a box for which 
   $ 0 \notin \intbox \frac{\partial(f,g)}{\partial(x,y)} (I)$, 
  then the system $f(x,y)=0, ~~ g(x,y)=0$ has at most one solution in $I$.  
\end{proposition}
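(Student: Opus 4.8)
The plan is to prove this by contradiction, invoking the classical mean value theorem in the form of the parametrized mean value inequality, and reducing the non-vanishing of the interval Jacobian to the injectivity of the map $F=(f,g)$ on the box $I$.

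Suppose, for contradiction, that the system has two distinct solutions $p=(x_0,y_0)$ and $q=(x_1,y_1)$ in $I$, so that $F(p)=F(q)=0$. Since $I$ is a box (hence convex), the segment $\gamma(t)=(1-t)p+tq$, $t\in[0,1]$, lies entirely in $I$. First I would apply the one-dimensional mean value theorem to each component along this segment: for $\varphi\in\{f,g\}$ the function $t\mapsto\varphi(\gamma(t))$ is $C^1$ on $[0,1]$ and vanishes at $t=0$ and $t=1$, so its derivative $\nabla\varphi(\gamma(t))\cdot(q-p)$ vanishes at some interior point. More carefully — to get a common handle — I would use the integral form: $0=\varphi(q)-\varphi(p)=\int_0^1\nabla\varphi(\gamma(t))\cdot(q-p)\,dt$. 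Writing $q-p=(h,k)\neq(0,0)$, this gives
\begin{displaymath}
  \begin{pmatrix} \overline{f_x} & \overline{f_y}\\ \overline{g_x} & \overline{g_y}\end{pmatrix}
  \begin{pmatrix} h\\ k\end{pmatrix} = \begin{pmatrix} 0\\ 0\end{pmatrix},
\end{displaymath}
where $\overline{f_x}=\int_0^1 f_x(\gamma(t))\,dt$ and similarly for the other three averaged partials. Each averaged partial, being an average of values of the corresponding partial derivative at points of $I$, lies in the interval $\intbox{\partial f/\partial x}(I)$ (respectively the analogous interval), since a range function contains the true range and hence contains any convex combination / average of values in that range.

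Consequently the $2\times 2$ matrix $M$ with entries $\overline{f_x},\overline{f_y},\overline{g_x},\overline{g_y}$ is a member of the interval matrix $\intbox{M}$ appearing in $\intbox{\partial(f,g)/\partial(x,y)}(I)$. But $M$ annihilates the nonzero vector $(h,k)$, so $\det M=0$. This exhibits a matrix in the interval matrix with vanishing determinant, i.e.\ $0\in\det\intbox{M}$, contradicting the hypothesis $0\notin\intbox{\partial(f,g)/\partial(x,y)}(I)$. Hence the system has at most one solution in $I$.

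The only real subtlety — the step I would be most careful about — is the claim that the averaged partial derivatives lie in the stated interval enclosures. This needs the (standard) facts that a range function encloses the actual range $\varphi(I)$, that $I$ convex makes $\gamma(t)\in I$ for all $t$, and that an integral average of a continuous function over $[0,1]$ lies between its min and max, hence in any interval containing its range. No exotic machinery is required; the argument is essentially the interval-arithmetic refinement of the textbook proof that a $C^1$ map with nonsingular derivative on a convex set is locally injective. I would state the integral-form mean value step as a displayed equation and then note that no further regularity beyond $C^1$ is used.
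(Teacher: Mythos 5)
Your proof is correct, and it is essentially the standard argument — the paper itself does not reprove this proposition but cites it to Snyder, whose proof uses the same mean-value-matrix device you employ (integral form of the mean value theorem along the segment joining two hypothetical solutions, then the observation that the averaged Jacobian lies in the interval matrix and has a nontrivial kernel). The one step worth double-checking, as you yourself flag, is that the averaged partial $\int_0^1 \varphi_\xi(\gamma(t))\,dt$ lies in $\intbox{\varphi_\xi}(I)$; this holds because $\gamma(t)\in I$ (box is convex), so $\varphi_\xi(\gamma(t))\in\varphi_\xi(I)\subset\intbox{\varphi_\xi}(I)$, and an integral average of values confined to an interval remains in that interval.
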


\begin{remark}
\label{rem:transversality}  
\medskip
  In this paper, the domain $\D$ of $h$ is a finite union of
  axis-aligned dyadic boxes.
  Furthermore, we (have to) assume that all stable and unstable separatrices of
  the saddle points are transversal to the boundary.
  Computationally this means that any sufficiently close approximation of these
  separatrices is transversal to the boundary as well.
\end{remark}
\section{Isolating boxes for singularities of gradient  fields}
\label{sec:eqSolving}
As a first step towards the construction of the Morse-Smale complex of
$h$ we construct disjoint isolating boxes for the singular points of
$\gradient{h}$. To this end, we first show how to compute isolating
boxes for the solutions of a generic system of two equations in two unknowns, which
are confined to a bounded domain in the plane. This domain is a finite union
of dyadic boxes.
Applying this general method to the case in which the two equations are
defined by the components of the gradient vector field $\gradient{h}$ we
obtain isolating boxes for the singularities of this gradient field.

%
%
\subsection{Certified solutions of systems of equations}
\label{sec:genericSystems}
We consider a system of equations 
\begin{equation}
  \label{eq:systemEq}
  f(x,y) = 0, \quad g(x,y) = 0,
\end{equation}
where $f$ and $g$ are $C^1$-functions defined on a bounded axisparallel box
$\D \subset \R^2$ with \textit{dyadic vertices}.
Furthermore, we assume that the system has only
non-degenerate solutions, i.e., the Jacobian determinant is non-zero at
a solution. In other words, 
\begin{equation}
\label{eq:nonzerojacob}
  \left.\frac{\partial(f,g)}{\partial(x,y)}\right|_{(x_0,y_0)} \neq 0
\end{equation}
for $(x_0,y_0)$ satisfying (\ref{eq:systemEq}). Geometrically, this means
that the curves given by $f(x,y) = 0$ and $g(x,y) = 0$ are regular near
a point of intersection $(x_0,y_0)$ and the intersection is transversal.
We will denote these curves by $Z_f$ and $Z_g$, respectively.
Note that this condition is satisfied by Morse-Smale systems, since
in that case $f=h_x$ and $g=h_y$, so the Jacobian determinant is
precisely the Hessian determinant.

Since the domain $\D$ of $f$ and $g$ is compact and we assume (\ref{eq:nonzerojacob}), the system
(\ref{eq:systemEq}) has finitely many solutions in $\D$.
Our goal is to construct a collection of axis-aligned boxes $\B_1,\ldots,\B_m$ and 
$\B'_1,\ldots,\B'_m$ such that
(i) box $\B_i$ is concentric with and strictly contained in $\B'_i$,
(ii) the boxes $\B'_i$ are disjoint,
(iii) each solution of (\ref{eq:systemEq}) is contained in one of the boxes $\B_i$, and
(iv) each box $\B'_i$ contains exactly one solution (contained inside the enclosed
box $\B_i$).
The box pair $(\B_i,\B'_i)$ is \textit{certified}: $\B_i$ contains a solution,
and $\B_i'$ provides positive clearance to other solutions.
In fact, the sequence of boxes will satisfy the following stronger
conditions; See also Figure~\ref{fig:certifiedBox}.
\begin{figure}[h]
  \centering
\psfrag{ci}{$\B_i$}
  \psfrag{bi}{$\B_i^{'}$}
  \psfrag{Zf}{$Z_f$}
  \psfrag{Zg}{$Z_g$}
  \includegraphics[width=0.3\textwidth]{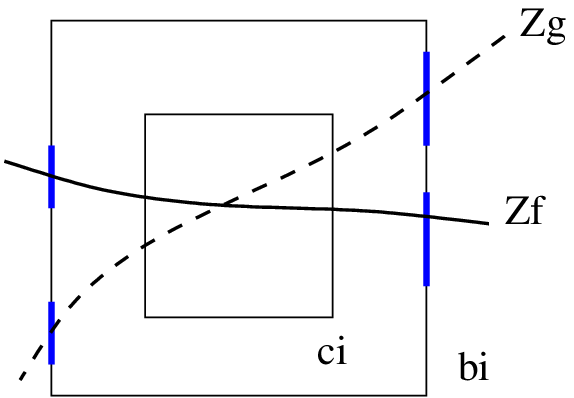}
  \caption{A certified box-pair $(\B_i, \B_i^{'})$, with isolating intervals in its
    boundary for $Z_f\cap\partial\B_i^{'}$ and $Z_g\cap\partial\B_i^{'}$.
  The curves $Z_f$ and $Z_g$ intersect inside $\B_i^{'}$ iff these intervals
  are interleaved.}
  \label{fig:certifiedBox} 
\end{figure}
\begin{enumerate}
\item
  The curves $Z_f$ and $Z_g$ each intersect the boundary of $\B_i^{'}$
  transversally in two points.
\item 
  There are disjoint intervals $\I^0_i(f)$, $\I^0_i(g)$, $\I^1_i(f)$ and
  $\I^1_i(g)$ in the boundary of $\B_i^{'}$ (in this order), where the first
  and the third interval each contain one point of intersection of $Z_f$ and
  $\partial\B_i^{'}$, and the second and fourth interval each contain one point of
  intersection of $Z_g$ and $\partial \B_i^{'}$.
\item 
  The (interval) Jacobian determinant of $f$ and $g$ does not vanish on $\B_i^{'}$,
  i.e.,
  \begin{equation*}
    0 \not\in \Box \frac{\partial(f,g)}{\partial(x,y)} (\B_i^{'}).
  \end{equation*}
\end{enumerate}

%
%
\subsection{Construction of certified box pairs}
\label{sec:const-box-pair}
We first subdivide the domain $\D$ into equal-sized boxes $\I$ (called grid-boxes),
until all boxes satisfy certain conditions to be introduced now.
For a (square, axis-aligned) box $\I$, let $N_{\varrho}(\I)$ be 
the box obtained by multiplying box $\I$ from its center by a factor of $1+\varrho$,
where $\tfrac{1}{2}\leq \varrho \leq 1$. 
We also denote $N_1(\I)$ by $N(\I)$; this is the box formed by the
union of $\I$ and its eight neighbor grid-boxes. 
We shall call $N(\I)$ the surrounding box of $\I$.
The algorithm subdivides $\D$ until all grid-boxes $\I$ satisfy 
$C_0(\I) \vee \bigl(\,C_1(\I) \wedge C_2(\I)\,\bigr)$, where
the clauses $C_i(\I)$, $i=0,1,2$, are the following predicates:
\begin{align*}
  C_0(\I):  \quad & 0 \not\in \Box f(\I) \vee 0 \not \in \Box g(\I) 
  \\[1.2ex]
  C_1(\I):  \quad & 0 \not\in\Box \frac{\partial(f,g)}{\partial(x,y)}(N(\I))
  \\[1.2ex]
C_2(\I):  \quad & C_2(\I,f) \wedge C_2(\I,g),
  \\[1.2ex]
  \qquad \text{ where } & \\
  C_2(\I,f) &= \ip
            {\intbox\frac{\gradient{f}}{\norm{\gradient{f}}}(N(\I))}%
            {\intbox\frac{\gradient{f}}{\norm{\gradient{f}}}(N(\I))}
            \geq \cos\frac{\pi}{30}.
\end{align*}
If $C_0(\I)$ holds, box $\I$ does not contain a solution, so it is discarded.
The second predicate guarantees that $N(\I)$ contains at most one
solution. This is a consequence of Interval Implicit Function Theorem
\cite{sny92,sny-book-92}.
Condition $C_2(\I)$ is a \textit{small angle variation condition}, guaranteeing that
the variation of the unit normals of the curves $Z_f \cap N(\I)$ and $Z_g \cap N(\I)$
do not vary too much, so these curves are regular, and even `nearly linear'
(the unit normal of $Z_f$ is the normalized gradient of $f$). Here $\ip{\cdot}{\cdot}$ 
denotes the interval version of the standard inner product on $\R^2$.
The lower bound for the angle variation of $\gradient{f}$ is generated
by the proof of Lemma~\ref{lemma:isolatingIntervals}.
%

\begin{remark}
Condition $C_1(\I)$ implies that there is a computable
positive lower bound $\alpha(\I)$
on the angle between $\nabla{f}(p)$ and $\nabla{g}(q)$ where $p,q$ range
over the surrounding box of $\I$.  More precisely, to compute $\alpha(\I)$, we
first compute a lower bound $L$ on the quantity
$\frac{\partial(f,g)}{\partial(x,y)} \frac{1}{\|\nabla f\|\cdot\|\nabla g\|}$.
This $L$ may be obtained by an interval evaluation of this quantity at $N(\I)$;
note that $L>0$ iff condition $C_1(\I)$ holds.
We define $\alpha(\I)$ as $\arcsin(L)$.
\end{remark}


Our algorithm will construct disjoint certified boxes surrounding
a box $\I$. As observed earlier, the surrounding boxes $N(\I)$ and $N(\J)$ of
disjoint boxes $\I$ and $\J$ may intersect. 
Since our algorithm will construct disjoint certified boxes surrounding
a box $\I$, its surrounding box should be smaller than the box $N(\I)$. 
To achieve this, note $N_\varrho(\I)$ be the box obtained by multiplying
box $\I$ from its center by a factor of $1+\varrho$, where $\tfrac{1}{2}
\leq \varrho \leq 1$. In particular, $N(\I) = N_1(\I)$. If $N(\J) \cap \I = \emptyset$, 
then $N_{1/2}(\I)$ and $N_{1/2}(\J)$ have disjoint interiors. 
This is a key observation with regard to the correctness of our
algorithm.
\begin{lem}
  \label{lemma:isolatingIntervals}
  Let $\I$ be a box such that conditions $\neg C_0(\I)$, $C_1(\I)$,
  and $C_2(\I)$ hold. Let $d$ be the length of its
  edges, and let $\frac{1}{2} \leq \varrho \leq 1$.
  \\
  1.~If $Z_f$ intersects $\I$, it intersects the boundary of $N_\varrho(\I)$
  transversally at exactly two points. At a point of intersection of
  $Z_f$ and an edge $e$ of $\partial N_\varrho(\I)$ the angle between $Z_f$ and
  $e$ is at least $\frac{1}{15}\pi$.
  \\
  2.~If $\I$ contains a point of intersection of $Z_f$ and $Z_g$, then
  the points of intersection of $Z_f$ and $\partial N_\varrho(\I)$ are at
  distance at least   $2\varrho d\tan\tfrac{1}{2}\alpha(\I)$
  from the points of intersection of $Z_g$ and $\partial N_\varrho(\I)$. 
  On $\partial N_\varrho(\I)$, the points of intersection with $Z_f$ and
  with $Z_g$ are alternating.
\end{lem}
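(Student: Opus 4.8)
My plan is to exploit the small-angle-variation condition $C_2(\I)$ to control the geometry of $Z_f$ inside the surrounding box, and then to use the non-vanishing interval Jacobian ($C_1(\I)$) to separate $Z_f$ from $Z_g$ quantitatively. The key geometric idea for part~1 is that $C_2(\I,f)$ forces the unit normal $\nu_f = \gradient f/\norm{\gradient f}$ to vary by an angle less than $\tfrac{\pi}{30}$ (since $\ip{\nu_f(p)}{\nu_f(q)}\geq\cos\tfrac{\pi}{30}$ for all $p,q\in N(\I)$ gives $\angle(\nu_f(p),\nu_f(q))\leq \tfrac{\pi}{30}$). Fix a reference direction $\nu_0$, say $\nu_f$ at the center of $\I$; then every tangent line of $Z_f$ in $N(\I)$ makes an angle at most $\tfrac{\pi}{30}$ with the fixed line $\ell^\perp$ orthogonal to $\nu_0$. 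I would first argue that $Z_f\cap N_\varrho(\I)$ is a graph over the $\nu_0^\perp$-direction: if it were not, some tangent would be parallel to $\nu_0$, contradicting the angle bound. Being such a graph with slope bounded away from the vertical, $Z_f$ can cross $\partial N_\varrho(\I)$ in at most two points; and it crosses in \emph{exactly} two once we know $Z_f$ meets $\I\subset N_\varrho(\I)$ and does not close up or terminate inside $N_\varrho(\I)$ (it cannot terminate since $\gradient f\neq 0$ on $N(\I)$, which follows from $C_1(\I)$, and it is not closed since a closed regular curve must have tangents in all directions). For the angle bound at an edge $e$ of $\partial N_\varrho(\I)$: each edge is either horizontal or vertical, so its direction differs from $\nu_0^\perp$ by some fixed amount; the worst case is that $\nu_0^\perp$ is itself as much as $\tfrac{\pi}{30}$ from being axis-aligned (this is exactly why the constant $\tfrac{\pi}{30}$ was chosen), and the tangent of $Z_f$ is a further $\tfrac{\pi}{30}$ off, but since $Z_f$ is nearly parallel to $\nu_0^\perp$ and $\nu_0^\perp$ is nearly axis-parallel, the relevant crossing is with the \emph{perpendicular} family of edges, giving an angle of at least $\tfrac{\pi}{2}-\tfrac{\pi}{30}-\tfrac{\pi}{30}$, which I must massage to get the claimed $\tfrac{1}{15}\pi = \tfrac{2\pi}{30}$. (I expect the honest bookkeeping here to be: worst angle $\geq \tfrac{\pi}{2} - 2\cdot\tfrac{\pi}{30} - (\text{how far $\nu_0^\perp$ can be from axis-parallel})$; I would need the grid-box orientation to pin this down, and this is the first place to be careful.)

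For part~2, suppose $\I$ contains a transversal intersection point $z_0\in Z_f\cap Z_g$. By $C_1(\I)$ the angle between $\gradient f(p)$ and $\gradient g(q)$ is at least $\alpha(\I)=\arcsin(L)>0$ for all $p,q\in N(\I)$, hence the same lower bound holds between the tangent directions of $Z_f$ and of $Z_g$ throughout $N_\varrho(\I)$. Combined with part~1, both $Z_f$ and $Z_g$ are near-linear graphs through the common point $z_0\in\I$, with directions differing by at least $\alpha(\I)$. Now I want a lower bound on the distance between the point where $Z_f$ exits $\partial N_\varrho(\I)$ and the point where $Z_g$ exits. The cleanest model: replace $Z_f$ and $Z_g$ by their tangent lines at $z_0$ — two lines through $z_0$ with included angle $\geq\alpha(\I)$ — and observe that the true curves deviate from these tangent lines by a controlled amount (again via $C_2$, since the normal turns by at most $\tfrac{\pi}{30}$). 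The two tangent lines, starting from a point $z_0$ inside $\I$ and traveling out to $\partial N_\varrho(\I)$, which is at distance at least $\varrho d$ from $\partial\I$ on each side, subtend a chord of length at least (roughly) $2\varrho d\tan\tfrac{\alpha(\I)}{2}$ — this is the elementary "two rays from a vertex" estimate, where $2\varrho d$ is a lower bound on how far along the bisector direction one must travel, and $\tan(\alpha/2)$ converts the half-angle to a transverse offset. I would then argue the curve-vs-tangent deviation is dominated by this gap (here is where the specific choice $\tfrac{\pi}{30}$ relative to the eventual constant matters, and this is the second delicate point). Finally, alternation of the intersection points on $\partial N_\varrho(\I)$: since both curves pass through the interior point $z_0$ and each meets the boundary in exactly two points, they each separate $N_\varrho(\I)$ into two pieces; two transversal arcs through a common interior point of a disk must have their four boundary endpoints interleaved — otherwise one arc would lie entirely in one complementary region of the other, contradicting that they cross at $z_0$. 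A clean way to say this: parametrize $\partial N_\varrho(\I)\cong S^1$, note $Z_f$ cuts it into two arcs and $z_0$ lies "between" them, and $Z_g$ through $z_0$ must therefore have one endpoint on each arc.

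The main obstacle I anticipate is not the qualitative topology (graph property, exactly two crossings, alternation) but the \emph{quantitative constant chasing}: verifying that the fixed choice $\cos\tfrac{\pi}{30}$ in $C_2$ is exactly what is needed to guarantee the angle bound $\tfrac{\pi}{15}$ in part~1 and to make the tangent-line approximation in part~2 rigorous (i.e., that the second-order deviation of $Z_f$ from its tangent is genuinely smaller than the separation $2\varrho d\tan\tfrac{\alpha(\I)}{2}$, uniformly). This requires translating "the unit normal turns by at most $\delta$ over a box of size $\sim\varrho d$" into "the curve deviates from its tangent line by at most $c(\delta)\cdot\varrho d$" with $c(\delta)\to 0$ as $\delta\to 0$, and then checking the inequalities close. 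I would handle this by writing $Z_f$ locally as $y = \phi(x)$ in rotated coordinates aligned with $\nu_0^\perp$, bounding $|\phi'|\leq\tan\tfrac{\pi}{30}$ directly from $C_2$, integrating to bound $|\phi(x)-\phi(x_0)|$, and comparing with the part~2 separation; the remark preceding the lemma (lower bound $L$ on the normalized Jacobian, $\alpha(\I)=\arcsin L$) is exactly the input I need for the $\tan\tfrac{\alpha(\I)}{2}$ term.
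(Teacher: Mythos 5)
Your qualitative outline (graph property, exactly two crossings, alternation by a disk argument) is sound, but the two places you flagged as ``delicate'' are genuine gaps, and in both cases the paper's argument uses a different, cleaner mechanism that you would need to discover to close them.

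For part~1, you try to deduce the edge-crossing angle from the fact that the tangent of $Z_f$ stays within $\tfrac{\pi}{30}$ of a fixed reference direction $\nu_0^\perp$ and then ask ``how far can $\nu_0^\perp$ be from axis-parallel?''\ This cannot work: $\nu_0$ has no relation to the grid axes, and can be exactly axis-aligned, so any bound phrased only in terms of $\nu_0$'s orientation collapses. The missing input is the relative geometry of the inner box $\I$ and the inflated box $N_\varrho(\I)$. The paper takes the chord from a point $\ol p\in Z_f\cap\I$ to the exit point $\ol q\in Z_f\cap\partial N_\varrho(\I)$; because $\ol p$ lies in $\I$ and $\ol q$ lies on, say, a vertical edge of $N_\varrho(\I)$, the chord has horizontal extent at least $\varrho d$ and length at most $d\sqrt{(1+\varrho)^2+\varrho^2}$, so its angle with the horizontal is bounded. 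By the mean value theorem there is a point $s$ on $Z_f$ between $\ol p$ and $\ol q$ where $\gradient f(s)$ is perpendicular to this chord, hence makes an angle $\geq\tfrac{\pi}{2}-\arccos\tfrac{\varrho}{\sqrt{1+2\varrho+2\varrho^2}}>\tfrac{\pi}{10}$ with the horizontal. Only now is $C_2$ invoked, to transport this bound from $s$ to $\ol q$, losing $\tfrac{\pi}{30}$ and giving $\tfrac{\pi}{15}$. Note that $C_2$ is used once, as a $\tfrac{\pi}{30}$ correction, not twice; and the $\tfrac{\pi}{10}$ comes from box geometry, not from any constraint on $\nu_0$.

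For part~2, your plan to compare the curves to their tangent lines at $z_0$ and then bound the curve-vs-tangent deviation is workable in principle but generates exactly the second-order error term you worry about, which is not needed. The paper again uses a chord-and-MVT argument: on $Z_f$ between $p$ and its exit point $q$ there is a point where $\gradient f$ is perpendicular to the chord $pq$; likewise for $Z_g$ and $pr$. The angle between those particular values of $\gradient f$ and $\gradient g$ equals the angle between the chords $pq$ and $pr$, so the interval bound $\alpha(\I)$ applies directly to the chord angle. The separation $\norm{r-q}\geq 2\varrho d\tan\tfrac{1}{2}\alpha(\I)$ then falls out of the elementary two-rays estimate you describe, applied to the actual chords, with no deviation term to control. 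Your approach would at best recover this after extra work and at worst give a weaker constant.

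On the plus side, your interleaving argument for the four boundary points (two transversal arcs through a common interior point of a disk must have interleaved endpoints on the boundary circle) is correct and is actually more than the paper supplies, which asserts this without proof.
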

\begin{figure}[h]
  \centering
  \psfrag{p}{$p$}
  \psfrag{q}{$q$}
  \psfrag{op}{$\overline{p}$}
  \psfrag{oq}{$\overline{q}$}
  \psfrag{r}{$r$}
  \psfrag{s}{$s$}
  \psfrag{beta}{$\beta$}
  \psfrag{grf}{$\gradient{f}(s)$}
  \psfrag{Zf}{$Z_f$}
  \psfrag{I}{$\partial\I$}
  \psfrag{NI}{$\partial N_\varrho(\I)$}
  \psfrag{d4}{$\varrho d$}
  \psfrag{d34}{$(1+\varrho)d$}
  \includegraphics[width=0.40\textwidth]{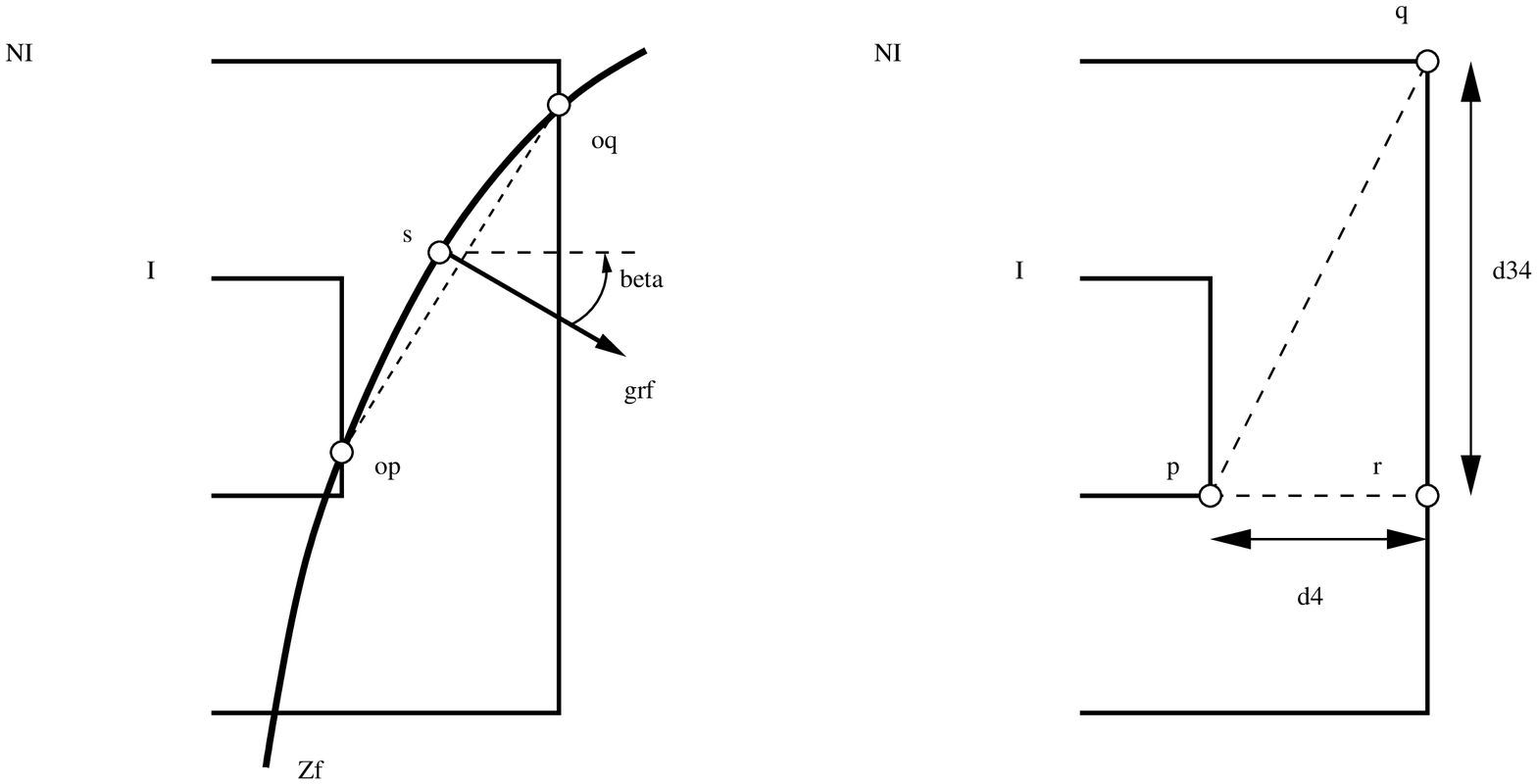}
  \caption{Lower bound on angle of intersection of $Z_f$ and the boundary of the
    surrounding box $N_\varrho(\I)$ in case $Z_f$ intersects $\I$.}
  \label{fig:lowerBound}
\end{figure}
\begin{proof}
  1.~Assume that $Z_f$ intersects a vertical edge of $\partial N_\varrho(\I)$ at
  $\ol{q}$. Let $\ol{p}$ be a point on $Z_f \cap \I$.
  See Figure~\ref{fig:lowerBound}. Then there is a point $s$ on the curve
  segment $\ol{p}\ol{q}$ at which the gradient of $f$ is perpendicular to this
  line segment $\ol{p}\ol{q}$. Let $\beta$ be the (smallest) angle between
  $\gradient{f}(s)$ and the horizontal direction.
  Referring to the rightmost picture in Figure~\ref{fig:lowerBound}, we
  see that this angle is not less than $\angle pqr =
  \tfrac{1}{2}\pi-\angle qpr$. Since $\norm{q-p} = d\sqrt{1+2\varrho+2\varrho^2}$,
  it follows that
  \begin{equation*}
    \beta \geq \frac{\pi}{2} - \arccos\frac{\varrho}{\sqrt{1+2\varrho+2\varrho^2}} > \frac{\pi}{10},
  \end{equation*}
  where the last inequality holds since $\frac{1}{2}\leq \varrho \leq 1$.
  Since condition $C_2(N(\I),f)$ holds, the angle between the gradients of
  $f$ at $s$ and $\ol{q}$ is less than $\frac{\pi}{30}$. Therefore, the
  angle between $Z_f$ and the vertical edge of $\partial N_\varrho(\I)$ at $\ol{q}$
  is at least $\frac{\pi}{10} - \frac{\pi}{30} = \frac{\pi}{15}$.
  \begin{figure}[h]
    \centering
    \psfrag{p}{$p$}
    \psfrag{q}{$q$}
    \psfrag{r}{$r$}
    \psfrag{op}{$\ol{p}$}
    \psfrag{o}{$\ol{\alpha}$}
    \includegraphics[width=0.2\textwidth]{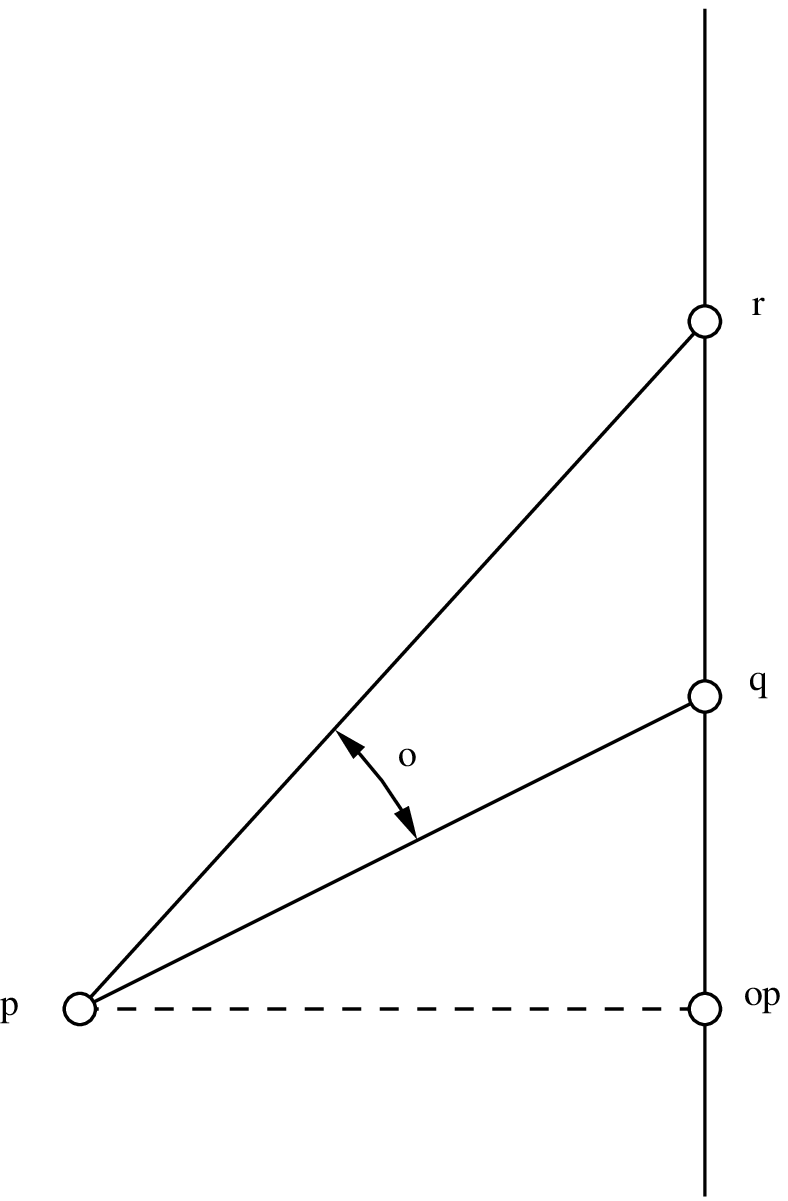}
    \caption{A lower bound for the distance between
      points of $Z_f$ and $Z_g$ on the boundary of $N_\varrho(\I)$.}
    \label{fig:distanceIntersections}
  \end{figure}
  
  \medskip\noindent
  2.~Let $p \in \I$ be the point of intersection of $Z_f$ and $Z_g$.
  Suppose $Z_f$ and $Z_g$ intersect the vertical edge $e$ of $\partial
  N_\varrho(\I)$ in $q$ and $r$, respectively. Then there is a point $s$ on $Z_f$ between 
  $p$ and $q$ at which the gradient of $f$ is perpendicular to $pq$, and there is a
  point on $Z_g$ between $p$ and $r$ at which the gradient of $g$ is perpendicular to
  $pr$. 
  Let $\alpha(\I)$ be the lower bound on the angle between $\nabla f(p)$ and $\nabla g(q)$ where $p$, $q$
  range over the box $\I$.
  See   Figure~\ref{fig:distanceIntersections}, where $\ol{\alpha} \geq \alpha(\I)$.
  For fixed $\ol{\alpha}$, the distance between $q$ and $r$ is minimal if the
  projection $\ol{p}$ of $p$ on the edge $e$ is the midpoint of $qr$, in which
  case $\norm{r-q} = 2\norm{p-\ol{p}}\,\tan\tfrac{1}{2}\ol{\alpha}$.
  Since $\norm{p-\ol{p}} \geq \varrho d$ and $\ol{\alpha} \geq \alpha(\I)$, 
  the distance between $q$ and $r$ is at least
  $2 \varrho d\tan\tfrac{1}{2}\alpha(\I)$. 
\end{proof}

The following result gives an estimate on the position of the points at
which $Z_f$ intersects the boundary of the surrounding box of $\I$ in
case $Z_f$ intersects an edge of $\partial \I$ in at least two points.
For an edge $e$ of the inner box $\I$ let $l$ and $r$ be the points of intersection
of the line through $e$ and the edges of the surrounding box $N_\varrho(\I)$,
perpendicular to $e$. See also Figure~\ref{fig:twoIntersections}.
The dyadic intervals on the boundary of this surrounding box with length at most
$\frac{2}{\sqrt{3}}(1+\varrho)d$, centered at $l$ and $r$, respectively, 
are denoted by $L_\varrho(e)$ and $R_\varrho(e)$, where $d$ is the length of the
edges of $\I$, and $\tfrac{1}{2} \leq \varrho \leq 1$. Here, we note that in the follwing lemma~\ref{lemma:twoIntersections}
 intervals $L_\varrho(e)$ and $R_\varrho(e)$ can be made corresponding dyadic intervals by 
replacing its real endpoints with suitable conservative dyadic numbers satisfying the conditions of the lemma.

\begin{lem}
  \label{lemma:twoIntersections}
  Let $\I$ be a box such that $\neg C_0(\I)$, $C_1(\I)$ and $C_2(\I)$ hold, and let 
  $e$ be one of its edges. Let $\tfrac{1}{2} \leq \varrho \leq 1$.
  \begin{enumerate}
  \item 
    If $Z_f$ intersects an edge $e$ of the boundary of $\I$ in at least two  points,
    then it transversally intersects $\partial N_\varrho(\I)$ in exactly two points,
    one in each of the dyadic intervals $L_\varrho(e)$ and $R_\varrho(e)$.
  \item 
    If $Z_f$ intersects $\partial N_\varrho(\I)$ in the dyadic intervals $L_\varrho(e)$ and
    $R_\varrho(e)$, then these intersections are transversal, and $Z_f$ intersects
    $\partial N_\varrho(\I)$ at exactly two points, one in each of these intervals. 
  \end{enumerate}
\end{lem}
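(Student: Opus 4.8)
\medskip
\noindent\textbf{Proof plan.}

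Everything rests on the one structural consequence of $C_2(\I,f)$: over $N(\I)\supseteq N_\varrho(\I)$ the unit normal $\nabla f/\norm{\nabla f}$ of $Z_f$ stays in an angular sector of width less than $\tfrac{\pi}{30}$, so the unit tangent of $Z_f$ varies by less than $\tfrac{\pi}{30}$ on $N_\varrho(\I)$. I choose coordinates so that $e$ is the lower edge of $\I$; up to translation $\I=[0,d]^2$ and $N_\varrho(\I)=[-\tfrac{\varrho d}{2},\,d+\tfrac{\varrho d}{2}]^2$, with $l=(-\tfrac{\varrho d}{2},0)$ and $r=(d+\tfrac{\varrho d}{2},0)$ lying on the left and right vertical edges of $N_\varrho(\I)$. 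Then, in a fixed rotated frame, every connected component of $Z_f\cap N_\varrho(\I)$ is the graph of a function of slope at most $\tan\tfrac{\pi}{30}<\tfrac1{\sqrt3}$, monotone along that axis, with its two endpoints on $\partial N_\varrho(\I)$; and no component is a closed loop (a closed loop would force the tangent to attain every direction).

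For part 1, let $Z_f$ meet $e$ in two distinct points $a_1,a_2$. The mean value theorem for $f$ on the line carrying $e$ gives $c\in\overline{a_1a_2}\subseteq\I$ with $f_x(c)=0$, so $\nabla f(c)$ is vertical and, by $C_2(\I,f)$, the tangent of $Z_f$ is within $\tfrac{\pi}{30}$ of horizontal everywhere on $N_\varrho(\I)$. Hence the component $A$ through $a_1$ is a graph $y=\phi(x)$ with $|\phi'|\le\tan\tfrac{\pi}{30}$; as $\phi$ vanishes at the abscissa of $a_1$ and the horizontal width of $N_\varrho(\I)$ is $(1+\varrho)d$, we get $|\phi|\le(1+\varrho)d\tan\tfrac{\pi}{30}$, which (using $\varrho\ge\tfrac12$ and the numerical value of $\tan\tfrac{\pi}{30}$) is strictly less than $\tfrac{\varrho d}{2}$, the distance from the line $y=0$ to the lower edge of $N_\varrho(\I)$ (and, a fortiori, to the upper one). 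So $A$ cannot leave $N_\varrho(\I)$ through a horizontal edge: it runs straight across, leaving the left vertical edge at $q_\ell$ and the right one at $q_r$, and $|y(q_\ell)|,|y(q_r)|\le(1+\varrho)d\tan\tfrac{\pi}{30}\le\tfrac1{\sqrt3}(1+\varrho)d$, i.e.\ $q_\ell\in L_\varrho(e)$ and $q_r\in R_\varrho(e)$, in the interiors of those edges. Finally, since $a_1\in Z_f\cap\I$, the first part of Lemma~\ref{lemma:isolatingIntervals} shows $Z_f$ meets $\partial N_\varrho(\I)$ transversally in exactly two points, which therefore are $q_\ell$ and $q_r$.

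For part 2, suppose $Z_f$ meets $\partial N_\varrho(\I)$ in $q_L\in L_\varrho(e)$ and $q_R\in R_\varrho(e)$, with $q_L\ne q_R$ because $L_\varrho(e)$ and $R_\varrho(e)$ are disjoint. Let $\tau$ be a unit tangent of $Z_f$ at $q_L$; all tangents of $Z_f$ on $N_\varrho(\I)$ lie within $\tfrac{\pi}{30}$ of $\pm\tau$. I claim $Z_f\cap N_\varrho(\I)$ is a single arc: two disjoint components would each be a low-slope monotone graph over the $\tau$-axis, their $\tau$-projections would overlap (a short estimate, using $\varrho\ge\tfrac12$), so a line $\ell\perp\tau$ through a common projection value would meet both; then $f|_\ell$ vanishes twice, and the mean value theorem yields a point of $\ell$ with $\nabla f\parallel\tau$, contradicting that $\nabla f$ is within $\tfrac{\pi}{30}$ of $\tau^{\perp}$. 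Thus $Z_f\cap N_\varrho(\I)=A$ is one arc; provided $Z_f$ has no tangential contact with $\partial N_\varrho(\I)$, the set $A\cap\partial N_\varrho(\I)$ is exactly the pair of endpoints of $A$, which must then be $q_L$ and $q_R$ — so $Z_f$ meets $\partial N_\varrho(\I)$ at exactly these two points, one in each interval. For transversality, note that since $q_L$ is near $l$ and $q_R$ near $r$, $A$ spans a horizontal extent of at least a fixed positive fraction of $d$, which forces $\tau$ to be bounded away from vertical; hence $A$ crosses transversally any vertical edge of $N_\varrho(\I)$ it meets.

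The routine work is the coordinate bookkeeping and the elementary inequalities — where $\tan\tfrac{\pi}{30}<\tfrac1{\sqrt3}$, the constant $\tfrac2{\sqrt3}$ in the definition of $L_\varrho(e),R_\varrho(e)$, and the range $\tfrac12\le\varrho\le1$ all enter. I expect the real obstacle to be the behaviour of $Z_f$ near the two corners of $N_\varrho(\I)$ around which $L_\varrho(e)$ and $R_\varrho(e)$ wrap: one must rule out a tangential touch of $Z_f$ with such a corner's edges, or a small ``corner-clipping'' excursion of $Z_f$ there, either of which would break both the count ``exactly two points'' and the transversality assertion (and would leave $q_L$ or $q_R$ on a horizontal edge of $N_\varrho(\I)$). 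Pinning this down is exactly where the precise constants, and if necessary an implicit genericity/transversality hypothesis in the spirit of Remark~\ref{rem:transversality}, must be brought to bear; this is the step I would develop most carefully.
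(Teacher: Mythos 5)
Part~1 of your argument is essentially the paper's: you locate by the mean value theorem a point on $e$ where $\nabla f$ is perpendicular to $e$, invoke $C_2$ to bound the tilt of $Z_f$ by $\tfrac{\pi}{30}$ over all of $N(\I)$, and conclude that $Z_f$ must run across $N_\varrho(\I)$ and exit on the two edges perpendicular to $e$, inside $L_\varrho(e)$ and $R_\varrho(e)$. You phrase this as a small-slope-graph bound on $|\phi|$; the paper instead bounds the angle $\angle qpr$ of a triangle. Same content, and your numerical check $(1+\varrho)d\tan\tfrac{\pi}{30}<\tfrac{1}{\sqrt3}(1+\varrho)d$ is exactly the inequality the paper uses.

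Part~2 is where you diverge. Your main device is a uniqueness-of-component argument: two disjoint components of $Z_f\cap N_\varrho(\I)$ would be low-slope monotone graphs over the $\tau$-axis with overlapping projections, so a line $\ell\perp\tau$ would meet both, giving two zeros of $f|_\ell$ and hence (by MVT) a point where $\nabla f\parallel\tau$, contradicting $C_2$. That is a clean topological route to ``$Z_f\cap N_\varrho(\I)$ is a single arc,'' and it is not the paper's route. However, as you yourself flag, a single arc with endpoints on $\partial N_\varrho(\I)$ does not by itself deliver the lemma: since $L_\varrho(e)$ and $R_\varrho(e)$ can wrap around the corners of $N_\varrho(\I)$, you must still rule out that the arc exits through an edge parallel to $e$, and establish transversality quantitatively; your phrase ``provided $Z_f$ has no tangential contact'' and your appeal to an implicit genericity hypothesis are exactly where the argument is incomplete. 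The paper closes precisely this gap by a direct metric estimate rather than a topological one: from the slope bound on the chord $qs$ and $C_2$ it deduces that $\nabla f$ stays within a fixed small angle of the normal to $e$ throughout $N_\varrho(\I)$ (which at once forbids a second crossing of a perpendicular edge and gives transversality), and then shows that $Z_f$ is trapped above the wedge-shaped polyline $qms$ and hence stays within distance $<\varrho d$ of the line through $e$, so it never reaches the edges of $N_\varrho(\I)$ parallel to $e$. In short: part~1 matches the paper; part~2 substitutes a projection/MVT uniqueness argument for the paper's explicit containment estimate, but that substitute does not by itself control corner behaviour or transversality, and the step you defer is the one the paper's quantitative bound is there to supply.
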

\begin{figure}[h]
  \centering
  \psfrag{p}{$p$}
  \psfrag{q}{$q$}
  \psfrag{r}{$r$}
  \psfrag{s}{$s$}
  \psfrag{l}{$l$}
  \psfrag{d}{$d$}
  \psfrag{NI}{$\partial N_\varrho(\I)$}
  \psfrag{rhod}{$\varrho d$}
  \psfrag{grf}{$\nabla f$}
  \psfrag{Zf}{$Z_f$}
  \psfrag{Lre}{$L_\varrho(e)$}
  \psfrag{Rre}{$R_\varrho(e)$}
  \psfrag{d25}{}
  \psfrag{d25a}{$\frac{1}{\sqrt{3}}(1+\varrho)d$}
  \includegraphics[width=0.38\textwidth]{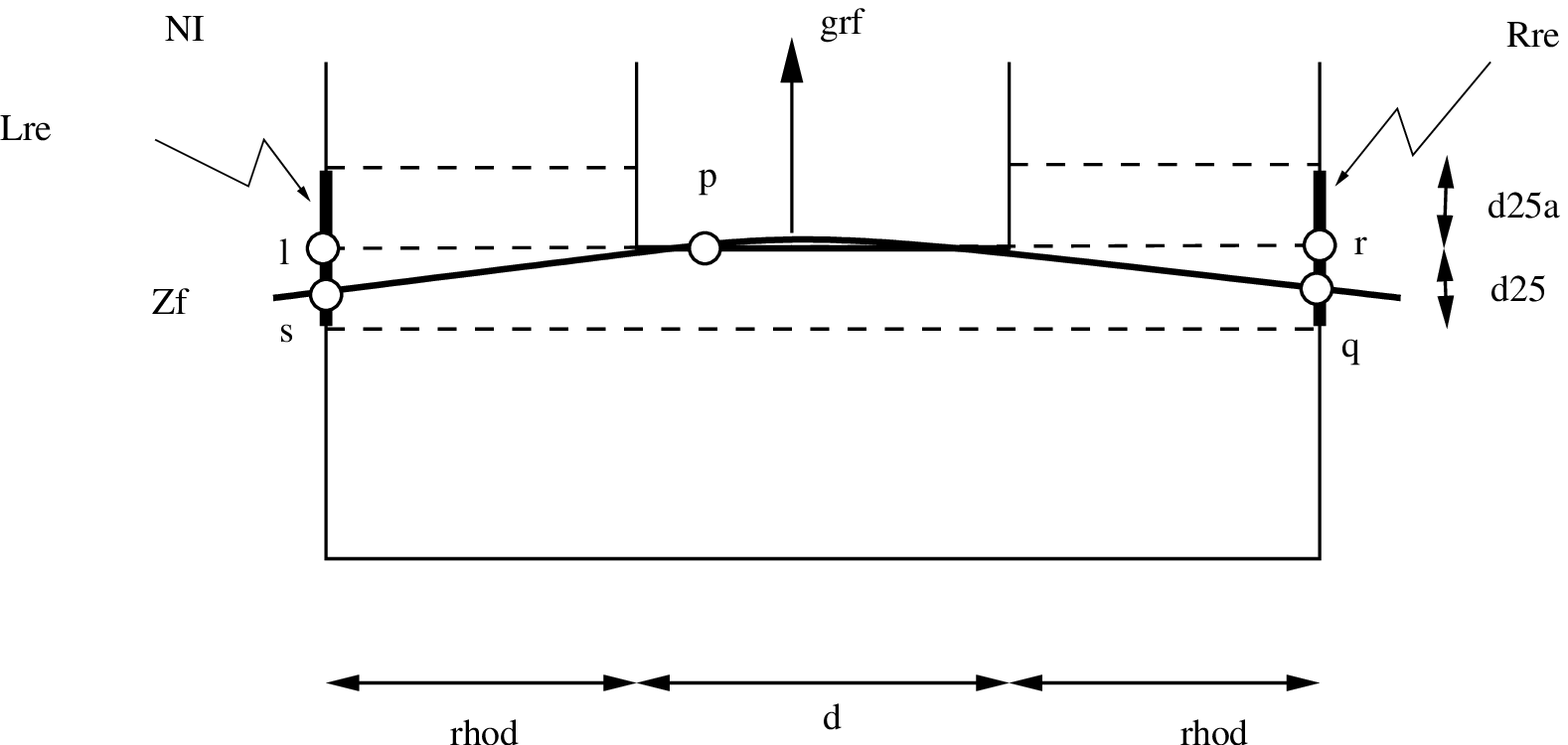}
  \caption{Intervals containing points $Z_f \cap \partial N_\varrho(\I)$.}
  \label{fig:twoIntersections}
\end{figure}
\begin{proof}
  1.~There is a point on $Z_f$ between the points of intersection with $e$
  at which the gradient of $f$ is perpendicular to $e$.
  See Figure~\ref{fig:twoIntersections}. The small normal variation
  condition $C_2(N(\I),f)$ implies that $Z_f$ does not intersect any of the
  two edges of $\partial N_\varrho(\I)$ parallel to $e$, and that it intersects
  each of the edges of $\partial N_\varrho(\I)$ perpendicular to $e$
  transversally in exactly one point. Let $q$ be the point of
  intersection with the edge containing $r$. Then there is a point on
  $Z_f$ at which the gradient of $f$ is perpendicular to the line
  segment $pq$. Since the angle between the gradients of $f$ at two
  points of $N_\varrho(\I)$ does not differ by more than $\frac{1}{30}\pi$, 
  we have $\angle qpr < \frac{1}{30}\pi$.
  Therefore,
  \begin{equation*}
    \norm{q-r} = \norm{p-r} \tan\angle qpr < (1+\varrho) d \tan\tfrac{\pi}{30} <
    \tfrac{1}{\sqrt{3}}(1+\varrho) d.
  \end{equation*}
  In other words, $Z_f$ intersects $R_\varrho$. Similarly, $Z_f$ intersects
  $L_\varrho$. 
  The small normal variation condition $C_2(N(\I))$ implies that there are no
  other intersections with the edges of $\partial N_\varrho(\I)$.

  \medskip\noindent
  2.~Let the points of intersection of $Z_f$ and $R_\varrho(e)$ and $L_\varrho(e)$ be 
  $q$ and $s$, respectively. Then there is a point on $Z_f$ at which $\gradient{f}$
  is perpendicular to $qs$. Since the angle of $qs$ and the vertical direction is at
  most
  \begin{equation*}
    \arctan \frac{(1+\varrho)/8}{1+\varrho} = \arctan\frac{1}{8} < \frac{\pi}{25},
  \end{equation*}
  it follows from the small normal variation condition $C_2(\I)$ that the gradient of
  $f$ at any point of $N_\varrho(\I)$ makes an angle of at most 
  $\frac{\pi}{25} + \frac{\pi}{15} < \frac{\pi}{10}$ 
  with the vertical direction. This rules out multiple intersections with the vertical
  edges of $\partial N_\varrho(\I)$. 
  It also implies that $Z_f$ lies above the polyline $qms$, where $m$ is
  the intersection of the line through $q$ with slope
  $\tan\frac{\pi}{10}$ and the line through $s$ with slope $-\tan\frac{\pi}{10}$.
  Therefore, all points of $Z_f$ lie at distance at most 
  $\frac{1}{4}(1+\varrho)d + \frac{1}{2}(1+\varrho)d\,\tan\frac{\pi}{10} < \varrho d$
  from the line through $e$, so $Z_f$ does not intersect the edges of
  $\partial N_\varrho(\I)$ parallel to $e$.
\end{proof}

%
%
\subsection{Towards an algorithm}
\label{sec:certifiedBoxes}
After the first subdivision step, we have constructed a finite set $\mcB$ of
boxes, all of the same size, such that $C_0(\I) \wedge C_1(N(\I)) \wedge C_2(N(\I))$
holds for each box $\I$.
For each grid-box $\I$, the algorithm calls one of the following:
\begin{itemize}
\item {\sc Discard}($\I$), if it decides that $\I$ does not contain a solution.
It marks box $\I$ as processed.
\item {\sc ReportSolution}($\I$). It returns the certified pair $(N_{1/2}(\I),N(\I))$, and
marks all boxes contained in $N(\I)$ as processed. 
\end{itemize}

In the latter case a solution is found inside $N_{1/2}(\I)$, but, as will become
clear later, it may not be contained in the smaller box $\I$. 
In view of $C_1(\I)$ none of the grid-boxes in $N(\I)$ contain a solution
different from the one reported, so they are marked as being processed.

Decisions are based on evaluation of the signs of $f$ and $g$ at the vertices of
the grid-boxes (or at certain dyadic points on edges of grid-boxes).
An edge of a box is called \textit{bichromatic} (\textit{monochromatic}) for $f$ if
the signs of the value of $f$ at its vertices are opposite (equal).

\paragraph{Algorithm, case 1: $\I$ has a bichromatic edge for $f$ and a bichromatic edge
  for $g$}
Then $Z_f$ and $Z_g$ intersect $\I$, and, according to
Lemma~\ref{lemma:isolatingIntervals}, part 1, both curves intersect the boundary of
$N_{1/2}(\I)$ transversally in exactly two points.
For each of the two points in $\partial N_{1/2}(\I)$ the algorithm computes an
isolating interval---called an $f$-interval---on $\partial N_{1/2}(\I)$ of length
$\tfrac{1}{2} d \tan\tfrac{1}{2}\alpha(\I)$. The two $g$-intervals are computed
similarly.
If the $f$- and $g$-intervals are not interleaving, there is no solution of 
(\ref{eq:systemEq}) in box $\I$---even though there may be a solution in
$N_{1/2}(\I)$---and {\sc Discard}($\I$) is called.
This follows from Lemma~\ref{lemma:isolatingIntervals}, part 2.
If the intervals are interleaving, then there is a point of intersection
inside $N_{1/2}(\I)$, so the algorithm calls {\sc ReportSolution}($\I$).
\paragraph{Algorithm, case 2: $\I$ contains no bichromatic edge for $f$ ($g$),
and at least one bichromatic edge for $g$ ($f$, respectively)}
We only consider the case in which all edges of $\I$ are monochromatic for $f$.
Then the algorithm also evaluates the sign of $f$ at the vertices of the box
$N_{1/2}(\I)$.
If $N_{1/2}(\I)$ has no \textit{disjoint} bichromatic edges (as in the fourth and fifth
configuration of Figure~\ref{fig:signPatterns}), the isocurve $Z_f$ does
not intersect $\I$, so the algorithm calls {\sc Discard}($\I$).
\begin{figure}[h]
  \centering
  \includegraphics[width=0.45\textwidth]{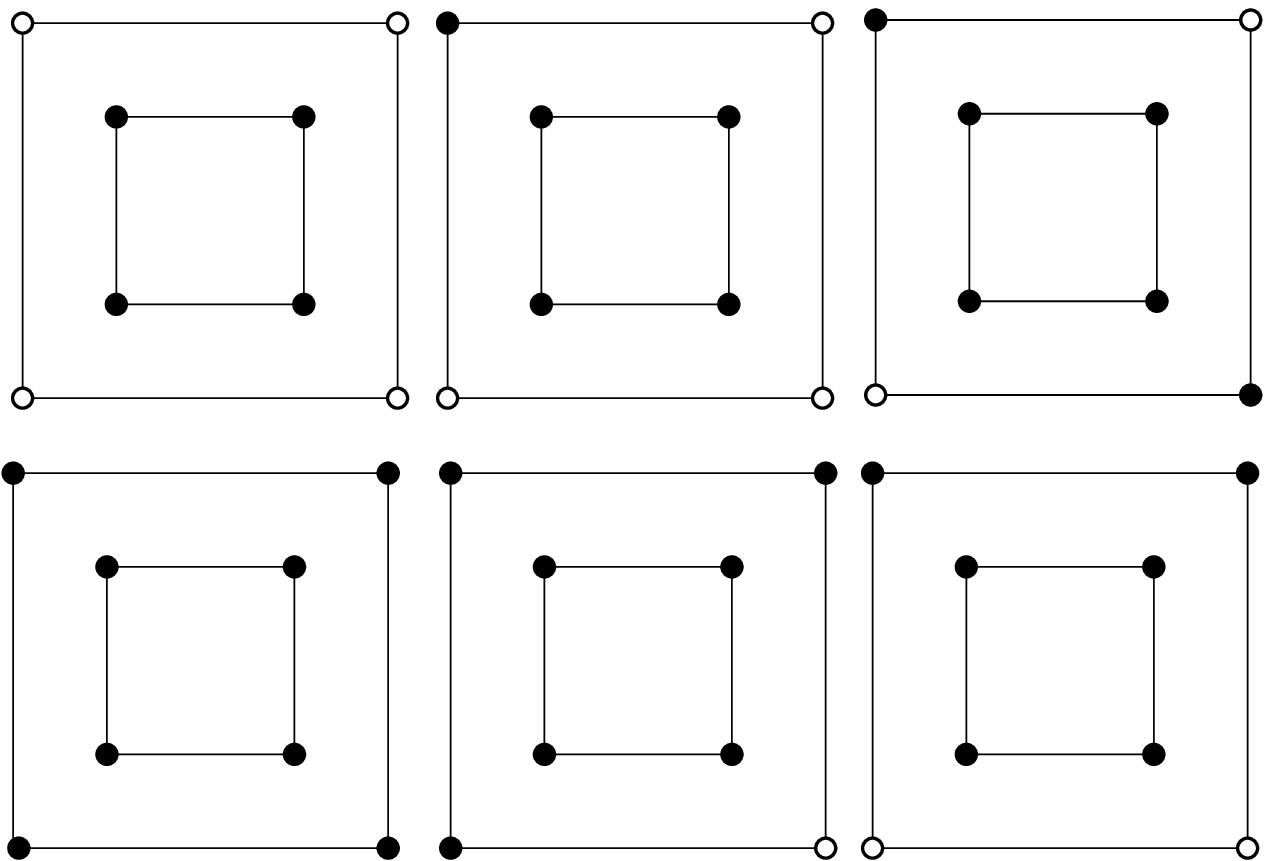}
  \caption{Sign patterns of the box $N_{1/2}(\I)$ enclosing the grid-box $\I$
    with monochromatic edges for $f$. The three top configurations
    are ruled out by the small normal variation condition $C_2(\I)$. 
    The fourth, fifth and sixth configuration are all possible, but only in the
    sixth situation $Z_f$ may intersect the inner box.
  }
  \label{fig:signPatterns}
\end{figure}
To deal with the remaining case, in which $N_{1/2}(\I)$ has two \textit{disjoint}
bichromatic edges (as in the sixth configuration in Figure~\ref{fig:signPatterns}) we
need to evaluate the sign of $f$ at certain dyadic points of these bichromatic edges, followed
from Lemma~\ref{lemma:twoIntersections}.
By evaluating the signs of $f$ at the (dyadic) endpoints of the interval 
$L_{\varrho}(e)$ and $R_{\varrho}(e)$ the algorithm decides whether they contain a
point of intersection with $Z_f$.
If at least one of these intervals is disjoint from $Z_f$, then {\sc Discard}($\I$)
is called. 
Otherwise, the algorithm computes isolating $f$- and $g$-intervals of length
$\tfrac{1}{2} d \tan \tfrac{1}{2}\alpha(\I)$. 
As in case~1, the algorithm calls {\sc ReportSolution}($\I$) if these intervals are
interleaving, and {\sc Discard}($\I$) otherwise.
\paragraph{Algorithm, case 3: all edges of $\I$ are monochromatic for both $f$ and $g$}
Again, let $e$ be the (unique) edge of $\I$ closest to the edge of $N_{1/2}(\I)$ which is
monochromatic for $f$, at whose vertices the sign of $f$ is the opposite of the sign
of $f$ at the vertices of $\I$.
Edge $e'$ of $\I$ is defined similarly for $g$. 
\\
\textbf{Case 3.1: $e=e'$}. 
In this case $Z_f$ or $Z_g$ does not intersect $\I$. Indeed, 
if $Z_f$ intersects $\I$, it intersects $e$ in at least two points, so there is
a point $p \in Z_f$ at which $\gradient{f}(p)$ is perpendicular to $e$. 
Condition $C_1(\I)$ guarantees that $\gradient{g}$ is nowhere parallel to
$\gradient{f}(p)$, so $Z_g$ does not intersect $e$, and, hence, does not intersect
$\I$. Therefore, {\sc Discard}($\I$) is called.
\\
\textbf{Case 3.2: $e \neq e'$}. 
If $Z_f$ does not intersect $L_{\varrho}(e)$ or $R_{\varrho}(e)$, or
if $Z_g$ does not intersect $L_{\varrho}(e')$ or $R_{\varrho}(e')$, then, as in
case~2, the algorithm calls {\sc Discard}($\I$).
Otherwise, $L_{\varrho}(e)$ or $R_{\varrho}(e)$ are isolating $f$-intervals which are
disjoint from the isolating $g$-intervals $L_{\varrho}(e')$ or $R_{\varrho}(e')$.
If $e$ and $e'$ are perpendicular, then these $f$- and $g$-intervals are
interleaving, and, hence, {\sc ReportSolution}($\I$) is called. Otherwise, there is
no solution in $\I$, so {\sc Discard}($\I$) is called.

\paragraph{Refinement: disjoint surrounding boxes}
We would like distinct isolating boxes $\I,\J$ to have disjoint surrounding
boxes $N(\I), N(\J)$.  There is a simple way to ensure this: we just
use the predicate $C_1(N(\I))$ to instead of $C_1(\I)$ in the above subdivision
process.  Then, if the interior of $N(\I)\cap N(\J)$ is non-empty, we can
discard any one of $\I$ or $\J$.



%
%
\section{Isolating boxes for sinks, sources and saddles}
\label{sec:singularities}
In a first step, described in Section~\ref{sec:eqSolving}, we have constructed
certified disjoint isolating boxes 
$\B_1^{'}\ldots,\B_m^{'}$  the singular points of $\gradient{h}$ in the
domain $\D$ of $h$.
Let $\D^*$ be the closure of $\D \setminus (\B_1^{'}\cup \cdots \cup \B_m^{'})$.
Obviously, $\D^{*}$ is a compact subset of $\R^2$.

In a second step towards the construction of the MS-complex, we refine
the saddle-, sink- and sourceboxes.
In Section~\ref{sec:refinedSaddle} we show how to augment each saddlebox by
computing four arbitrarily small
disjoint intervals in its boundary, one for each intersection of a
stable or unstable separatrix with the box boundary. 
Subsequently, in Section~\ref{sec:refinedMinMax}, we show how to 
construct for each source or sink of $\gradient{h}$ (minimum or
maximum of $h$) a box on the boundary of which the gradient
field is pointing outward or inward, respectively.
These boxes are contained in the source- and sinkboxes constructed in
the previous section, but are not necessarily axes-aligned.

\subsection{Refining saddle boxes: Isolating separatrix intervals}
\label{sec:refinedSaddle}
To compute \textit{disjoint} certified separatrix intervals we consider
wedge shaped regions with apex at the saddle point, enclosing the
unstable and stable manifolds of the saddle point.
Even though the saddle point is not known exactly, we will show how to
determine certified intervals for the intersection of these wedges and
the boundary of a saddle box.

First we determine the eigenvalues and eigenvectors of the Hessian of
$h$ at a point $(x_0,y_0)$ in the interior of the saddlebox $\I$---its
center point, say---and 
consider these as good approximations to the eigenvalues and
eigenvectors of the Hessian, i.e., the linear part of $\gradient{h}$, at
the saddle point.
Let $H$ be the Hessian, i.e.,
\begin{equation}
  \label{eq:Hessian}
  H = 
  \begin{pmatrix}
    h_{xx} & h_{xy} \\[1.2ex]
    h_{xy} & h_{yy}
  \end{pmatrix},
\end{equation}
and let $H^0$ be the Hessian evaluated at $(x_0,y_0)$.
The eigenvalues $\lambda_u$ and $\lambda_s$ of $H^0$ are given by
\begin{align*}
  \lambda_u &= \tfrac{1}{2}\,(h^0_{xx}+h^0_{yy} + \sqrt{(h^0_{xx}-h^0_{yy})^2+4(h^0_{xy})^2})\\[1.2ex]
  \lambda_s &= \tfrac{1}{2}\,(h^0_{xx}+h^0_{yy} - \sqrt{(h^0_{xx}-h^0_{yy})^2+4(h^0_{xy})^2}),
\end{align*}
and the corresponding eigenvectors are
\begin{equation}
  \label{eq:eigenvectors}
  V^u = 
  \begin{pmatrix}
    h^0_{xy}\\[1.2ex]
    \lambda_u-h^0_{xx}
  \end{pmatrix},
  \qquad
  V^s = 
  \begin{pmatrix}
    h^0_{xx}-\lambda_u\\[1.2ex]
    h^0_{xy}
  \end{pmatrix}.
\end{equation}
The singular point is a saddle, so we have $\lambda_s < 0 < \lambda_u$.
Since $H^0$ is a symmetric matrix, its eigenvectors are orthogonal. More precisely,
\begin{equation*}
  V^s =
  \begin{pmatrix}
    -V^u_2 \\
    V^u_1
  \end{pmatrix}
  = 
  R_{\pi/2}(V^u).
\end{equation*}
Here $R_\alpha$ denotes counterclockwise rotation over an angle $\alpha$.
Therefore, 
\begin{equation}
  \label{eq:VsVu}
  \norm{V^s}=\norm{V^u}, \text{ and } \det(V^u,V^s)=\norm{V^u}^2.
\end{equation}
The stable and unstable eigenvectors $V^s$ and $V^u$ are good
approximations of the tangent vectors of the stable and unstable
manifolds of the saddle point. These invariant manifolds are contained
in wedge-shaped regions, which are defined as follows.
\begin{dfn}
  Let the (orthogonal) vectors $V^u$ and $V^s$ be the stable and
  unstable eigenvectors of the Hessian of $h$ at the center of a saddle
  box $\I$, and let $\beta\in(0,\frac{\pi}{8})$. 
  The \textit{unstable wedge} $C^u_\beta$ is the set of points in the
  surrounding box $N(\I)$ at which the (unsigned) angle between $\gradient{h}$ and
  $V^u$ is at most $\beta$. See Figure~\ref{fig:wedge}.
  Similarly, the \textit{stable wedge} $C^s_\beta$ is the set of points in the
  surrounding box $N(\I)$ at which the (unsigned) angle between $\gradient{h}$ and
  $V^s$ is at most $\beta$.
\end{dfn}
\begin{figure}[h]
  \centering
  \psfrag{Gplus}{$\Gamma^u_{\beta}$}
  \psfrag{Gmin}{$\Gamma^u_{-\beta}$}
  \psfrag{Vs}{$V^s$}
  \psfrag{Vu}{$V^u$}
  \psfrag{Cs}{$C^s_\beta$}
  \psfrag{Cu}{$C^u_\beta$}
  \psfrag{I}{$\I$}
  \psfrag{NI}{$N(\I)$}
  \psfrag{grh}{$\gradient{h}$}
  \psfrag{Xb}{$X_\beta$}
  \includegraphics[width=0.40\textwidth]{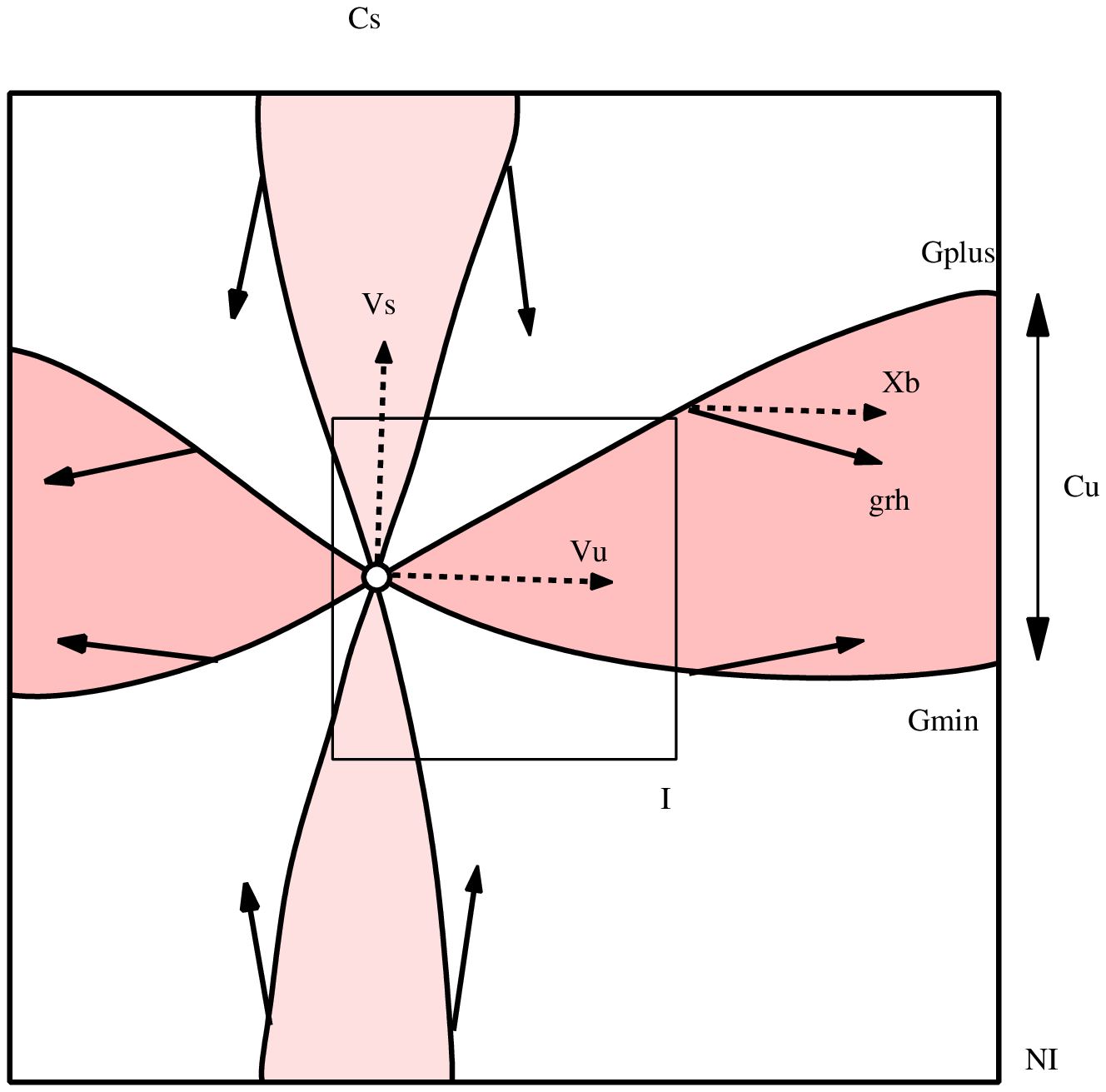}
  \caption{The unstable wedge $C^u_\beta$ enclosing the unstable
    separatrix,
    and the stable wedge $C^s_\beta$ enclosing the stable separatrix.
    The gradient vector field $\gradient{h}$, represented by solid arrows, is
    transversally pointing inward along the boundary of the unstable
    wedge,
    and outward along the boundary of the stable wedge.
    At points of the unstable wedge boundary $\Gamma^u_{\beta} \cup \Gamma^u_{-\beta}$ the vector field
    $X_\beta$ is parallel to $V^u$ or $-V^u$, so $\gradient{h}$ makes an angle
    $-\beta$ with $V^u$ or $-V^u$ there.
  }
  \label{fig:wedge}
\end{figure}

\noindent
The saddle point belongs to both the stable and the unstable wedge.
Since $V^u$ and $V^s$ are orthogonal, and $0 < \beta < \frac{\pi}{8}$,
this is the only common point of the stable and unstable wedge.

\paragraph{Conditions}
We now introduce additional conditions, which guarantee that each of the wedge-boundaries consist 
of two regular curves, cf lemma~\ref{lemma:smallAngleGradPsi}. In fact, these conditions 
guarantee that $C^u_\beta$ and $C^s_\beta$ are really wedge-shaped.
Fix $a > 1$, and let $\delta > 0$ be an arbitrarily small constant (to be specified
later).
At the point $(x_0,y_0)$ we have $HV^u = \lambda_u V^u$,
$HV^s=\lambda_sV^s$, so $N(\I)$ can be taken small enough to guarantee
that the following condition is satisfied at all points of $N(\I)$:
\\[1.2ex]
\textbf{Condition I($\mathbf{a,\I}$).} At every point of the box $N(\I)$ the following inequalities
hold:
\begin{align*}
  \frac{1}{a}\lambda_u\cdot\norm{V^u} 
  & \leq \norm{HV^u} \leq a \lambda_u\cdot\norm{V^u},\\[1.2ex]
  \frac{1}{a}|\lambda_s|\cdot\norm{V^s} 
  & \leq \norm{HV^s} \leq a |\lambda_s|\cdot\norm{V^s}.
\end{align*}
At the point $(x_0,y_0)$ we also have $\ip{HV^u}{V^u}=\lambda_u\norm{V^u}^2$,
$\ip{HV^s}{V^s}=\lambda_s\norm{V^s}^2$, and
$\ip{HV^s}{V^u}=0$.
Therefore, for any $\delta > 0$, the box $N(\I)$ can be taken small enough such that
the following condition holds:

\medskip\noindent
\textbf{Condition II($\mathbf{\delta,\I}$).} At every point of the box $N(\I)$ the
following inequalities hold:
\begin{align}
  \ip{HV^u}{V^u} & \geq \tfrac{1}{2}\lambda_u\,\norm{V^u}^2  \label{eq:HVuu}\\[1.2ex]
  \ip{HV^s}{V^s} & \geq \tfrac{1}{2}|\lambda_s|\,\,\norm{V^u}^2 \label{eq:HVss} \\[1.2ex]
  |\ip{HV^s}{V^u}| & \leq \delta\,\norm{V^u}^2    \label{eq:HVsu}
\end{align}
Since $H$ is symmetric, (\ref{eq:HVsu}) also implies 
$|\ip{HV^u}{V^s}| \leq \delta\,\norm{V^u}^2$.

\bigskip\noindent
On the boundary of $C^u_\beta$ the gradient field makes a (signed)
angle $\pm\beta$ or $\pi\pm\beta$ with $V^u$, or, in other words,
$X_{\pm\beta}$ is (anti)parallel to $V^u$.
Again, $X_\beta$ is the vector field obtained by rotating $\gradient{h}$
over an angle $\beta$. 
So let $\Gamma^u_{\pm\beta}$ be the curve along which the vector field
$X_{\pm\beta}$ is (anti)parallel to the unstable eigenvector $V^u$.
Then the boundary of the unstable wedge 
is the union of the two curves $\Gamma^u_{\beta}$ and $\Gamma^u_{-\beta}$.
The curve $\Gamma^u_{\beta}$ is defined by the equation
\begin{equation}
  \label{eq:Gamma}
  \psi^u_\beta(x,y) := \det(V^u,X_\beta(x,y)) = 0.
\end{equation}
Obviously, the saddle point lies on $\Gamma^u_{\beta}$.
The function $\psi^u_{-\beta}$ is defined similarly.

Similarly, the boundary of the stable wedge is the union of
curves $\Gamma^s_{\pm\beta}$, along which the vector field
$X_{\pm\beta}$ is (anti)parallel to $V^s$.
The curves $\Gamma^s_{\pm\beta}$ are defined by the equation
\begin{equation*}
  \psi^s_{\pm\beta}(x,y) := \det(V^s,X_{\pm\beta}(x,y)) = 0.
\end{equation*}
The following technical result provides computable upper bounds for the
angle variation of the normals of the boundary curves of the stable and
unstable wedges.
\begin{lem}
  \label{lemma:smallAngleGradPsi}
  Let $\omega_1 \in (0,\tfrac{\pi}{4})$ (to be specified later), 
  let $a > 1$, and let $\I$ be such that
  Condition~$\operatorname{I}(a,\I)$ holds.
  Let $0 < \beta < \frac{\pi}{4}$ and $\delta > 0$ such that
  \begin{align}
    \sin\beta & \leq \frac{\sin\omega_1}{4a^2\sqrt{2}}\,
    \min(\bigl|\frac{\lambda_s}{\lambda_u}\bigr|,\bigl|\frac{\lambda_u}{\lambda_s}\bigr|),
    \label{eq:sinbeta} \\[1.2ex]
    \delta & \leq \frac{\sin\omega_1}{8a}\min(|\lambda_s|,|\lambda_u|),
    \label{eq:delta1} \\[1.2ex]
    \delta & \leq \frac{\tan\beta}{4}\min(|\lambda_s|,|\lambda_u|).
    \label{eq:delta2} 
  \end{align}
  If Condition~$\operatorname{II}(\delta,\I)$ also holds, then at any
  point of $N(\I)$
  \begin{equation}
    \label{eq:angleU}
    \frac{\pi}{2} - \omega_1 
    \leq \operatorname{angle}(\gradient{\psi^u_\beta},V^u) 
    <
    \frac{\pi}{2} 
    <
    \operatorname{angle}(\gradient{\psi^u_{-\beta}},V^u) \leq \frac{\pi}{2} +\omega_1.
  \end{equation}
  and
  \begin{equation}
    \label{eq:angleS}
    \frac{\pi}{2} - \omega_1 
    \leq \operatorname{angle}(\gradient{\psi^s_\beta},V^s) 
    <
    \frac{\pi}{2} 
    <
    \operatorname{angle}(\gradient{\psi^s_{-\beta}},V^s) \leq \frac{\pi}{2} +\omega_1.
  \end{equation}
  In particular, the angle variation of any of the gradients
  $\gradient{\psi}^u_{\pm\beta}$ and $\gradient{\psi}^s_{\pm\beta}$ over
  $N(\I)$ is less than $2\omega_1$.
\end{lem}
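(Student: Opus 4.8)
The plan is to reduce the whole lemma to one algebraic identity for the gradients of the defining functions $\psi^u_{\pm\beta},\psi^s_{\pm\beta}$, and then read off the angle bounds from Conditions~$\operatorname{I}(a,\I)$ and $\operatorname{II}(\delta,\I)$. First I would compute $\gradient{\psi^u_\beta}$ explicitly. Since $V^u$ is a \emph{constant} vector and $X_\beta=R_\beta\gradient{h}$, invariance of the determinant under rotation gives $\psi^u_\beta=\det(V^u,R_\beta\gradient{h})=\det(R_{-\beta}V^u,\gradient{h})$. Writing $W=R_{-\beta}V^u$, this is $\psi^u_\beta=W_1 h_y-W_2 h_x$, so $\gradient{\psi^u_\beta}=W_1\gradient{h_y}-W_2\gradient{h_x}=H\,(R_{\pi/2}W)$ with $H$ the Hessian~(\ref{eq:Hessian}). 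Since $R_{\pi/2}W=R_{\pi/2-\beta}V^u=R_{-\beta}V^s$ (using $R_{\pi/2}V^u=V^s$) and $\{V^u/\norm{V^u},\,V^s/\norm{V^u}\}$ is an orthonormal frame by~(\ref{eq:VsVu}), this yields
\begin{equation*}
  \gradient{\psi^u_\beta}=\sin\beta\;HV^u+\cos\beta\;HV^s,
\end{equation*}
and, replacing $\beta$ by $-\beta$ and using $R_{\pi/2}V^s=-V^u$, also $\gradient{\psi^u_{-\beta}}=-\sin\beta\,HV^u+\cos\beta\,HV^s$ and $\gradient{\psi^s_{\pm\beta}}=-\cos\beta\,HV^u\pm\sin\beta\,HV^s$.

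Second, I would estimate the two coordinates of $\gradient{\psi^u_\beta}$ in the frame $\{V^u,V^s\}$, using Condition~$\operatorname{II}(\delta,\I)$ for the inner products $\ip{HV^u}{V^u},\ip{HV^s}{V^s},\ip{HV^s}{V^u}$ and Condition~$\operatorname{I}(a,\I)$ for $\norm{HV^u},\norm{HV^s}$. The $V^u$-coordinate $\ip{\gradient{\psi^u_\beta}}{V^u}=\sin\beta\ip{HV^u}{V^u}+\cos\beta\ip{HV^s}{V^u}$ is \emph{positive}, because $\sin\beta\cdot\tfrac12\lambda_u>\cos\beta\cdot\delta$ by~(\ref{eq:delta2}), which already gives $\operatorname{angle}(\gradient{\psi^u_\beta},V^u)<\tfrac\pi2$; and the $V^s$-coordinate $\ip{\gradient{\psi^u_\beta}}{V^s}=\sin\beta\ip{HV^u}{V^s}+\cos\beta\ip{HV^s}{V^s}$ is bounded below by $c\,|\lambda_s|\norm{V^u}^2$ for a fixed constant $c>0$ (using~(\ref{eq:delta1})--(\ref{eq:delta2}) and $\beta<\tfrac\pi4$ to absorb the $\delta$-term), so $\norm{\gradient{\psi^u_\beta}}\ge c\,|\lambda_s|\norm{V^u}$. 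Since also $\ip{\gradient{\psi^u_\beta}}{V^u}\le(a\lambda_u\sin\beta+\delta)\norm{V^u}^2$, dividing gives $\cos\operatorname{angle}(\gradient{\psi^u_\beta},V^u)\le C\,(a\lambda_u\sin\beta+\delta)/|\lambda_s|$, and the smallness bounds~(\ref{eq:sinbeta}) on $\sin\beta$ and~(\ref{eq:delta1}) on $\delta$ are calibrated precisely so that this is at most $\sin\omega_1=\cos(\tfrac\pi2-\omega_1)$; hence $\operatorname{angle}(\gradient{\psi^u_\beta},V^u)\ge\tfrac\pi2-\omega_1$. This proves the left half of~(\ref{eq:angleU}); the right half is the identical computation for $\gradient{\psi^u_{-\beta}}$, whose $V^u$-coordinate has the opposite sign, giving an angle in $(\tfrac\pi2,\tfrac\pi2+\omega_1]$. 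Estimate~(\ref{eq:angleS}) follows verbatim with $u,s$ and $\lambda_u,\lambda_s$ interchanged; this symmetry is exactly why~(\ref{eq:delta1}) and~(\ref{eq:sinbeta}) carry $\min(|\lambda_s|,|\lambda_u|)$ and $\min(|\lambda_s/\lambda_u|,|\lambda_u/\lambda_s|)$.

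For the final assertion, note that at every point of $N(\I)$ the vector $\gradient{\psi^u_\beta}$ has a positive $V^s$-coordinate and makes unsigned angle in $[\tfrac\pi2-\omega_1,\tfrac\pi2)$ with the \emph{fixed} vector $V^u$; these two facts together confine its direction to a single cone of half-angle $\omega_1$ about $V^s$, and likewise for $\gradient{\psi^u_{-\beta}},\gradient{\psi^s_{\pm\beta}}$ about $\pm V^u$ or $\pm V^s$. Consequently the angle between the values of any one of these gradients at two points of $N(\I)$ is less than $2\omega_1$; in particular none of them vanishes on $N(\I)$, so $\Gamma^u_{\pm\beta},\Gamma^s_{\pm\beta}$ are regular curves, as needed in the sequel.

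The conceptual crux is the identity $\gradient{\psi^u_\beta}=\sin\beta\,HV^u+\cos\beta\,HV^s$ (a clean consequence of $\psi^u_\beta$ being affine in $\gradient{h}$ with constant coefficient vector); once it is in hand, everything is bookkeeping. The genuinely delicate part is verifying that the three hypotheses~(\ref{eq:sinbeta})--(\ref{eq:delta2}), with those exact constants, simultaneously push $\cos\operatorname{angle}(\cdot,\cdot)$ below $\sin\omega_1$ in all four cases. I would organize this uniformly by always bounding the dominant coordinate (the one supplying the lower bound on $\norm{\gradient\psi}$) from below by a fixed fraction of $\min(|\lambda_s|,|\lambda_u|)\norm{V^u}$, using $\beta<\tfrac\pi4$ to swallow the $\delta$-contribution, and bounding the other coordinate from above by a constant times $\bigl(a\max(|\lambda_s|,|\lambda_u|)\sin\beta+\delta\bigr)\norm{V^u}$, so that the ratio is controlled term-by-term by~(\ref{eq:sinbeta}) and~(\ref{eq:delta1}).
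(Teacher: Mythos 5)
Your proposal is essentially the paper's proof, with two minor variations that are worth flagging. The conceptual core is the same in both: derive the identity $\gradient{\psi^u_\beta}=\sin\beta\,HV^u+\cos\beta\,HV^s$, then show the $V^u$-component is positive (giving angle $<\tfrac\pi2$) and bound the cosine of the angle from above by $\sin\omega_1$ using an upper bound on $\ip{\gradient{\psi^u_\beta}}{V^u}$ and a lower bound on $\norm{\gradient{\psi^u_\beta}}$. You derive the identity via rotation-invariance of the determinant, $\det(V^u,R_\beta\gradient{h})=\det(R_{-\beta}V^u,\gradient{h})$, whereas the paper expands $X_\beta$ coordinatewise and collects coefficients; both are clean and land on the same expression.

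The one genuine difference is the lower bound on $\norm{\gradient{\psi^u_\beta}}$. The paper bounds the full norm directly via a Cauchy--Schwarz estimate, $\norm{\gradient{\psi^u_\beta}}^2\geq\cos^2\beta\,\bigl(\norm{HV^s}-\norm{HV^u}\tan\beta\bigr)^2$, and then invokes Condition~$\operatorname{I}(a,\I)$ to get $\norm{\gradient{\psi^u_\beta}}\geq\frac{|\lambda_s|}{2a\sqrt2}\norm{V^u}$. You instead bound the $V^s$-component from below using Condition~$\operatorname{II}(\delta,\I)$, $\ip{\gradient{\psi^u_\beta}}{V^s}\geq(\tfrac12|\lambda_s|\cos\beta-\delta\sin\beta)\norm{V^u}^2$, which also works but yields a numerically different (and for $a$ close to $1$ slightly weaker) constant. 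You then assert that (\ref{eq:sinbeta}) and (\ref{eq:delta1}) are ``calibrated precisely'' to give $\cos\operatorname{angle}\leq\sin\omega_1$, without actually carrying out the calibration. With the paper's constant the two summands each land neatly under $\tfrac12\sin\omega_1$; with yours the bookkeeping still closes, but you should verify it, since the margin is thinner and depends on also using $\cos\beta>\tfrac1{\sqrt2}$. This is the one place where your plan would need to be filled in with care before it could be called a complete proof.

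One thing you get for free that the paper leaves terse: you explicitly observe that the angle bound $\operatorname{angle}(\gradient{\psi^u_\beta},V^u)\in[\tfrac\pi2-\omega_1,\tfrac\pi2)$ alone confines $\gradient{\psi^u_\beta}$ to the union of two antipodal cones around $\pm V^s$, and it is the positivity of the $V^s$-component that pins it to a single cone, which is what the final ``angle variation $<2\omega_1$'' conclusion actually requires. Your lower-bound route via the $V^s$-component thus does double duty. You also write out the companion identities for $\gradient{\psi^u_{-\beta}}$ and $\gradient{\psi^s_{\pm\beta}}$, which the paper omits, appealing to symmetry; your remark that the $\min(|\lambda_s|,|\lambda_u|)$ and $\min(|\lambda_s/\lambda_u|,|\lambda_u/\lambda_s|)$ in the hypotheses exist precisely to make the $u\leftrightarrow s$ swap painless is accurate and a useful reading of the hypotheses.
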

\begin{proof}
  We only show that the angle variation of $\gradient{\psi}^u_{\beta}$ over
  $N(\I)$ is less than $2\omega_1$.
  Since 
  \begin{equation*}
    X_\beta 
    = 
    h_x\,  
    \begin{pmatrix}
      \cos\beta \\
      \sin\beta
    \end{pmatrix}
    +
    h_y\,  
    \begin{pmatrix}
      -\sin\beta \\ 
      \cos\beta
    \end{pmatrix},
  \end{equation*}
  the function $\psi^u_\beta$ satisfies
  $
  \psi^u_\beta = A_\beta\,h_x + B_\beta\,h_y,
  $
  where
  \begin{equation*}
    A_\beta = 
    \det(V^u,
    \begin{pmatrix}
      \cos\beta \\ 
      \sin\beta
    \end{pmatrix}
    )
    \text{~~and~~}
    B_\beta = 
    \det(V^u,
    \begin{pmatrix}
      -\sin\beta \\ 
      \cos\beta
    \end{pmatrix}
    ),
  \end{equation*}
  so
  \begin{equation*}
    \begin{pmatrix}
      A_\beta \\
      B_\beta
    \end{pmatrix}
    =
    \begin{pmatrix}
      \cos\beta & \sin\beta \\
      -\sin\beta & \cos\beta
    \end{pmatrix}
    \begin{pmatrix}
      -V^u_2 \\
      V^u_1
    \end{pmatrix}
    =
    (\cos\beta)\,V^s + (\sin\beta)\,V^u.
  \end{equation*}
  Therefore,
  \begin{equation}
    \label{eq:gradpsi}
    \gradient{\psi^u_\beta} = 
    \begin{pmatrix}
      h_{xx} & h_{xy} \\[1.2ex]
      h_{xy} & h_{yy}
    \end{pmatrix}
    \begin{pmatrix}
      A_\beta \\
      B_\beta
    \end{pmatrix}
    =
    \cos\beta\, (HV^s) +\sin\beta\, (HV^u).
  \end{equation}
  Condition $\operatorname{I}(a,\I)$ implies that
  $HV^u \neq 0 \text{~~and~~} HV^s \neq 0$, and $\{HV^u, HV^s\}$ are independent vectors,
  at all points of $N(\I)$, so the gradient of $\psi^u_\beta$ is nonzero at every
  point of $N(\I)$, so $\Gamma^u_\beta$ is a regular curve.

  \noindent
  Expression (\ref{eq:gradpsi}) for $\gradient{\psi^u_\beta}$ implies that
  \begin{equation*}
    \norm{\gradient{\psi^u_\beta}}^2 =
    \cos^2\beta\,\norm{HV^s}^2 + 2\sin\beta\cos\beta\,\ip{HV^s}{HV^u} +
    \sin^2\beta\,\norm{HV^u}^2.
  \end{equation*}
  Using the Cauchy-Schwarz inequality $|\ip{HV^s}{HV^u}| \leq
  \norm{HV^s}\cdot\norm{HV^u}$ and the fact that $\beta > 0$ we get
  \begin{align}
    \label{eq:normgrpsi2}
    \norm{\gradient{\psi^u_\beta}}^2 
    &\geq 
    \cos^2\beta\,\norm{HV^s}^2 - 2\sin\beta\cos\beta\,\norm{HV^s}\cdot\norm{HV^u} \nonumber \\
    & {} +\sin^2\beta\,\norm{HV^u}^2 \nonumber \\
    &=
    \cos^2\beta\,
    \bigl( \norm{HV^s} - \norm{HV^u}\,\tan\beta\bigr)^2 
  \end{align}
  Since
  $\sin\beta \leq \dfrac{\sin\omega_1}{4a^2\sqrt{2}}\,\bigl|\dfrac{\lambda_s}{\lambda_u}\bigr|$
  and $0 < \beta < \frac{\pi}{4}$, it follows
  from Condition~$\operatorname{I}(a,\I)$ that
  \begin{equation*}
    \norm{HV^u}\,\tan\beta
    \leq     
    a\lambda_u\,\norm{V^u}\,\frac{\sin\beta}{\sqrt{2}}
    <
    \frac{|\lambda_s|}{2a}\,\norm{V^u}.
  \end{equation*}
  Using Condition~I again we get
  \begin{equation*}
    \norm{HV^s} - \norm{HV^u}\,\tan\beta
    \geq 
    \dfrac{|\lambda_s|}{a}\,\cdot\norm{V^u} - \frac{|\lambda_s|}{2a}\,\norm{V^u}
    =
    \frac{|\lambda_s|}{2a}\,\norm{V^u}.
  \end{equation*}
  In view of (\ref{eq:normgrpsi2}) we get, using $\cos\beta \geq \frac{1}{\sqrt{2}}$:
  \begin{equation}
    \label{eq:normgrpsi}
    \norm{\gradient{\psi^u_\beta}}
    \geq 
    \frac{|\lambda_s|}{2a\sqrt{2}}\,\norm{V^u}.
  \end{equation}
  Expression (\ref{eq:gradpsi}) for $\gradient{\psi^u_\beta}$ also implies that
  \begin{align*}
    \ip{\gradient{\psi^u_\beta}}{V^u}
    &=  
    \cos\beta\,\ip{HV^s}{V^u} + \sin\beta\,\ip{HV^u}{V^u} \\[1.2ex]
    &=
    \cos\beta\ip{HV^u}{V^u}\,(\frac{\ip{HV^s}{V^u}}{\ip{HV^u}{V^u}}+\tan\beta).
  \end{align*}
  Condition II and (\ref{eq:delta2}) imply
  \begin{equation*}
    \bigl|\frac{\ip{HV^s}{V^u}}{\ip{HV^u}{V^u}}\bigr|
    \leq 
    \frac{2\delta}{\lambda_u} 
    \leq 
    \tfrac{1}{2}\tan\beta.
  \end{equation*}
  Since $\beta > 0$, this implies $\ip{\gradient{\psi^u_\beta}}{V^u} > 0$ on $N(\I)$.

  \medskip\noindent
  According to Condition~II we have $|\ip{HV^s}{V^u}| \leq \delta\,\norm{V^u}^2$,
  whereas the Cauchy-Schwarz inequality implies
  $|\ip{HV^u}{V^u}| \leq \norm{HV^u}\cdot\norm{V^u} \leq a \lambda_u\norm{V^u}^2$.
  Therefore,
  \begin{equation*}
    0 \leq
    \ip{\gradient{\psi^u_\beta}}{V^u}
    \leq
    (\delta\cos\beta+a\lambda_u\sin\beta)\,\norm{V^u}^2.
  \end{equation*}
  Together with (\ref{eq:normgrpsi}) this implies
  \begin{equation*}
    \ip
    {\frac{\gradient{\psi^u_\beta}}{\norm{\gradient{\psi^u_\beta}}}}%
    {\frac{V^u}{\norm{V^u}}}
    \leq
    \frac{2a\sqrt{2}}{|\lambda_s|}\,(\delta\cos\beta+a\lambda_u\sin\beta)
  \end{equation*}
  Given the upper bounds (\ref{eq:delta1}) for $\delta$ and
  (\ref{eq:sinbeta}) for $\sin\beta$, we get
  \begin{equation*}
    \frac{2a\sqrt{2}}{|\lambda_s|}\,\delta\cos\beta
    \leq 
    \frac{4a}{|\lambda_s|}\,\delta 
    \leq 
    \tfrac{1}{2}\,\sin\omega_1 \text{~~and~~}
    \frac{2a\sqrt{2}}{|\lambda_s|}\,a\lambda_u\sin\beta
    \leq
    \tfrac{1}{2}\,\sin\omega_1.
  \end{equation*}
  Therefore, 
  \begin{equation*}
    0
    <
    \ip
    {\frac{\gradient{\psi^u_\beta}}{\norm{\gradient{\psi^u_\beta}}}}%
    {\frac{V^u}{\norm{V^u}}}
    \leq
    \sin\omega_1 = \cos(\tfrac{\pi}{2}-\omega_1)
  \end{equation*}
  At all points of $N(\I)$ we then have
  \begin{equation*}
    \frac{\pi}{2} - \omega_1 \leq \operatorname{angle}(\gradient{\psi^u_\beta},V^u) < \frac{\pi}{2}.
  \end{equation*}
  Since $V^u$ is constant, the angle variation of $\gradient{\psi^u_\beta}$ over
  $N(\I)$ does not exceed $2 \omega_1$.
\end{proof}

The main result of this subsection states that, under suitable
conditions, the intersection of the boundary of a saddle box and the
stable and unstable wedges can be computed. Moreover, at all points of
these intersections the gradient vector field is transversal to the
boundary of the saddle box, and, even stronger, at these points there is
a computable positive lower bound for the angle of the gradient vector
field and the boundary of the saddle box.
\begin{cor}
  \label{cor:wedge}
  Let $a$ and $\omega_1$ be constants such that $a > 1$, and 
  $\omega_1 = \tfrac{1}{3}\,\arctan\tfrac{1}{2}$.
  Let $\beta \in (0,\omega_1)$ and $\delta >0$ such that
  (\ref{eq:sinbeta}), (\ref{eq:delta1}) and (\ref{eq:delta2}) hold.

  If $\I$ is a saddle box with concentric surrounding box
  $N(\I)$ satisfying Condition $I(a,\I)$ and Condition
  $\operatorname{II}(\delta,\I)$,
  then 
  \begin{enumerate}
  \item
    The saddle point is the only common point of the stable wedge
    $C^u_\beta$ and the unstable wedge $C^s_\beta$.
  \item
    The gradient vector field $\gradient{h}$ is transversal at points on
    the boundary of these wedges, different from the saddle point:
    on the boundary of the unstable wedge it points inward, except at
    the saddle point, and on the boundary of the stable wedge it points
    outward, except at the saddle point.
  \item 
    The unstable wedge $C^u_\beta$ contains the unstable separatrices of
    the saddle point, and the stable wedge $C^s_\beta$ contains the
    stable separatrices.
  \item 
    The unstable wedge intersects the boundary of $N(\I)$ in two
    intervals, called the unstable intervals. Similarly, the stable
    wedge intersects the boundary of $N(\I)$ in two intervals, called
    the stable intervals. These four intervals are disjoint, and
    the unstable and stable intervals occur alternatingly on the
    boundary of $N(\I)$. 
    At each point of a stable or unstable interval the (unsigned) angle
    between $\gradient{h}$ and the boundary edge containing this point
    is at least $\omega_1$.
    Moreover, there are computable isolating intervals for each stable
    and unstable interval.
  \end{enumerate}
\end{cor}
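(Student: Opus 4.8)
The plan is to derive Corollary~\ref{cor:wedge} as a packaging of Lemma~\ref{lemma:smallAngleGradPsi} together with the linear-algebra facts recorded in (\ref{eq:VsVu}).  First I would fix the constants: set $\omega_1 = \tfrac13\arctan\tfrac12$ so that $3\omega_1 = \arctan\tfrac12 < \tfrac\pi6$, and choose $\beta\in(0,\omega_1)$ and $\delta>0$ so that the three hypotheses (\ref{eq:sinbeta})--(\ref{eq:delta2}) of the lemma hold; this is exactly the regime in which $C^u_\beta$ and $C^s_\beta$ are genuine wedges.  Part~1 is immediate: by orthogonality of $V^u$ and $V^s$ and $\beta<\tfrac\pi8$, the cone of directions within angle $\beta$ of $V^u$ and the cone within angle $\beta$ of $V^s$ are disjoint except at the apex, and the apex is the saddle point because $\gradient h$ vanishes only there in $N(\I)$ (the box carries a unique singularity).

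For part~2, I would argue on $\Gamma^u_\beta$ and $\Gamma^u_{-\beta}$ separately.  Along $\Gamma^u_\beta$ the vector $X_\beta$ is (anti)parallel to $V^u$, i.e.\ $\gradient h$ makes signed angle $-\beta$ with $\pm V^u$; Lemma~\ref{lemma:smallAngleGradPsi}, inequality (\ref{eq:angleU}), says the outward normal $\gradient{\psi^u_\beta}$ of this boundary curve makes an angle in $[\tfrac\pi2-\omega_1,\tfrac\pi2)$ with $V^u$.  Combining these two angle estimates (and tracking the sign/orientation conventions of the figure) gives $\ip{\gradient h}{\gradient{\psi^u_\beta}} < 0$ at every non-apex point, i.e.\ $\gradient h$ points strictly into $C^u_\beta$ there; the $-\beta$ branch is symmetric, and the stable wedge is the same computation with the roles of $V^s,V^u$ and "inward/outward" swapped (equivalently, apply the unstable case to $-h$).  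Part~3 then follows from part~2 by a standard flow-box / invariant-region argument: the unstable separatrices are the branches of $W^u$ of the saddle leaving the apex; since $\gradient h$ points inward everywhere on $\partial C^u_\beta\setminus\{\text{saddle}\}$, and the wedge is a topological sector with the saddle on its boundary, the local unstable manifold cannot cross the boundary, hence stays in $C^u_\beta$ until it exits $N(\I)$ through $\partial N(\I)$; dually for the stable separatrices, whose curves enter against an outward-pointing field.

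Part~4 is where the real work is.  I would first show each wedge meets $\partial N(\I)$ in a single arc (interval): by Lemma~\ref{lemma:smallAngleGradPsi} the normals of $\Gamma^u_{\pm\beta}$ vary by less than $2\omega_1$ over $N(\I)$, so each of $\Gamma^u_\beta,\Gamma^u_{-\beta}$ is "nearly linear" and crosses $\partial N(\I)$ transversally in exactly two points — this is essentially the same transversality-via-small-normal-variation reasoning used in Lemma~\ref{lemma:isolatingIntervals} and Lemma~\ref{lemma:twoIntersections}, applied now to $\psi^u_{\pm\beta}$ instead of $f$, so I can invoke those lemmas after checking that the $C_2$-type bound $2\omega_1$ plays the role there played by $\tfrac\pi{30}$.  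The four resulting intervals are disjoint and alternate around $\partial N(\I)$ because the four boundary rays emanate in the cyclic order $V^u,V^s,-V^u,-V^s$ (orthogonality of $V^u,V^s$ again), and $\beta$ is small enough that the wedge angular widths $2\beta < 2\omega_1 < \tfrac\pi2$ keep them separated.  For the angle lower bound $\omega_1$ between $\gradient h$ and the boundary edge of $\I$ at a wedge-intersection point: on that piece of $\partial N(\I)$ the field $\gradient h$ is within $\beta$ of $\pm V^u$ (or $\pm V^s$), and I bound below the angle between $\pm V^u$ (resp. $\pm V^s$) and the edge direction using the known approximate eigenvector directions together with the transversality margin, then subtract $\beta<\omega_1$; choosing $\omega_1=\tfrac13\arctan\tfrac12$ leaves the stated slack.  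Finally, "computable isolating intervals" follows because all of $V^u,V^s,\lambda_u,\lambda_s$ are obtained from an interval evaluation of $H$ at the center $(x_0,y_0)$, and the endpoints of the wedge-boundary arcs on $\partial N(\I)$ can be enclosed by interval evaluation of $\psi^u_{\pm\beta},\psi^s_{\pm\beta}$ and then rounded outward to dyadic endpoints, exactly as the $f$- and $g$-intervals were handled in Section~\ref{sec:eqSolving}.

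The main obstacle I anticipate is bookkeeping the signs and orientations in part~2: one must be careful that (\ref{eq:angleU}) is phrased as an \emph{unsigned} constraint split across the $\beta$ and $-\beta$ branches, and translate it correctly into the statement that $\gradient h$ points into $C^u_\beta$ on one boundary curve and into $C^u_\beta$ on the other — the two curves have opposite outward normals relative to the wedge, so the single inequality from the lemma must be read with opposite orientations on the two branches.  Getting the inequality chain for the edge-angle bound in part~4 to close with the specific value $\omega_1=\tfrac13\arctan\tfrac12$ is the other place where a routine but delicate estimate is needed; everything else is assembly of already-established facts.
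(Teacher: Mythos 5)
Your proposal follows essentially the same route as the paper's proof: part~1 by orthogonality of $V^u,V^s$; part~2 by combining Lemma~\ref{lemma:smallAngleGradPsi}'s bound $\ip{\gradient{\psi^u_\beta}}{V^u}>0$ with the (anti)parallelism of $X_\beta$ and $V^u$ along $\Gamma^u_\beta$; part~3 by forward/backward invariance; and part~4 by the same transversality-via-small-normal-variation reasoning as in Section~\ref{sec:eqSolving}, with the specific chain $\omega_0=\arctan\tfrac12=3\omega_1 \Rightarrow \operatorname{angle}(V^u,e)\geq 2\omega_1 \Rightarrow \operatorname{angle}(\gradient h,e)\geq 2\omega_1-\beta\geq\omega_1$. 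One caution in your sketch of part~2: $\gradient{\psi^u_\beta}$ is actually the \emph{inward} normal of $C^u_\beta$ on the branch of $\Gamma^u_\beta$ where $X_\beta\parallel +V^u$ (the interior there is $\psi^u_\beta>0$) and the \emph{outward} normal on the other branch, so the uniform claim $\ip{\gradient h}{\gradient{\psi^u_\beta}}<0$ cannot be right as stated; the paper's phrasing via ``$X_\beta\parallel\pm V^u$'' together with ``$\ip{\gradient{\psi^u_\beta}}{V^u}>0$'' handles both branches at once because the sign of the normal and the sign of $X_\beta$ flip together.
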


\begin{proof}
1.~If $p \in C^u_\beta \cap C^s_\beta$, then
$\gradient{h}(p)$ makes an angle $\beta \in (0,\frac{\pi}{4})$ with both
$V^u$ and $V^s$. Therefore, $\gradient{h}(p) = 0$, since these vectors
are orthogonal. Hence, $p$ is a singular point of $\gradient{h}$ inside
$N(\I)$, which is the saddle point.
\\[1.2ex]
2.~Recall from Lemma~\ref{lemma:smallAngleGradPsi} that
$\ip{\gradient{\psi^u_{\beta}}}{V^u}$ is positive on $N(\I)$.  Let $p$ be the saddle
point of $\gradient{h}$ in $\I$.  Since $X_\beta$ is parallel to $V^u$ on one
component of $\Gamma^u_\beta\setminus\{p\}$, and parallel to $-V^u$ on the other
component, it follows that $X_\beta$ is pointing into the unstable wedge along both
components. See again Figure~\ref{fig:wedge}.  Since $\gradient{h}$ is
obtained by rotating $X_\beta$ over $-\beta$, and $X_{-\beta}$ over $\beta$,
also $\gradient{h}$ is pointing into the unstable wedge.  
Similarly, $\ip{\gradient{\psi^u_{-\beta}}}{V^u}$
is negative along both components of $\Gamma^u_{-\beta}\setminus\{p\}$, so
$\gradient{h}$ is also pointing into the unstable wedge along each of these
components.

\noindent
A similar argument shows that $\gradient{h}$ is pointing outward
along each of the boundary components of the stable wedge $C^s_\beta$,
except at the saddle point.
\\[1.2ex]
3.~The second part of the lemma implies that the unstable wedge is
forward invariant under the flow of the gradient vector field
$\gradient{h}$. In particular, it contains the unstable separatrices of
the saddle point. 
Similarly, the stable wedge $C^s_\beta$ is backward invariant, so it contains the
stable separatrices of the saddle point.
\\[1.2ex]
\begin{figure}[h]
  \centering
  \psfrag{p}{$p$}
  \psfrag{q}{$q$}
  \psfrag{op}{$\overline{p}$}
  \psfrag{oq}{$\overline{q}$}
  \psfrag{r}{$r$}
  \psfrag{s}{$s$}
  \psfrag{beta}{$\geq\omega_0$}
  \psfrag{grf}{$\gradient{\psi^u_\beta}(s)$}
  \psfrag{Zf}{$\Gamma^u_{\beta}$}
  \psfrag{I}{$\partial\I$}
  \psfrag{NI}{$\partial N(\I)$}
  \psfrag{d4}{$d$}
  \psfrag{d34}{$2d$}
  \includegraphics[width=0.45\textwidth]{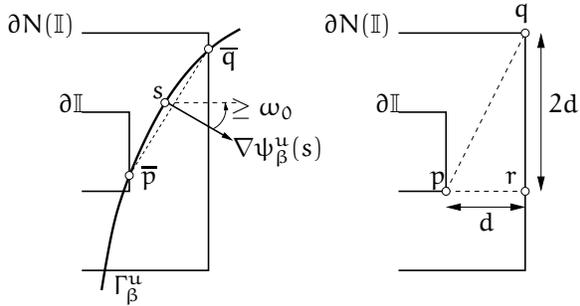}
  \caption{Lower bound on angle of intersection of $\Gamma^u_\beta$ and the boundary of the
    surrounding box $N(\I)$.}
  \label{fig:wedgeAngle}
\end{figure}
4.~Suppose $\Gamma^u_{\beta}$ intersects an edge $e$ of the surrounding
box at a point $\ol{q}$, see Figure~\ref{fig:wedgeAngle}.
We first show that the angle of $\gradient{h}(\ol{q})$ and $e$ is bounded
away from zero. 
To see this, observe that there is a point $s \in \Gamma^u_\beta$ at
which $\gradient{\psi^u_\beta}(s)$ is perpendicular to $\ol{p}\,\ol{q}$.
Therefore, the angle between $\gradient{\psi^u_\beta}(s)$ and the normal
of $e$ is at least  $\omega_0$, where 
$\omega_0 = \arctan\tfrac{1}{2} = 3 \omega_1$.
\\[1.2ex]
The angle between $V^u$ and $\gradient{\psi^u_\beta}$ lies in the
interval $[\tfrac{\pi}{2}-\omega_1,\tfrac{\pi}{2})$,
cf~Lemma~\ref{lemma:smallAngleGradPsi}, so the angle between  $V^u$ and $e$
is at least $\omega_0-\omega_1 = 2 \omega_1$.

At any point of $C^u_\beta$ the angle between $\gradient{h}$ and
$V^u$ is at most $\beta$---by the definition of $C^u_\beta$---so at any
point of $\in C^u_\beta \cap e$ the angle between $\gradient{h}$ and $e$
is at least  $2\omega_1-\beta \geq \omega_1$.

\medskip
To find isolating intervals for the intersection of the stable and
unstable wedges with the boundary of the surrounding box
$N(\I)$, we compute isolating intervals for the intersection of each
of the four curves $\psi^u_{\pm\beta} = 0, \psi^s_{\pm\beta} = 0$
with this boudary.
The normal to each of the curves $\psi^u_{\pm \beta}=0$ makes an angle of at
least $\frac{\pi}{2}-4\omega_1$ with each of the curves $\psi^s_{\pm
  \beta}=0$. This follows from (\ref{eq:angleU}) and (\ref{eq:angleS}),
and the fact that $V^u$ and $V^s$ are perpendicular.
Since $\omega_1 = \tfrac{1}{3}\,\arctan\tfrac{1}{2} < \tfrac{\pi}{20}$,
so the angle between each of the curves $\psi^u_{\pm \beta}=0$ 
and each of the curves $\psi^s_{\pm\beta}=0$ is at least
$\frac{\pi}{2}-4\tfrac{\pi}{20} = \tfrac{3\pi}{10}$, which is bounded
away from zero.
Therefore, the method of Section~\ref{sec:eqSolving} provides such isolating
intervals.
\end{proof}

As will become clear in the certified construction of the MS-complex, we
need to be able to provide certified separatrix intervals of arbitrarily
small width, without refining the saddle box:
\begin{lem}
  \label{lemma:smallSepIntervals}
  Let $\I$ be a separatrix box satisfying the conditions of
  Corollary~\ref{cor:wedge}.
  Then the isolating separatrix intervals in the boundary of
  $N(\I)$ can be made arbitrarily small.
\end{lem}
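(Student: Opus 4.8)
The plan is to reduce the statement to a purely local construction near a hyperbolic saddle, followed by one propagation step carried out with the guaranteed Euler bounds of Appendix~\ref{sec:math}. Observe first that one cannot simply let $\beta\to0$ in Corollary~\ref{cor:wedge} while keeping $\I$ fixed: inequalities \eqref{eq:delta1}--\eqref{eq:delta2} couple $\delta$ with $\tan\beta$, and Condition~$\operatorname{II}(\delta,\I)$ --- in particular $|\ip{HV^s}{V^u}|\le\delta\,\norm{V^u}^2$ on $N(\I)$ --- then forces $\operatorname{diam} N(\I)$ to shrink with $\delta$. So a \emph{thin} wedge exists only in a \emph{small} box around the saddle, and the task is to transport the thin interval it produces out to $\partial N(\I)$ without disturbing $\I$. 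I would therefore (i) localize the saddle inside $\I$; (ii) build a thin wedge around it, obtaining arbitrarily small separatrix intervals on the boundary of a tiny surrounding box; and (iii) enclose the separatrix, between that tiny box and $\partial N(\I)$, inside a strip whose width is amplified by only a computable factor over this bounded region.

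In detail: using sign evaluations of $h_x,h_y$ and the interval tests of Section~\ref{sec:eqSolving}, bisect to obtain a dyadic sub-box $\I_\eta$ of edge length $\eta$ that still isolates the saddle $p$; since $p$ lies in the interior of $N(\I)$, we have $N(\I_\eta)\subset N(\I)$ once $\eta$ is small. Recompute the Hessian eigenvectors $V^u,V^s$ at the center of $\I_\eta$. As $\ip{HV^s}{V^u}$ vanishes at that center and $H$ is Lipschitz near $p$, we get $\sup_{N(\I_\eta)}|\ip{HV^s}{V^u}|/\norm{V^u}^2=O(\eta)$, while Condition~$\operatorname{I}(a,\I_\eta)$ and the lower bounds of Condition~$\operatorname{II}$ hold for small $\eta$ with $a$ fixed; hence one may take $\delta=O(\eta)$ and a wedge angle $\beta_\eta$ of order $\eta$ for which \eqref{eq:sinbeta}--\eqref{eq:delta2} all hold. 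Corollary~\ref{cor:wedge} applied to $\I_\eta$ now yields wedges $C^u_{\beta_\eta},C^s_{\beta_\eta}$ containing the unstable, resp.\ stable, separatrices, with $\gradient h$ transversal to their boundaries, meeting $\partial N(\I_\eta)$ in four alternating intervals each of width $O(\eta\tan\beta_\eta)=O(\eta^2)$ and each carrying exactly one separatrix crossing.

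Finally, let $j^u\subset\partial N(\I_\eta)$ be the thin interval through which the unstable separatrix leaves $N(\I_\eta)$, and $j^s$ the analogous backward interval. By Corollary~\ref{cor:wedge} applied to the \emph{original} box $\I$, this separatrix remains in the crude wedge $C^u_\beta\subset N(\I)$ and exits through $\partial N(\I)$, so between $\partial N(\I_\eta)$ and $\partial N(\I)$ it is a single trajectory arc lying in the compact region $C^u_\beta$ outside $N(\I_\eta)$; there $\norm{\gradient h}$ is bounded below and $\gradient h/\norm{\gradient h}$ has a computable Lipschitz constant. Enclosing this arc by the guaranteed Euler method of Appendix~\ref{sec:math}, started from $j^u$, gives a strip from $\partial N(\I_\eta)$ to $\partial N(\I)$ whose trace $J^u$ on $\partial N(\I)$ satisfies $\operatorname{width}(J^u)\le C\cdot\operatorname{width}(j^u)$, with $C$ the pessimistic-but-computable exponential amplification factor over this bounded region; likewise for $J^s$. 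Transversality of $\gradient h$ to $\partial N(\I)$ at the crossings, and the alternating cyclic order of the four intervals around $\partial N(\I)$, are inherited from the crude wedges of Corollary~\ref{cor:wedge}. Since $\operatorname{width}(j^u)=O(\eta^2)\to0$ as $\eta\to0$, also $\operatorname{width}(J^u)\to0$, and similarly for the remaining three intervals. The main obstacle is precisely step (iii): producing the amplification bound $C$ --- i.e.\ controlling the transversal spreading of the enclosure as it is carried from the tiny box out to $\partial N(\I)$, where the normal-hyperbolic behaviour of the saddle and the certified Euler bounds are both used --- together with the bookkeeping that keeps $N(\I_\eta)\subset N(\I)$ and preserves the cyclic order of the four narrowed intervals. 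A fuller treatment is sketched in Appendix~\ref{sec:smallSepIntervals}.
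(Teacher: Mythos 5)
Your strategy is the right one in outline and agrees with the paper's in spirit: zoom in on the saddle to manufacture a thin interval near it, then transport that interval out to $\partial N(\I)$. The localisation step is sound; the observation that \eqref{eq:sinbeta}--\eqref{eq:delta2} couple $\beta$ to $\delta$ and hence to the box size is correct, and so is the estimate $\operatorname{width}(j^u)=O(\eta\tan\beta_\eta)=O(\eta^2)$. The gap is in step (iii), the propagation.

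You claim a \emph{uniform} (in $\eta$) amplification factor $C$ from the Euler/Fundamental-Inequality bounds of~\ref{sec:math}, applied on the compact region $C^u_\beta\setminus N(\I_\eta)$. But that region accumulates at the saddle as $\eta\to 0$, so $\norm{\gradient h}$ is \emph{not} bounded below independently of $\eta$ and the Lipschitz constant of $\gradient h/\norm{\gradient h}$ blows up. Concretely, in the model case $h=\tfrac12\lambda_u x^2+\tfrac12\lambda_s y^2$ one has $F(x,y)=\Lambda y/x$ with $\Lambda=\lambda_s/\lambda_u<0$, and $\max|\partial F/\partial y|$ over $[b_n,b]\times[\ldots]$ is $|\Lambda|/b_n\sim|\Lambda|/\eta$. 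The Fundamental Inequality then yields $C(\eta)\sim e^{|\Lambda|b/\eta}$, and the product $\eta^2\cdot e^{|\Lambda|b/\eta}$ diverges. The two-sided bound $|\partial F/\partial y|\le C$ throws away the sign; the actual flow \emph{contracts} the transverse width along the unstable manifold (in the model, $\delta(\xi)=ab^\Lambda\xi^{K}$ with $K=1-\Lambda>1$), and it is exactly this one-sided, hyperbolic contraction that the paper exploits. The argument sketched in~\ref{sec:smallSepIntervals} keeps the original wedge angle $\beta$ fixed, takes the intersection points $q^\pm_n$ of $\Gamma^u_{\pm\beta}$ with $x=b_n\to s$, flows them forward by the \emph{exact} flow (and only approximates with fences of shrinking angle $\vartheta_n$), and obtains $\operatorname{width}(\J_{n+2})\le 2^{-K}\operatorname{width}(\J_n)$; the general-case justification is deferred to hyperbolic normal-form estimates~\cite{CoddLev55,ch-mbt-82}. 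To repair your step (iii) you would need a one-sided (signed) Gronwall or normal-form estimate that captures this contraction, rather than the unsigned Euler error bound, which is structurally unable to do so near the singularity.
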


\noindent
The proof of this result is rather technical. For a sketch we refer to 
\ref{sec:smallSepIntervals}.

\subsection{Refining boxes for maxima and minima}
\label{sec:refinedMinMax}
To construct the MS-complex, the algorithm needs to determine when an
unstable (stable) separatrix will have a given maximum (minimum) of $h$
as its $\omega$-limit ($\alpha$-limit).
For each maximum (minimum) the algorithm determines a certified box
such that the gradient vector field points inward (outward) on the
boundary of the box.
Unfortunately, we cannot always choose an axis-aligned box, as will
become clear from the following example.

Let $h(x,y) = -5x^2-4xy-y^2$, then the origin is a sink of the gradient
vector field
\begin{equation*}
  \gradient{h}(x,y) = 
  \begin{pmatrix}
    -10x-4y \\
    -4x - 2y
  \end{pmatrix}.
\end{equation*}
This vector field is horizontal along the line $y = -2x$, which
intersects the horizontal edges of every axis aligned box centered at
the sink $(0,0)$. In other words, the vector field is not transversal on
the boundary of any such box.

However, there is a box aligned with (approximations) of a pair of
eigenvectors of the linear part of the gradient vector field at (or,
near) its singular point, for which the vector field is transversal 
to the boundary.
To see this, we refine the sink-box to obtain three concentric 
axis aligned boxes $\J_1 \subset \J_2 \subset \J_3$, such that 
\\[1.2ex]
(i) the edge length of $\J_i$, $i=2,3$, is three times the edge length of
$\J_{i-1}$, and
\\[1.2ex]
(ii) the sink is contained in the inner box $\J_1$.
See also Figure~\ref{fig:sinkbox}.
\begin{figure}[h]
  \psfrag{J1}{\tiny$\J_1$}
  \psfrag{J2}{$\J_2$}
  \psfrag{J3}{$\J_3$}
  \psfrag{I1}{$\I_1$}
  \psfrag{I2}{$\I_2$}
  \psfrag{psi2=0}{$\psi_2=0$}
  \psfrag{V1}{$V_1$}
  \psfrag{V2}{$V_2$}
  \centering
  \includegraphics[width=0.45\textwidth]{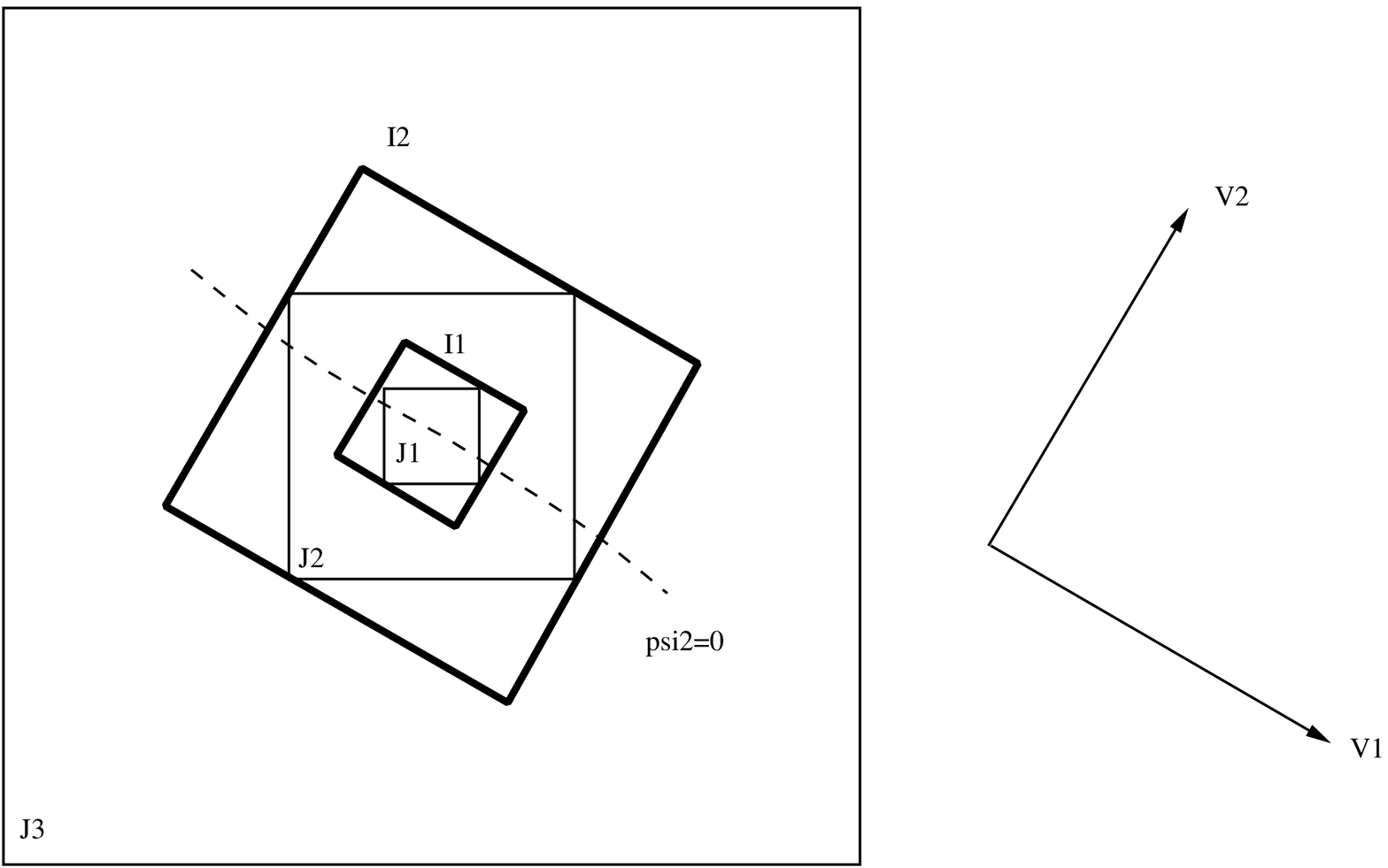}
  \caption{Construction of a sinkbox.}
  \label{fig:sinkbox}
\end{figure}
Moreover, let $V_1$ and $V_2$ be the (orthogonal) eigenvectors of the
Hessian matrix $H^0$ at the center of the boxes. These eigenvectors,
corresponding to the eigenvectors $\lambda_1$ and $\lambda_2$, are
computed as in Section~\ref{sec:refinedSaddle}, cf (\ref{eq:eigenvectors}).

\noindent
We require that 
\\[1.2ex]
(iii) the gradients of the two functions $\psi_1$ and $\psi_2$, defined by
\begin{equation*}
  \psi_i(x,y) = \ip{\gradient{h}(x,y)}{V_i}, \quad i = 1,2,
\end{equation*}
have small angle variation over the outer box $\J_3$.
This condition is made precise in Lemma~\ref{lemma:sinkbox} below.
Note that 
\begin{equation}
  \label{eq:grad-psi12}
  \gradient{\psi_i}(x,y) = H(x,y)V_i,
\end{equation}
where $H(x,y)$ is again the Hessian matrix of $h$ at $(x,y)$. 
Since this matrix is non-singular, we can find a triple of boxes
$\J_1 \subset \J_2 \subset \J_3$, satisfying conditions (i), (ii) and
(iii), such that $H(x,y)$ is nearly constant over the outer box
(again, this is made precise in Lemma~\ref{lemma:sinkbox}).
In particular, $\gradient{\psi_i}$ is nearly parallel to $V_i$, since
$H^0V_i = \lambda_iV_i$.
Now construct boxes $\I_1$ and $\I_2$, which are the smallest boxes
enclosing $\J_1$ and $\J_2$, respectively, with edges parallel to $V^1$
or $V^2$. 

\begin{lem}
  \label{lemma:sinkbox}
  Suppose on the outer box $\J_3$ the following conditions hold:
  \begin{enumerate}
  \item 
    $\norm{HV_i} \geq \tfrac{1}{2}|\lambda_i|\cdot\norm{V_i}$, for $i=1,2$;
  \item 
    $|\ip{HV_1}{V_2}| =|\ip{HV_2}{V_1}| \leq \tfrac{1}{4}\norm{V_1}^2\arctan\tfrac{1}{2}$.
  \end{enumerate}
  Then the gradient vector field is transversal to the boundary of $\I_2$.
\end{lem}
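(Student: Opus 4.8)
The plan is to show that along each edge of $\I_2$ the gradient $\gradient{h}$ has a strictly positive (inward) component in the direction of the relevant eigenvector, so that $\gradient{h}$ cannot be tangent to $\partial\I_2$. Recall that $\I_2$ is the smallest box with edges parallel to $V_1$ and $V_2$ enclosing $\J_2$, and that $\J_2\subset\J_3$, so all the hypotheses hold on $\I_2$ provided $\I_2\subset\J_3$ — this containment follows from condition (i) on the edge lengths (the axis-parallel box $\J_2$ has diagonal $\sqrt2$ times its edge, and enlarging to be $V_i$-aligned keeps it inside $\J_3$, which has three times the edge length of $\J_2$). I would first fix the edge of $\I_2$ whose outward normal is $-V_1$ (the others are symmetric, swapping $1\leftrightarrow2$ and $V_i\leftrightarrow-V_i$). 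On that edge I want to prove $\ip{\gradient{h}(p)}{V_1}<0$ for every $p$ on the edge, which says exactly that $\gradient h$ points inward there; transversality on all four edges then gives the claim.

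The key computation is to control $\psi_1(p)=\ip{\gradient h(p)}{V_1}$ along the edge. The idea is that $\gradient{\psi_1}=HV_1$ by (\ref{eq:grad-psi12}), and the two stated conditions say precisely that $HV_1$ is long (at least $\tfrac12|\lambda_1|\norm{V_1}$ by (1)) and nearly parallel to $V_1$ (its $V_2$-component is small by (2), so its angle variation over $\J_3$ is small, in the spirit of Lemma~\ref{lemma:smallAngleGradPsi}). Concretely, condition (2) bounds $|\ip{HV_1}{V_2}|/(\norm{HV_1}\norm{V_2})$, and combined with (1) this forces $\operatorname{angle}(\gradient{\psi_1},V_1)$ to stay within roughly $\arctan\tfrac12$ of $0$ over the whole outer box; hence $\ip{\gradient{\psi_1}}{V_1}>0$ everywhere on $\J_3$, so $\psi_1$ is strictly increasing in the $V_1$-direction. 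Since $\psi_1$ vanishes at the sink (which lies in $\J_1$), and the $-V_1$-edge of $\I_2$ lies strictly on the negative-$V_1$ side of the sink — at $V_1$-distance comparable to the half-edge of $\J_2$ — we get $\psi_1<0$ all along that edge. The small-angle-variation bound is also what rules out $\psi_1$ changing sign along the edge: moving along the edge changes the argument only in the $V_2$-direction, and the near-parallelism of $\gradient{\psi_1}$ to $V_1$ keeps $\psi_1$ monotone-ish and bounded away from $0$.

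The main obstacle I expect is bookkeeping the geometry: translating "the $-V_1$-edge of $\I_2$ is at $V_1$-distance at least (something) from the sink" into an explicit lower bound, and checking that this distance dominates the worst-case drift of $\psi_1$ caused by its small but nonzero $V_2$-component as one traverses the edge. This is exactly the quantitative trade-off encoded in the constants $\tfrac12$ in (1) and $\tfrac14\norm{V_1}^2\arctan\tfrac12$ in (2): one must verify that $\tfrac12|\lambda_1|\norm{V_1}$ times the minimal $V_1$-offset beats $\tfrac14\norm{V_1}^2\arctan\tfrac12$ times the maximal $V_2$-extent of the edge. Once these two conditions on $\J_3$ are unpacked, the argument is a direct sign estimate, repeated four times by symmetry, and transversality of $\gradient h$ on $\partial\I_2$ follows.
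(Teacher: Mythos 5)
Your plan is correct and matches the paper's own approach: both reduce the claim to showing that $\psi_i=\ip{\gradient{h}}{V_i}$ never vanishes on the two edges of $\I_2$ perpendicular to $V_i$, using condition (1) to keep $\gradient{\psi_i}=HV_i$ long and condition (2) to keep it nearly parallel to $V_i$ over $\J_3$ — the paper does this by invoking the Section~\ref{sec:eqSolving} box-pair lemmas for the curves $Z_{\psi_1},Z_{\psi_2}$ through the sink, while your direct integration of $\gradient{\psi_1}$ from the sink to the edge is the same estimate unfolded. One slip in the signs: at a sink $\lambda_1<0$, so $\ip{\gradient{\psi_1}}{V_1}\approx\lambda_1\norm{V_1}^2<0$ and $\psi_1$ \emph{decreases} in the $V_1$-direction, hence $\psi_1>0$ on the $-V_1$ edge, which is in fact the correct condition for $\gradient{h}$ to point inward there (outward normal $-V_1$, so inward means $\ip{\gradient{h}}{V_1}>0$); your two sign errors cancel, and the transversality conclusion is unaffected.
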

\begin{proof}
  The second condition limits the variation of the angle of $\gradient{\psi_i} =
  HV_i$ and the basis vectors $V_1$ and $V_2$ over $\J_3$.
  Using this bound, we use the same arguments as in
  Section~\ref{sec:eqSolving}, applied to the pair of boxes $\I_1$,
  $\I_2$, to show that the curve $\psi_i(x,y) = 0$
  does not intersect the edges of $\I_2$ perpendicular to $V_i$. 
  Therefore, $\psi_i=\ip{\gradient{h}}{V_i}$ is nowhere zero on
  these edges, so $\gradient{h}$ is nowhere tangent to these edges.
  Therefore, $\I_2$ is the desired sink box, on the boundary of which
  $\gradient{h}$ is pointing inward. In other words, if an unstable
  separatrix intersects the boundary of this box, the part of the
  separatrix beyond this point of intersection lies inside the sink-box.
  Certified source-boxes are constructed similarly.
\end{proof}


%
%
\section{Isolating funnels around separatrices}
\label{sec:sepStrips}
If the forward orbits of the endpoints of an unstable separatrix interval
have the same sink of $\gradient{h}$ as $\omega$-limit, these forward
orbits bound a region around the unstable separatrix leaving the saddle
box via this unstable segment. This region is called a \textit{funnel}
for the separatrix (this terminology is borrowed from~\cite{jw-dedsa-91}).

In this section we provide the details of the construction of a
certified funnel for each separatrix.
First we show, in Section~\ref{sec:ConstructionStrips} how to construct
two polylines per separatrix interval those are candidates for the funnel
around the corresponding separatrix.
Then, in Section~\ref{sec:widthFunnel}, we introduce the notion of
\textit{width of a funnel}, and derive upper bounds for its growth.
These upper bounds are the ingredients for a certified algorithm
computing these funnels. The algorithm computes the Morse-Smale complex
by providing disjoint certified funnels for each stable and unstable
separatrix.
The proof of correctness and termination is presented in
Section~\ref{sec:termination}.

\subsection{Construction of fences around a separatrix}
\label{sec:ConstructionStrips}
Let $X_\vartheta$ be the vector field obtained by rotating the vector field
$X=\gradient{h}$ over an angle $\vartheta$, i.e.,
\begin{equation*}
  X_\vartheta = 
  \begin{pmatrix}
    \cos\vartheta & - \sin\vartheta \\
    \sin\vartheta &  \cos\vartheta \\
  \end{pmatrix}
  \begin{pmatrix}
    h_x \\ 
    h_y
  \end{pmatrix}.
\end{equation*}
We compute an \textit{isolating funnel} for the forward orbit of $\gradient{h}$
through a point $p$ by enclosing it between (approximations of) the forward orbits of
$X_{\vartheta}$ and $X_{-\vartheta}$ through~$p$. See Figure~\ref{fig:strip}.
\begin{figure}[h]
  \centering
  \psfrag{p}{$p$}
  \psfrag{Xp}{\tiny $\gradient{h}(p)$}
  \psfrag{Xp2}{}
  \psfrag{Xmin}{$X_{-\vartheta}(p)$}
  \psfrag{Xplus}{$X_{\vartheta}(p)$}
  \psfrag{Lmin}{$L_{-\vartheta}$}
  \psfrag{Lplus}{$L_{\vartheta}$}
  \includegraphics[width=0.22\textwidth]{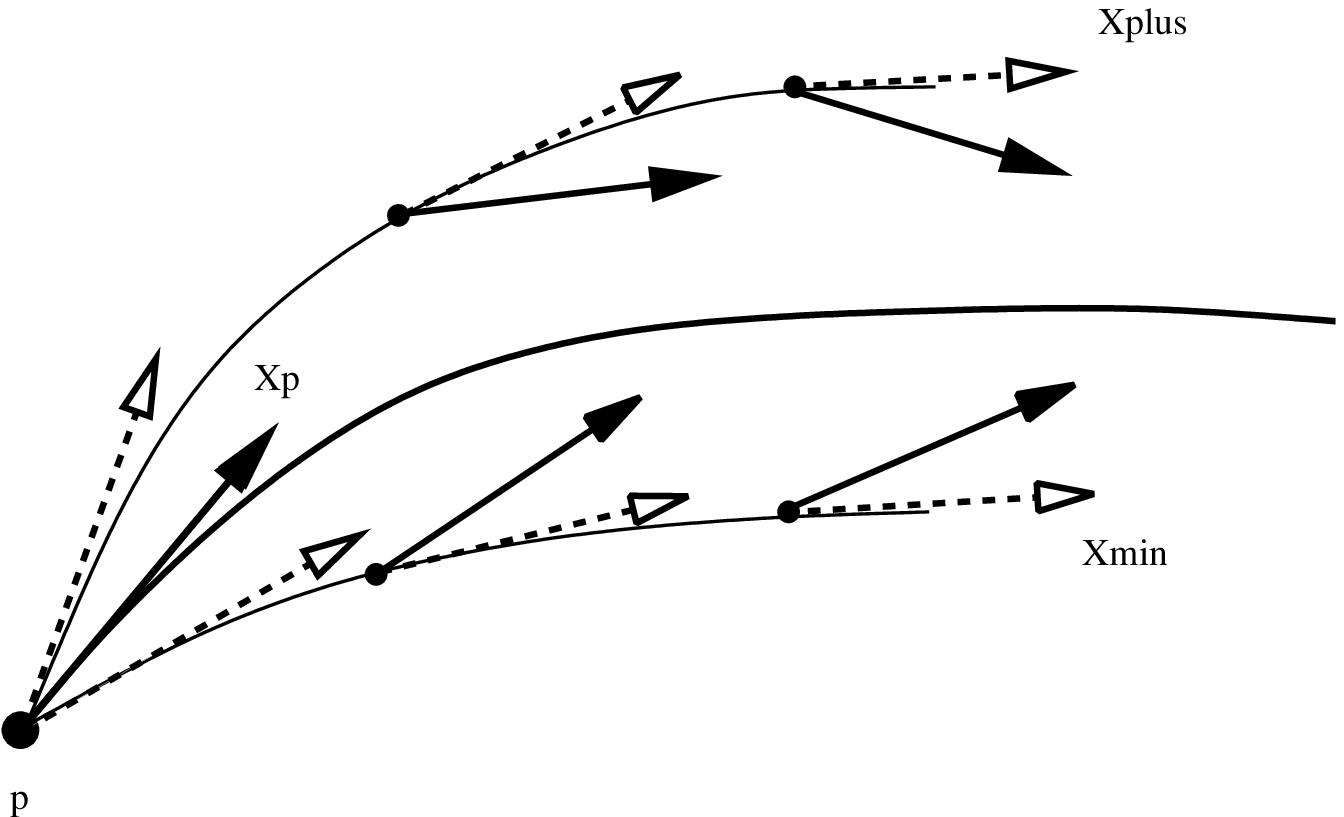}~~%
  \includegraphics[width=0.22\textwidth]{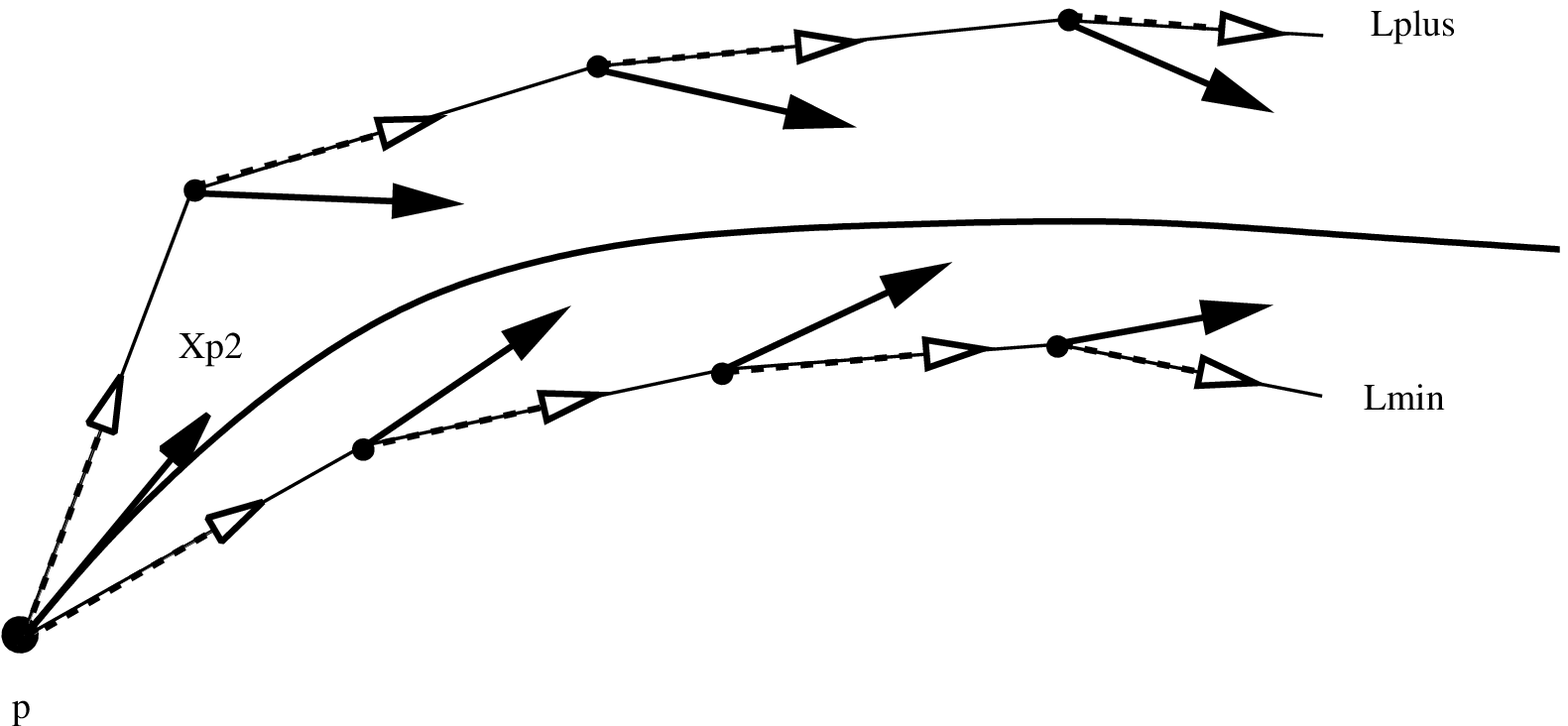}
  \caption{Orbits of the rotated vector fields $X_{\vartheta}$ and $X_{-\vartheta}$
    through a point $p$ enclose the forward orbit of $\gradient{h}$ through $p$. On
    the right polygonal lines approximating these orbits.}
  \label{fig:strip}
\end{figure}

\paragraph{Small angle variation}
We first determine some bounds on the angle variation of the gradient vector field
$\gradient{h}$ over $\Ds$.
We subdivide the region $\Ds$ into square boxes over which the angle variation of 
$\gradient{h}$ is at most $\vartheta$, where $\vartheta$ is to be determined later.
Let $w$ be the edge length of the boxes.
If $X=(f,g)$ is a vector field on $\R^2$, then the angle variation over a regular
curve $\Gamma$ is given by~\cite[Section 36.7]{a-ode-06}:
\begin{equation*}
  \int_{\Gamma}\frac{g\,df-f\,dg}{f^2+g^2}.
\end{equation*}
If $X = \gradient{h}$, this angle variation is equal to
\begin{equation*}
  \int_{\Gamma}\frac{(h_xh_{xy}-h_yh_{xx})\,dx + (h_xh_{yy}-h_yh_{xy})\,dy}{h_x^2+h_y^2}.
\end{equation*}
Let $C_0$ and $C_1$ be constants such that
\begin{equation}
  \label{eq:C0C1}
  \max_{\Ds}\left|\frac{h_xh_{xy}-h_yh_{xx}}{h_x^2+h_y^2}\right| \leq C_0
  \text{~~and~~}
  \max_{\Ds}\left|\frac{h_xh_{yy}-h_yh_{xy}}{h_x^2+h_y^2}\right| \leq C_1.
\end{equation}
Then the angle variation over a curve $\Gamma$ is less than
\begin{equation*}
  \int_{\Gamma} (C_0 \,dx + C_1 \,dy) \leq (C_0+C_1)\,\operatorname{length}(\Gamma).
\end{equation*}
This inequality provides an upper bound for the maximal angle variation over a square box:
\begin{lem}
  \label{lemma:angleBox}
  Let $C_0$ and $C_1$ satisfy (\ref{eq:C0C1}). Then the total angle variation over a
  square box in $\Ds$ with edge length $w$ does not exceed 
  $(C_0+C_1)w\sqrt{2}$.
\end{lem}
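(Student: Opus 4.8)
The plan is to reduce the two–dimensional ``total angle variation over a box'' to the one–dimensional angle variation along a single straight chord, and then apply the length estimate already derived just before the lemma. First I would make precise what the quantity to be bounded is: since $\gradient{h}$ is nowhere zero on $\Ds$, along any curve $\Gamma\subset\Ds$ one can choose a continuous branch $\theta_\Gamma$ of the direction angle of $\gradient{h}$ (a lift of $\Gamma\to S^1$, which always exists over an interval), and then $\int_\Gamma \tfrac{g\,df-f\,dg}{f^2+g^2}$ with $(f,g)=\gradient{h}$ equals the net change $\theta_\Gamma(\text{end})-\theta_\Gamma(\text{start})$. Accordingly, the ``total angle variation of $\gradient{h}$ over a box $B$'' is, by definition, the supremum over all pairs of points $p,q\in B$ of the absolute value of this net change, taken along any curve in $B$ joining them; because $d\theta$ restricted to the simply connected box is exact, this value depends only on $p$ and $q$, so it is legitimate to evaluate it along the straight segment $[p,q]$.

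Next I would fix arbitrary $p,q$ in the square box $B$ of edge length $w$. Since $B$ is convex, the segment $[p,q]$ lies entirely in $B\subset\Ds$, so it is an admissible curve, and its length is at most the diagonal of $B$, namely $w\sqrt{2}$. Invoking the inequality stated immediately before the lemma --- that the angle variation over a curve $\Gamma$ is bounded by $(C_0+C_1)\operatorname{length}(\Gamma)$, which follows from $(\ref{eq:C0C1})$ together with $\int_\Gamma|dx|\le\operatorname{length}(\Gamma)$ and $\int_\Gamma|dy|\le\operatorname{length}(\Gamma)$ --- I obtain
\begin{equation*}
  \bigl|\theta(q)-\theta(p)\bigr| \;=\; \Bigl|\int_{[p,q]}\frac{(h_xh_{xy}-h_yh_{xx})\,dx+(h_xh_{yy}-h_yh_{xy})\,dy}{h_x^2+h_y^2}\Bigr| \;\le\; (C_0+C_1)\operatorname{length}([p,q]) \;\le\; (C_0+C_1)\,w\sqrt{2}.
\end{equation*}
Taking the supremum over all $p,q\in B$ gives the asserted bound $(C_0+C_1)w\sqrt{2}$ on the total angle variation of $\gradient{h}$ over $B$.

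The one point deserving a little care --- and the only thing I would flag as a (mild) subtlety rather than a genuine obstacle --- is the reduction in the first paragraph: one must note that the relevant $1$-form is defined and has a pointwise bound on all of $B$ because $B\subset\Ds$ keeps us away from the singularities, that $B$ is convex so chords never leave $\Ds$, and that the integral is path–independent on the simply connected box so evaluating along chords loses nothing. Everything after that is the trivial observation that a chord of a square of side $w$ has length at most $w\sqrt{2}$, combined with the curve estimate coming from $(\ref{eq:C0C1})$.
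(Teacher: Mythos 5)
Your proof is correct and matches the (implicit) argument in the paper: the inequality displayed just before the lemma already gives the bound $(C_0+C_1)\operatorname{length}(\Gamma)$ for any admissible curve $\Gamma$, and evaluating along the straight chord $[p,q]$ inside the box --- which has length at most the diagonal $w\sqrt{2}$ and which is admissible because the box is convex and contained in $\Ds$ --- yields the result. Your extra care in noting that the $1$-form is exact on the (simply connected) box, so the integral is path-independent and the chord is a legitimate choice, is a worthwhile clarification that the paper leaves unsaid.
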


\noindent
The grid boxes have edge length $w$ such that the angle variation of $\gradient{h}$
over any box in $\Ds$ is less than $\vartheta$. 
\begin{lem}
  \label{lemma:angleVariation}
  Let $0 < \vartheta < \frac{\pi}{2}$, and let the grid boxes in $\Ds$ have width $w$
  satisfying
  \begin{equation}
    \label{eq:widthBox}
    w \leq \frac{\vartheta}{(C_0+C_1)\,\sqrt{2}}.
  \end{equation}
  Then the following properties hold.

  \medskip\noindent
  1.~The angle variation of $\gradient{h}$ over any gridbox is less than
  $\vartheta$.
  \\[1.2ex]
  2.~Let $p$ be a point on an edge of a gridbox,
  and let $q_{\vartheta}$ be the point on the boundary of the gridbox into
  which $X_{\vartheta}(p)$ is pointing, such that the line segment
  $pq_{\vartheta}$ has direction $X_{\vartheta}(p)$.
  The point $q_{-\vartheta}$ is defined similarly. See Figure~\ref{fig:orientation}.
  Then $\gradient{h}$ is pointing rightward along $pq_{\vartheta}$ and
  leftward along $pq_{-\vartheta}$. 
  \\[1.2ex]
  3.~The function $h$ is strictly increasing on the line segments from $p$ to 
  $q_{\vartheta}$ and from $p$ to $q_{-\vartheta}$.
\end{lem}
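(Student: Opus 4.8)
The plan is to treat the three parts in order, deriving each from the preceding one together with Lemma~\ref{lemma:angleBox}. For part~1, I would simply invoke Lemma~\ref{lemma:angleBox}: over any gridbox the total angle variation of $\gradient{h}$ does not exceed $(C_0+C_1)w\sqrt{2}$, and the hypothesis (\ref{eq:widthBox}) on $w$ forces this to be at most $\vartheta$. (Strictness can be arranged by taking the inequality in (\ref{eq:widthBox}) strict, or by noting the bound in Lemma~\ref{lemma:angleBox} is an overestimate; this is a cosmetic point.)

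For part~2, fix a gridbox $B$ containing $p$ on one of its edges, and let $X_0 := \gradient{h}(p)$. By part~1, for every point $z\in B$ the vector $\gradient{h}(z)$ makes an angle strictly less than $\vartheta$ with $X_0$ in absolute value. Now $X_{\vartheta}(p)$ is $X_0$ rotated \emph{counterclockwise} by $\vartheta$, so the segment $pq_{\vartheta}$ points in a direction that is exactly $\vartheta$ counterclockwise from $X_0$. Since along this segment (which lies in $\overline{B}$) the field $\gradient{h}$ stays within angle $<\vartheta$ of $X_0$, it stays strictly \emph{clockwise} of the segment direction, i.e.\ it points to the right of the oriented segment $pq_{\vartheta}$. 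The argument for $pq_{-\vartheta}$ is the mirror image: $X_{-\vartheta}(p)$ is $X_0$ rotated clockwise by $\vartheta$, and $\gradient{h}$ along $pq_{-\vartheta}$ stays strictly counterclockwise of that direction, hence points to the left. One should check the segments $pq_{\pm\vartheta}$ actually stay inside the gridbox before exiting through the boundary — this is exactly how $q_{\pm\vartheta}$ was defined (the first hit with $\partial B$), so there is nothing more to verify.

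Part~3 follows from part~2 combined with property~6 of integral curves (recalled in Section~\ref{subsec:mathbgr}): $h$ is strictly increasing along gradient trajectories. More directly, on the segment $pq_{\vartheta}$ parametrized as $\gamma(t)$ with velocity $\gamma'$ of direction $X_{\vartheta}(p)$, we have $\tfrac{d}{dt}h(\gamma(t)) = \ip{\gradient{h}(\gamma(t))}{\gamma'(t)} = \norm{\gradient{h}(\gamma(t))}\,\norm{\gamma'(t)}\cos\theta(t)$, where $\theta(t)$ is the angle between $\gradient{h}(\gamma(t))$ and the segment direction. By part~1 this angle satisfies $|\theta(t)| \le \vartheta + 0 < \tfrac{\pi}{2}$ — more precisely, $\gradient{h}(\gamma(t))$ is within $\vartheta$ of $X_0 = \gradient{h}(p)$, which itself makes angle $\vartheta$ with the segment, so $|\theta(t)| < 2\vartheta$; taking $\vartheta < \tfrac{\pi}{4}$ (which is implied by the standing choice of $\vartheta$ used elsewhere, but can also just be added as a hypothesis) gives $\cos\theta(t) > 0$, hence $h$ is strictly increasing. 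Also $\gradient{h}\neq 0$ on $\Ds$ by construction, so the product does not vanish. The same computation with $X_{-\vartheta}$ handles the other segment.

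The main obstacle — really the only subtlety — is bookkeeping the sign conventions: making sure that ``rotate by $+\vartheta$'' (counterclockwise, per the matrix displayed just before Lemma~\ref{lemma:angleBox}) lines up with ``$\gradient{h}$ points rightward along $pq_{\vartheta}$'', and that the $<\vartheta$ versus $<2\vartheta$ angle bounds are tracked carefully enough to keep the cosine positive in part~3. None of this requires computation beyond elementary plane geometry; the quantitative content is entirely carried by Lemma~\ref{lemma:angleBox} and the choice of $w$ in (\ref{eq:widthBox}).
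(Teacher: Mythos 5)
Your proposal follows the same three-part structure and key inputs as the paper's proof: part~1 is Lemma~\ref{lemma:angleBox} plus the bound~(\ref{eq:widthBox}), part~2 is the observation that the relative orientation of $\gradient{h}$ and $X_{\pm\vartheta}(p)$ cannot flip over the gridbox, and part~3 is the positivity of the directional derivative $\ip{\gradient{h}(r)}{X_{\pm\vartheta}(p)}$. The phrasing differs slightly for part~2 — the paper speaks of the orientation of the pair $\{\gradient{h}(q),X_{\vartheta}(p)\}$ staying positive, you speak of $\gradient{h}$ staying strictly clockwise of the segment direction — but the content is identical.

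Your part~3 is in fact more careful than the paper's. The paper asserts that the angle between $\gradient{h}(r)$ and $X_{\pm\vartheta}(p)$ is less than $\vartheta$ and invokes $\vartheta<\tfrac{\pi}{2}$; but since $\gradient{h}(r)$ is within $\vartheta$ of $\gradient{h}(p)$, and $X_{\pm\vartheta}(p)$ is exactly $\vartheta$ from $\gradient{h}(p)$ \emph{on the other side} of $\gradient{h}(r)$ is not guaranteed, the correct a priori bound is $2\vartheta$, not $\vartheta$. You correctly note this and repair it by requiring $\vartheta<\tfrac{\pi}{4}$ — harmless, since the algorithm ultimately uses $\vartheta\leq\tfrac{\pi}{40}$, but the lemma as stated (with only $\vartheta<\tfrac{\pi}{2}$) does not literally deliver part~3. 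Your aside appealing to property~6 of integral curves is a red herring (the segment $pq_{\vartheta}$ is not an integral curve of $\gradient{h}$), but your ``more direct'' computation is the one that does the work and it is correct. In short: same route, with a small slip in the paper's part~3 that you catch and fix.
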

\begin{figure}[h]
  \centering
  \psfrag{p}{$p$}
  \psfrag{th}{$\vartheta$}
  \psfrag{qm}{$q_{-\vartheta}$}
  \psfrag{qp}{$q_{\vartheta}$}
  \psfrag{grh}{\small $\gradient{h}$}
  \includegraphics[width=0.48\textwidth]{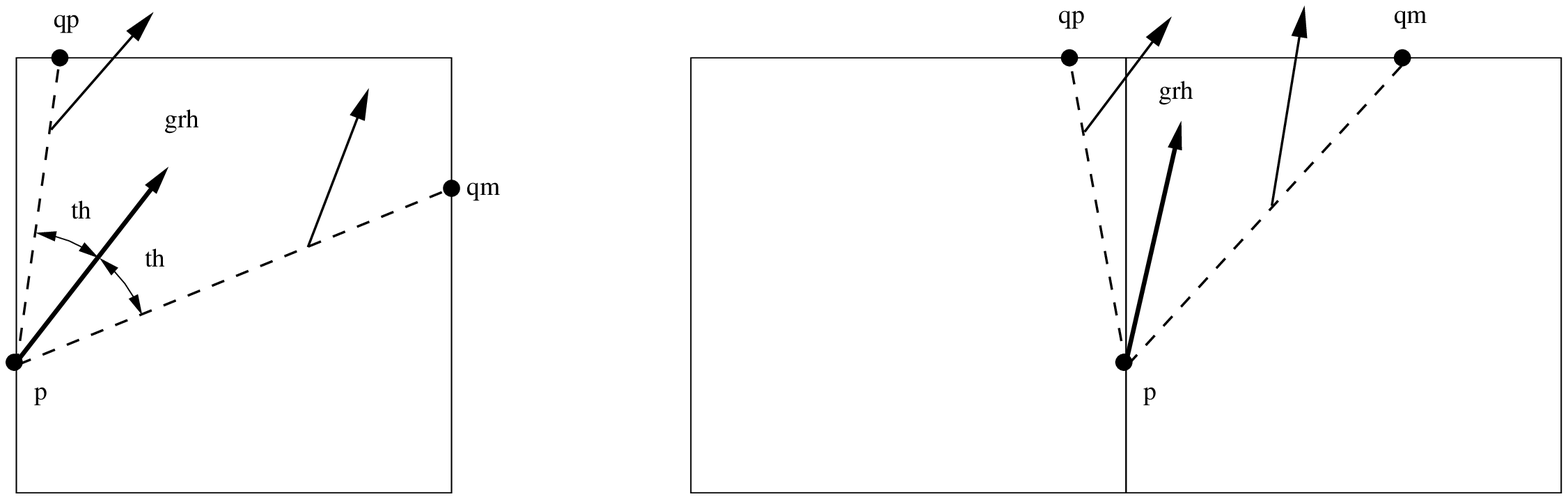}
  \caption{The orientation of $\gradient{h}$ with respect to
    $X_{\vartheta}(p)$ does not change over a grid box.}
  \label{fig:orientation}
\end{figure}
\begin{proof}
  The first claim follows from Lemma~\ref{lemma:angleBox}, using the
  fact that the diameter of a grid box is $w\sqrt{2}$.

  With regard to the second claim, 
  the small angle variation condition implies that the orientation of
  $\{\gradient{h}(q),X_{\vartheta}(p)\}$ does not change as $q$ ranges
  over $\I$. Since this orientation is positive for $q=p$, it is positive
  for all $q \in \I$. Similarly, the orientation of
  $\{\gradient{h}(q),X_{-\vartheta}(p)\}$ is negative for all $q \in \I$.
  Therefore, the second claim also holds.

  At a point $r$ of the line segment $pq_{\pm\vartheta}$ the directional derivative of
  $h$ in the direction of this line segment is $\ip{\gradient{h}(r)}{X_{\pm\vartheta}(p)}$, 
  which is positive since the angle between $\gradient{h}(r)$ and $X_{\pm\vartheta}(p)$, 
  is less than $\vartheta$, and $\vartheta < \frac{\pi}{2}$.
  This proves the third part.
\end{proof}

\paragraph{Fencing in the separatrices}
For each isolating unstable separatrix interval $\J$ on the boundary of a saddle box
we construct two polylines $L_{-\vartheta}(\J)$ and $L_{\vartheta}(\J)$ as follows.
The initial points of these polylines are the endpoints of $\J$,
$q_{-}$ and $q_{+}$, where $q_{-}$ comes before $q_{+}$
in the counterclockwise orientation of the boundary of the saddle box.
The polyline $L_{\vartheta}(\J)$ is uniquely defined by requiring that
its vertices $q_{+}=p_0,p_1,\ldots,p_n$ lie on grid edges, with the property that
\begin{enumerate}
\item 
  The line segment $p_{i-1}p_{i}$, $0 < i \leq n$, lies in a (closed) grid box of
  $\Ds$, and has direction $X_{\vartheta}(p_{i-1})$.
\item 
  $p_n$, the last vertex, lies on the boundary of $\Ds$.
\end{enumerate}
The polyline $L_{-\vartheta}(\J)$ is defined similarly, with the obvious changes:
its initial vertex is $q_{-}$, and each edge has direction equal to the value of the
vector field $X_{-\vartheta}$ at the initial point of this edge.
The polylines $L_{\pm\vartheta}(\J)$ are called \textit{fences} of the (unique)
unstable separatrix of $\gradient{h}$ intersecting $\J$. 

It is not hard to see that that a grid box contains at most two \textit{consecutive}
edges of each of these polylines, but it is not obvious a priori that each box
cannot contain more than two edges of each polyline in total. It follows from the
next result that the intersection of a grid box with any of these polylines is
connected, and, hence, that these polylines are finite.



The following results states that, when walking along the polylines
$L_{\vartheta}$ and  $L_{-\vartheta}$ in the direction of increasing
$h$-values, each grid box is passed at most once.
\begin{lem}
  \label{lemma:one-hit-per-box}
  Let $\I$ be a box such that the angle variation of $\gradient{h}$ over
  the surrounding box $N(\I)$ is at most $\vartheta$.
  Then the intersection of $L_{\vartheta}(\J)$ and $\I$ ($L_{-\vartheta}(\J)$ and
  $\I$) is either empty or 
  a connected polyline (consisting of one or two segments).
\end{lem}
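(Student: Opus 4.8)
The plan is to argue by contradiction: suppose the polyline $L_{\vartheta}(\J)$ enters the box $\I$, leaves it, and later re-enters it. I would focus on the key monotonicity tool from Lemma~\ref{lemma:angleVariation}(3): $h$ is strictly increasing along each segment $p_{i-1}p_i$, since each segment lies in a (closed) grid box over which the angle variation of $\gradient{h}$ is at most $\vartheta < \frac{\pi}{2}$, and the segment has direction $X_{\vartheta}(p_{i-1})$, which makes an angle less than $\vartheta$ with $\gradient{h}$ at every point of that box. Consequently $h$ is strictly increasing along the \emph{entire} polyline $L_{\vartheta}(\J)$, hence the polyline never revisits a level set of $h$, and in particular never revisits a point. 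So the obstruction to connectedness is purely a matter of the polyline wandering out of $\I$ and back while its $h$-value keeps rising.

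The heart of the argument is a transversality/barrier statement along the boundary edges of $\I$. First I would record that, by hypothesis, the angle variation of $\gradient{h}$ over the surrounding box $N(\I)$ is at most $\vartheta$; since every segment of the polyline that touches $\I$ lies in a grid box contained in $N(\I)$, the direction $X_{\vartheta}$ of such a segment differs from a \emph{fixed} reference direction $X_{\vartheta}(p_\ast)$ (for any chosen base point $p_\ast \in N(\I)$) by less than $2\vartheta$ in a suitable sense, i.e. $\gradient{h}$ — and hence $X_{\vartheta}$ — lies inside a cone of half-angle $2\vartheta$ about a fixed direction throughout $N(\I)$. Choosing $\vartheta$ small enough that $2\vartheta$ is bounded away from $\frac{\pi}{2}$ (the paper controls $\vartheta$ via (\ref{eq:widthBox})), this fixed cone has a nonzero component against the outward normal of at least one of the two pairs of opposite edges of $\I$, or equivalently: along one pair of parallel edges of $\I$ the polyline direction always points from one side to the other (say "rightward" in the sense of Lemma~\ref{lemma:angleVariation}(2) applied with the box $\I$), so the polyline crosses each such edge at most once and never crosses it back. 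For the other pair of edges one uses the $h$-monotonicity: $h$ is strictly monotone in the direction $X_\vartheta$, and since $X_\vartheta$ stays in the fixed cone, the restriction of $h$ to the segment of the polyline inside $\I$ is a strictly monotone function of the coordinate along the ``flow direction'', so the polyline can enter and leave $\I$ through those two edges at most once each as well.

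Combining these: the polyline restricted to $\I$ consists of segments all of whose directions lie in a fixed small cone; it can cross each of the four edges of $\partial\I$ at most once; and once it exits $\I$ it cannot return because returning would force either a second crossing of some edge in the forbidden direction or a decrease of $h$. Hence $L_{\vartheta}(\J)\cap\I$ is connected. That it consists of at most two segments then follows from the standard observation already noted in the text (a grid box contains at most two \emph{consecutive} edges of the polyline, since after two turns within a box of side $w$ the accumulated direction change plus the geometry forces the third segment to leave); connectedness upgrades ``at most two consecutive'' to ``exactly the whole intersection, which is one or two segments''. The argument for $L_{-\vartheta}(\J)$ is identical with $\vartheta$ replaced by $-\vartheta$ throughout.

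The main obstacle I anticipate is making the cone/barrier bookkeeping fully rigorous: one must be careful that the relevant segments lie in $N(\I)$ rather than merely in $\I$ (so that the angle-variation hypothesis applies), and one must track the correct relation between the half-angle $2\vartheta$ of the direction cone, the aspect of the box $\I$, and the requirement that the cone be transverse to a pair of edges — this is exactly the kind of elementary-but-fiddly planar geometry (of the same flavour as Lemma~\ref{lemma:twoIntersections}) that the paper handles by choosing $\vartheta$ small enough. Everything else is a direct consequence of Lemma~\ref{lemma:angleVariation}.
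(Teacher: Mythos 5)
Your proposal correctly identifies and uses the monotonicity of $h$ along the fence, and the cone argument does show that the polyline can cross each edge of one transverse pair of edges at most once. But there is a genuine gap in how you handle the other pair of edges. You assert that ``once it exits $\I$ it cannot return because returning would force either a second crossing of some edge in the forbidden direction or a decrease of $h$.'' Neither alternative is forced. Consider $\gradient{h}$ roughly horizontal with the cone of $X_\vartheta$ straddling the horizontal direction (this is perfectly compatible with angle variation $\leq\vartheta$). Then $x$ is monotone along $L_\vartheta$, but the $y$-component of the direction can change sign inside $N(\I)$, so the fence may exit $\I$ through the top edge and re-enter through the top edge further to the right. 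Along the excursion outside $\I$ the $x$-coordinate and $h$ both keep increasing: $h(q_2)>h(q_1)$ is consistent with monotonicity and with $h\approx x$ on the top edge, so no contradiction arises. Your invocation of ``$h$-monotonicity'' for that pair of edges therefore proves nothing.

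What is actually needed -- and what the paper supplies -- is a quantitative barrier argument tied to the \emph{maximum} of $h$ on $\partial\I$. Let $p\in\partial\I$ realize $M=\max_{\partial\I}h$, and draw the two rays $pp_0,pp_1$ from $p$ to $\partial N(\I)$ making angle $\tfrac{\pi}{2}-\vartheta$ with $\gradient{h}(p)$. Because the angle variation of $\gradient{h}$ over $N(\I)$ is at most $\vartheta$, the directional derivative of $h$ along these rays is nonnegative everywhere in $N(\I)$, so $h\geq M$ on $pp_0\cup pp_1$. The paper's geometric case analysis then shows that, starting from the exit point $q$, the fence must hit one of these rays before it can leave $N(\I)$; after that crossing $h>M$ strictly (by your monotonicity observation), and since $h\leq M$ on $\I$, the fence can never re-enter $\I$. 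This ridge/barrier construction through the maximizer $p$ is the missing ingredient: your sketch has the right auxiliary facts but does not contain an argument that actually forces the $h$-value of a returning fence above the level that $\I$ can sustain.
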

\begin{figure}[h]
  \centering
  \psfrag{p}{$p$}
  \psfrag{p0}{$p_0$}
  \psfrag{p1}{$p_1$}
  \psfrag{q}{$q$}
  \psfrag{r}{$r$}
  \psfrag{s}{$s$}
  \psfrag{t}{$t$}
  \psfrag{op}{$\ol{p}$}
  \psfrag{oq}{$\ol{q}$}
  \psfrag{b}{\small$\beta$}
  \psfrag{a}{\small$\alpha$}
  \psfrag{grh}{$\gradient{h}$}
  \psfrag{Ltheta}{$L_\vartheta$}
  \includegraphics[width=0.48\textwidth]{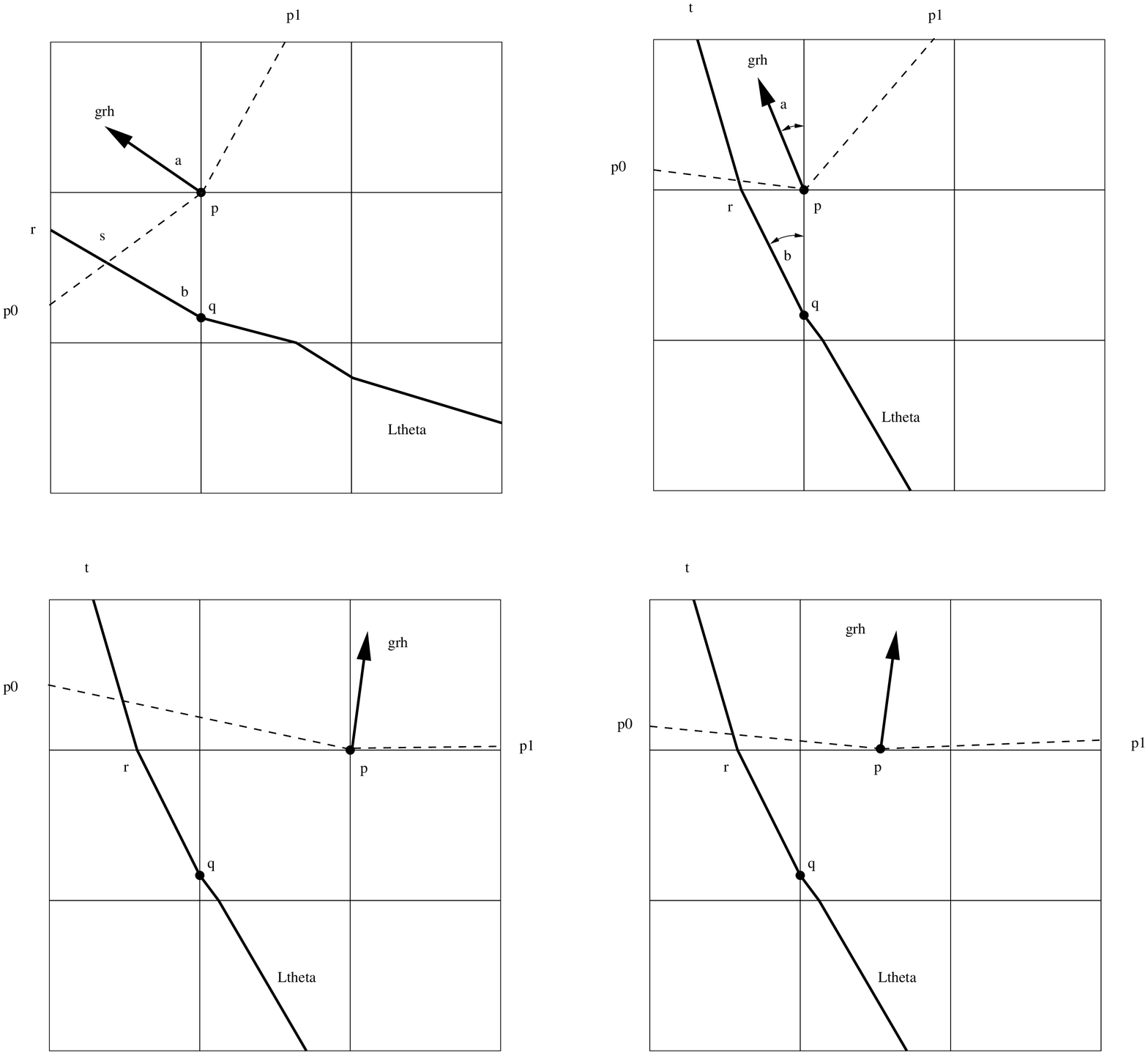}
  \caption{The value of $h$ at the point where polyline $L_{\vartheta}(\J)$
    leaves the surrounding box $N(\I)$ is greater than the maximum value
    of $h$ on $\I$.}
  \label{fig:polyline_box_intersection}
\end{figure}

\begin{proof}
  Let $q$ be a point at which $L_{\vartheta}(\J)$ leaves $\I$, i.e., 
  the segment of $L_{\vartheta}(\J)$ ending at $q$ lies inside $\I$ and the
  segment $qr$ beginning at $q$ lies outside $\I$.
  Let $p$ be a point on the boundary of $\I$ at which $h$ attains its
  maximum value $M$.
  
  \medskip\noindent
  \textit{Case 1: $p$ is a vertex of $\I$, incident to the edge of $\I$
    containing $q$.}
  \\
  See Figure~\ref{fig:polyline_box_intersection}, top row.
  Let $l$ be the line through the edge of $\I$ containing $q$, let
  $\alpha$ be the angle between $l$ and $\gradient{h}(p)$, and let
  $\beta$ be the angle between $l$ and the segment of $L_{\vartheta}(\J)$
  with initial point $q$. 
  The angles $\alpha$ and $\beta$ are both positive, since $p$ is a
  vertex of $\I$.
  The angle between $\gradient{h}(p)$ and $\gradient{h}(q)$
  is at most $\vartheta$, since the angle variation of $\gradient{h}$
  over $\I$ is less than $\vartheta$.
  Therefore, $|\alpha-\beta|\leq 2\vartheta$.

  Let $pp_0$ and $pp_1$, with $p_0$ and $p_1$ on the boundary of
  $N(\I)$, be the line segments that make an angle of
  $\frac{\pi}{2}-\vartheta$ with $\gradient{h}(p)$.
  Since the angle variation of $\gradient{h}$ over $N(\I)$ is at most
  $\vartheta$, the value of $h$ at any point of these line segments is
  at least $M$.
  We shall prove that the connected component of $L_{\vartheta}(\J)\cap
  N(\I)$ containing $q$ intersects one of the line segments $pp_0$ and
  $pp_1$.

  First assume $\alpha \geq \vartheta$. Then the line segment $pp_0$
  lies in the grid box $\J$ containing segment $qr$ of $L_{\vartheta}(\J)$.
  If $r$ lies on an edge of $\J$ incident to $p$, then $qr$ intersects
  $pp_0$.
  So assume $r$ lies on the edge of $\J$ contained in the boundary of
  $N(\I)$ (Figure~\ref{fig:polyline_box_intersection}, leftmost picture).
  Let $s$ be the point of intersection of the line through $pp_0$ and
  the line through $qr$.
  This point lies on the same side of $l$ as $p_0$ and $r$, 
  since $\angle p_0pq = \frac{\pi}{2} -\alpha+\vartheta < \frac{\pi}{2}$ and 
  $\angle pqr = \beta \leq \frac{\pi}{2}$.
  Furthermore, $\angle psq = \pi-(\frac{\pi}{2} -\alpha+\vartheta)-\beta
  \geq \frac{\pi}{2}-\vartheta > \frac{\pi}{4}$. Therefore, $x$ lies
  inside $N(\I)$, in other words, $qr$ intersects $pp_0$ also in this case.

  Now consider the case $\alpha < \vartheta$. 
  Then $p_0$ lies on the side of $N(\I)$ parallel to the line through
  $p$ and $q$.
  Furthermore, $\beta \leq \alpha + 2\vartheta < 3\vartheta < \frac{3\pi}{40}$,
  so $L_{\vartheta}(\J)$ `leaves' $N(\I)$ at a point $t$ on the side of
  $N(\I)$ perpendicular to the line through $p$ and $q$. 
  See Figure~\ref{fig:polyline_box_intersection}, rightmost picture.
  It follows that the part of $L_{\vartheta}(\J)$ between $q$ and $t$
  intersects $pp_0$. In particular, $h(t) > M$.

  \medskip\noindent
  \textit{Case 2: $p$ is not a vertex of $\I$, incident to the edge of $\I$
    containing $q$.}
  Then either $p$ is a vertex of $\I$, not incident to the edge of $\I$, 
  containing $q$, as in Figure~\ref{fig:polyline_box_intersection},
  bottom-left picture,
  or $p$ lies on the relative interior of an edge of $\I$, as in
  Figure~\ref{fig:polyline_box_intersection}, bottom-right picture.

  In this case $\gradient{h}(p)$ is nearly vertical, as are the edges of
  $L_{\vartheta}(\J)$. Similarly, the line segments $pp_0$ and $pp_1$ are
  nearly horizontal, so $L_{\vartheta}(\J)$ intersects $pp_0$. The details are
  similar to those of Case~1 of this proof.
\end{proof}

\noindent
If the endpoints of the fences $L_{\vartheta}(\J)$ and $L_{-\vartheta}(\J)$
lie on the same connected component of the boundary of $\Ds$, 
then these fences split $\Ds$ into two connected regions. 
See Figure~\ref{fig:3funnels}.
\begin{figure}[h]
  \psfrag{Lmin}{$L_{-\vartheta}$}
  \psfrag{Lplus}{$L_{\vartheta}$}
  \psfrag{sep}{$\gamma$}
  \centering
  \includegraphics[width=0.48\textwidth]{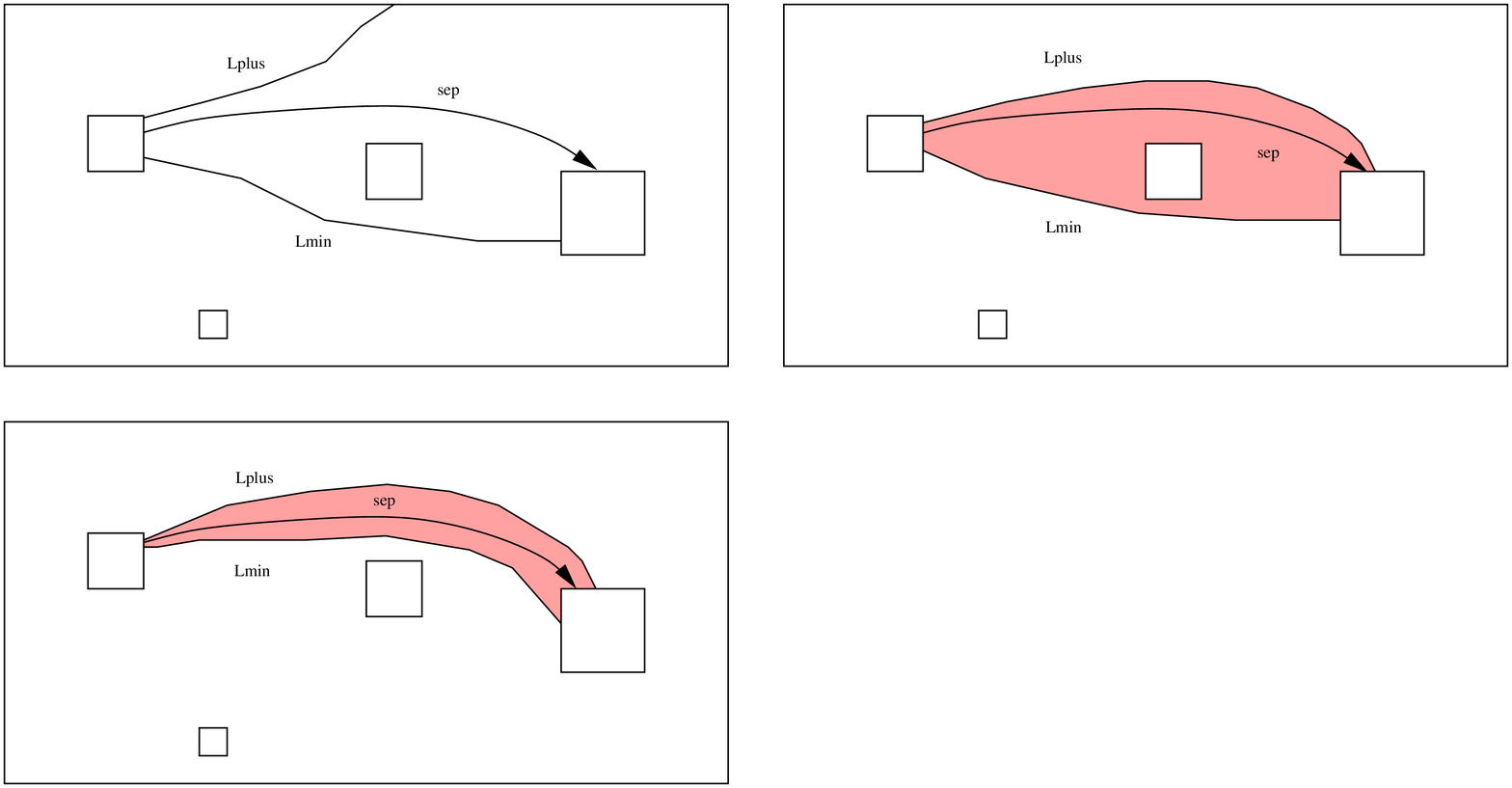}
  \caption{Fences around a separatrix $\gamma$. If the fences end in the same
    connected component of the boundary of $\Ds$, then they enclose a
    funnel (top right picture). If the funnel is simply connected, it isolates the
    separatrix from the source-, sink- and saddle-boxes (bottom picture). }
  \label{fig:3funnels}
\end{figure}
In this case, the region containing the separatrix interval $\J$ in its boundary is 
called the \textit{funnel} of $\J$ (with angle $\vartheta$)
denoted by $F_{\vartheta}(\J)$. Its boundary consists of $\J$, the two
fences $L_{\vartheta}(\J)$ and $L_{-\vartheta}(\J)$, and a curve $\J^\ast$ on the 
boundary of $\partial \Ds$ connecting the endpoints of these fences.
If the funnel is simply connected, it contains the part of the unstable 
separatrix through $\J$ lying inside $\Ds$, which enters the funnel through $\J$ and
leaves it through $\J^\ast$.
Note that $\J^\ast$ is a curve either on the outer boundary of $\Ds$
or on a sink box. 

Similarly, each \textit{stable} separatrix interval has two fences (for an angle
$\vartheta$). If the endpoints of these fences lie on the same connected component of
$\partial \Ds$, the enclosed region is again called a funnel for the stable separatrix
interval.
Our goal is to construct disjoint, simply connected funnels for the stable and unstable
separatrix intervals. If these funnels are disjoint, then they form,
together with the sink boxes, source boxes and saddle boxes,
a (fattened) Morse-Smale complex for $\gradient{h}$.

It is intuitively clear that a funnel $F_{\vartheta}(\J)$ is simply connected if
$\vartheta$, the length of the separatrix interval $\J$, and the edge length $w$ of
the grid boxes are sufficiently small. The next subsection presents computable upper
bounds on these quantities, guaranteeing that the endpoints of two fences of a
separatrix interval lie on the same boundary component. It is then easy to check
whether the enclosed funnel is simply connected. 

%
%
\subsection{Controlling the width of the funnel}
\label{sec:widthFunnel}
If the width of a funnel is sufficiently small, in a sense to be made more precise,
it encloses a simply connected region in $\Ds$.
The width of a funnel is, roughly speaking, the number of grid boxes between its
bounding fences in the vertical direction, in regions where the fences are nearly horizontal, and 
in the horizontal direction, in regions where the fences are nearly vertical. 
To define the width of a funnel more precisely, we distinguish
quasihorizontal and quasivertical parts of a funnel, and show that the
width of a funnel does not increase substantially at transitions between
these quasihorizontal and quasivertical parts.

\paragraph{Quasihorizontal and quasivertical parts of a funnel}
A nonzero vector $v=(v_1,v_2)$ is called \textit{quasihorizontal} if 
$|v_2| \leq 2 |v_1|$, and \textit{quasivertical} if $|v_1| \leq 2 |v_2|$.
Note that each nonzero vector is quasihorizontal, quasivertical, or both.
Consider a subdivision of $\Ds$ into boxes of equal width, where non-disjoint boxes
share either an edge or a vertex. 
A \textit{horizontal $\varepsilon$-strip} is the union of a sequence of boxes where
successive boxes share a vertical edge, such that the horizontal edge of the
rectangle thus obtained has length at most $\varepsilon$. 
A \textit{vertical $\varepsilon$-strip} is defined similarly. An
\textit{$\varepsilon$-box} is a square box with edge length at most $\varepsilon$
which is the union of a number of boxes.
Two polygonal curves $L_{+}$ and $L_{-}$ form an $\varepsilon$-funnel if there is
a set ${\mathcal H}_\varepsilon$ of horizontal $\varepsilon$-strips, 
a set ${\mathcal V}_\varepsilon$ of vertical $\varepsilon$-strips, 
and a set ${\mathcal B}_\varepsilon$ of $\varepsilon$-boxes
such that the following holds:
\begin{enumerate}
\item 
  The vertices of $L_{-}$ and $L_{+}$ lie on the edges of the grid-boxes;
  $L_{-}$ intersects a grid box in at most one vertex or in at most one edge;
  the same holds for $L_{+}$;
\item 
  Both $L_{-}$ and $L_{+}$ lie in the union of the rectangles in 
  ${\mathcal H}_\varepsilon$, ${\mathcal V}_\varepsilon$ and ${\mathcal B}_\varepsilon$;
\item 
  An edge of $L_{\pm}$ contained in a horizontal ${\varepsilon}$-strip is
  quasi-vertical, and an edge contained in a vertical ${\varepsilon}$-strip is
  quasihorizontal. Moreover, neither $L_{+}$ nor $L_{-}$ intersect the vertical sides
  of a horizontal $\varepsilon$-strip, or the horizontal sides of a vertical 
  $\varepsilon$-strip.
  Each $\varepsilon$-strip and each $\varepsilon$-box is intersected by both polylines.
\item 
  Either $L_{-}$ or $L_{+}$ intersects an $\varepsilon$-box in exactly one of its
  edges, which is contained in a grid box at the corner of the $\varepsilon$-box.
  This single edge is either quasivertical or quasihorizontal (but not both).
  If this edge is quasihorizontal (quasivertical), all edges of the other polyline
  inside the $\varepsilon$-box are quasihorizontal (quasivertical) as well -- and
  possibly also quasivertical (quasihorizontal).
  The other polygonal curve intersects the same edges of the $\varepsilon$-box, each
  in exactly one point, and is disjoint from the other edges of the $\varepsilon$-box.
\end{enumerate}
See also Figure~\ref{fig:funnel}.
\begin{figure}[h]
  \psfrag{Lp}{$L_{+}$}
  \psfrag{Lm}{$L_{-}$}
  \centering
  \includegraphics[width=0.44\textwidth]{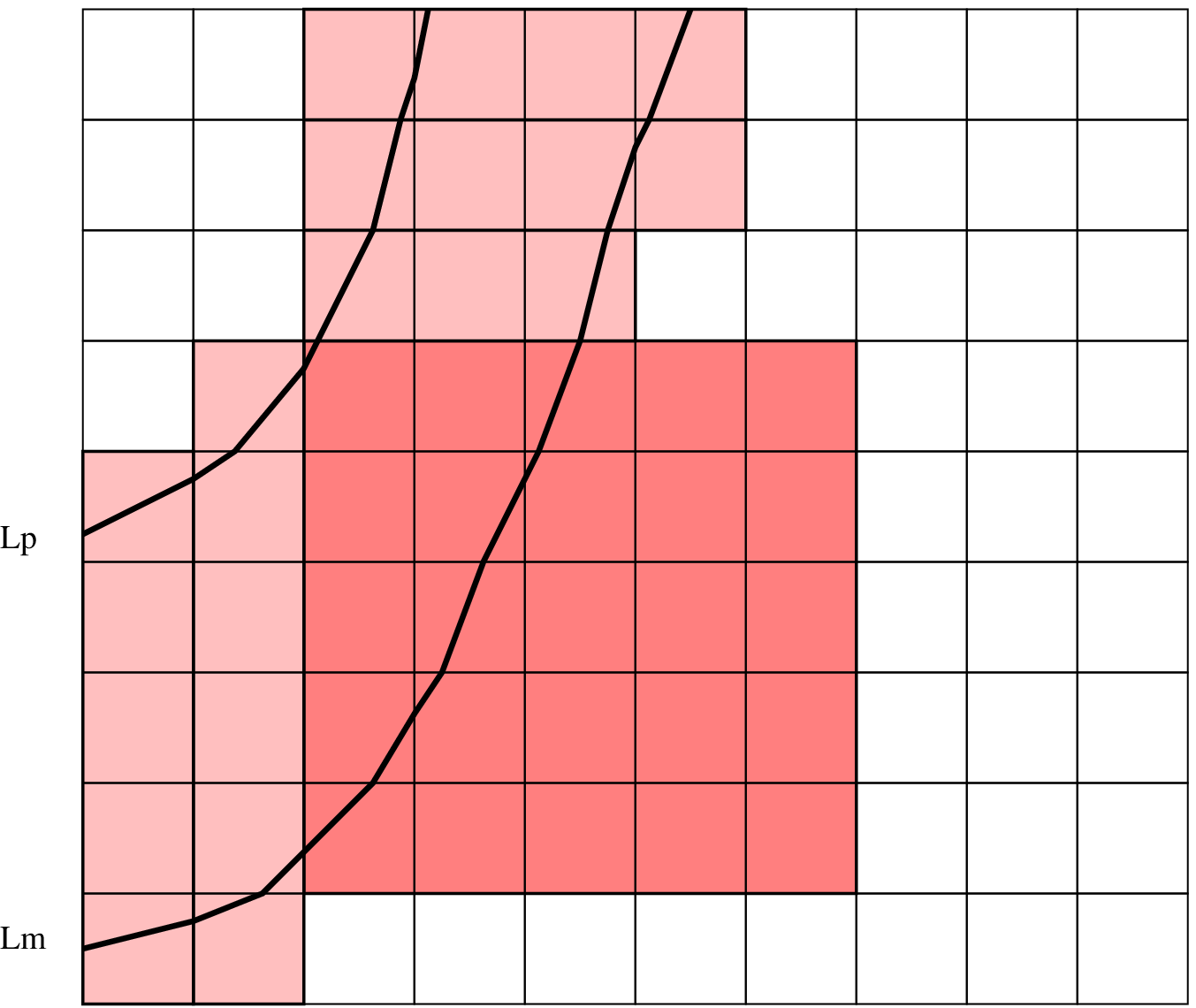}    
  \caption{A funnel formed by two polylines covered by two vertical
    $\varepsilon$-strips, one $\varepsilon$-box and three horizontal
    $\varepsilon$-strips. Here $\varepsilon$ is six times the width of a grid box.
    $L_{+}$ intersects the $\varepsilon$-box in a single edge, which is quasivertical
    but not quasihorizontal. All edges of $L_{-}$ inside the $\varepsilon$-box are
    quasivertical as well (and some of them are also quasihorizontal).
  }
  \label{fig:funnel}
\end{figure}

\noindent
We determine $\vartheta > 0$ later, but for now we assume that 
\begin{equation}
 \label{eq:theta1}
  \vartheta \leq \frac{\pi}{40}.
\end{equation}
We start with a simple observation.
\begin{lem}
  \label{lemma:vh-strips}
  Let $L$ be a polyline with quasihorizontal edges and with vertices on the edges of
  a grid with edge length $w$ satisfying (\ref{eq:widthBox}).
  If $L$ lies in a vertical strip of width $w$, where each of the vertical lines
  bounding the strip contains one of its endpoints,
  then $L$ intersects at most three grid boxes contained in this vertical strip.

  A similar property holds for a polyline with quasivertical edges intersecting a
  horizontal strip.
\end{lem}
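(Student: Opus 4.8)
The plan is to reduce the statement to a short one‑dimensional estimate, once we know that $L$ runs monotonically across the strip. Recall that $L$ is (a connected piece of) one of the fences $L_{\pm\vartheta}(\J)$, so each edge $e_i=p_{i-1}p_i$ of $L$ is a segment whose direction is $X_{\pm\vartheta}(p_{i-1})$ and which lies inside a single grid box of $\Ds$. Normalise coordinates so that the strip is $[0,w]\times\R$, with the initial endpoint $p_0$ on the line $x=0$ and the final endpoint on $x=w$; the opposite labelling is obtained by a reflection in a vertical line.

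\textbf{Step 1 (monotonicity).} I would first show that every edge of $L$ has a strictly positive $x$-component. This is clear for the first edge: $p_0$ lies on $x=0$, and a quasihorizontal edge is not vertical, so the next vertex has positive $x$-coordinate. For the inductive step, since $R_{\pm\vartheta}$ is a rotation it preserves angles between vectors, so the directions of $e_i$ and $e_{i+1}$ differ by exactly the angle between $\gradient h(p_{i-1})$ and $\gradient h(p_i)$; as $p_{i-1}$ and $p_i$ lie in a common grid box, Lemma~\ref{lemma:angleVariation} (via the width bound (\ref{eq:widthBox})) bounds this angle by $\vartheta$. A quasihorizontal vector with positive $x$-component makes an angle less than $\arctan 2$ with the positive $x$-axis (and one with negative $x$-component makes an angle less than $\arctan 2$ with the negative $x$-axis), so the direction of $e_{i+1}$ makes an angle less than $\arctan 2+\vartheta$ with the positive $x$-axis; since $\arctan 2+\vartheta<\pi-\arctan 2$ (because $\vartheta\le\pi/40$), $e_{i+1}$ cannot lie in the quasihorizontal cone about the negative $x$-axis, and being quasihorizontal it therefore points in the positive‑$x$ direction as well. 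Hence $x$ strictly increases along $L$ and $\sum_i|\Delta x_i|=x(p_n)-x(p_0)=w$.

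\textbf{Step 2 (vertical extent).} Since every edge is quasihorizontal, $|\Delta y_i|\le 2|\Delta x_i|$ for each $i$, so
\begin{equation*}
  \max_L y-\min_L y \;\le\; \sum_i|\Delta y_i|\;\le\;2\sum_i|\Delta x_i|\;=\;2w .
\end{equation*}
Thus $L$ lies in a horizontal slab of height at most $2w$, and such a slab meets at most three of the rows $[kw,(k+1)w]$, $k\in\Z$ — equivalently, $L$ meets at most three of the grid boxes of the strip. The companion statement for a polyline with quasivertical edges inside a horizontal strip follows by interchanging the two coordinate axes.

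The main obstacle is Step~1: a polyline with merely quasihorizontal edges could zigzag between the two bounding lines and climb arbitrarily far up a width‑$w$ strip, so the monotonicity genuinely needs the fence structure — concretely, that the edge directions are values of the slowly turning field $X_{\pm\vartheta}$, whose turning over one grid box is controlled by (\ref{eq:widthBox}). A minor point worth recording is the borderline case in which $\max_L y-\min_L y=2w$ with both extremal heights lying on horizontal grid lines: then $L$ touches a fourth box only along a single grid edge, which is immaterial for the $\varepsilon$-funnel coverings built from these estimates.
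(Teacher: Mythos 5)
Your Step~2 is exactly the paper's argument: once the polyline is known to be $x$-monotone across the strip, quasihorizontality gives $\max_L y - \min_L y \leq 2\sum_i|\Delta x_i| = 2w$, hence at most three grid rows. What the paper's proof omits, and what you correctly identify, is the monotonicity itself. The paper asserts that the slope of the chord joining the endpoints of $L$ is bounded by the maximal edge slope; that weighted-average claim is valid only when every $\Delta x_i$ has the same sign, and a quasihorizontal polyline that zigzags back and forth between the two bounding lines of the strip furnishes a counterexample with a chord steeper than any of its edges and an unbounded vertical extent. So the lemma, as literally stated, needs an extra hypothesis. Your Step~1 supplies it: since $L$ is a fence, the direction of each edge is $X_{\pm\vartheta}$ evaluated at the preceding vertex, the rotation $R_{\pm\vartheta}$ preserves angles, and the angle-variation bound coming from~(\ref{eq:widthBox}) guarantees that consecutive edge directions differ by at most $\vartheta$; this, together with $\arctan 2 + \vartheta < \pi - \arctan 2$, prevents the $x$-component from changing sign. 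Your note about the borderline case (vertical extent exactly $2w$ aligned with a grid line, producing a degenerate fourth intersection along a single grid edge) is a reasonable precision. In short, your proof is correct, it follows the paper's overall route, and it fills a genuine gap in the paper's terse argument by making the tacit monotonicity assumption explicit and deriving it from the fence structure.
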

\begin{proof}
  We only prove the first part, in which $L$ lies in a vertical strip and has
  quasihorizontal edges.
  The slope of the line segment connecting the endpoints of $L$ does not exceed the
  maximum slope of any of the edges of $L$, so this slope is at most $\arctan 2$.
  Hence the projection of this line segment on any of the vertical lines bounding the
  strip has length at most $2w$, so it intersects at most three boxes.
\end{proof}

\noindent
The next result shows that the width of a funnel does not grow
substantially at a transition between a quasihorizontal and a
quasivertical part.
We take $\varepsilon > 0$ such that the angle variation of
$\gradient{h}$ over a box with edge length $\varepsilon$ is at most $\tfrac{\pi}{20}$. 
Again, by Lemma~\ref{lemma:angleBox}, this is guaranteed by taking 
\begin{equation}
  \label{eq:epsilon1}
  \varepsilon \leq \frac{\pi}{20(C_0+C_1)\,\sqrt{2}}.
\end{equation}
\begin{lem}
  \label{lemma:epsilonBox}
  Let $\J$ be an $\varepsilon$-box intersected by both $L_{-\vartheta}$ and
  $L_{+\vartheta}$, with an edge $e$ which contains the initial vertex of both
  $L_{-\vartheta}\cap \J$ and $L_{+\vartheta}\cap \J$.
  Assume that at least one of the polylines has an edge which is either
  quasihorizontal or quasivertical, but not both. 
  Then both polylines intersect the boundary of $\J$ in exactly two points,
  and there is an edge $e'$ of $\J$, adjacent to $e$, containing the terminal
  vertices of both   $L_{-\vartheta}\cap \J$ and $L_{+\vartheta}\cap \J$.
  See Figure~\ref{fig:funnel}.
\end{lem}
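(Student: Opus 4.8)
The plan is to exploit that $\J$, having edge length at most $\varepsilon$, is small enough that both fences are nearly straight and nearly parallel inside it, so the configuration is essentially a thin corridor crossing a box. First I would record the quantitative estimate: by the choice of $\varepsilon$, cf.~(\ref{eq:epsilon1}) and Lemma~\ref{lemma:angleBox}, the angle variation of $\gradient{h}$ over $\J$ is at most $\tfrac{\pi}{20}$. Every edge of $\Lplus$ lying in $\J$ has direction $X_{\vartheta}(p)$ for one of its endpoints $p\in\J$, and every edge of $\Lmin$ in $\J$ has direction $X_{-\vartheta}(q)$ for an endpoint $q\in\J$; since $X_{\pm\vartheta}$ is $\gradient{h}$ rotated over $\pm\vartheta$ and $\vartheta\le\tfrac{\pi}{40}$ by~(\ref{eq:theta1}), all these directions lie in one circular arc of length at most $\tfrac{\pi}{20}+2\vartheta\le\tfrac{\pi}{10}$. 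The hypothesis supplies one of these edges that is quasihorizontal but not quasivertical, hence at angle less than $\arctan\tfrac{1}{2}$ from the horizontal; after a reflection in a vertical line I may assume it points into $\{x>0\}$, and the case of a quasivertical (not quasihorizontal) edge follows by rotating the whole configuration over $\tfrac{\pi}{2}$. Consequently every edge of $\Lplus\cap\J$ and of $\Lmin\cap\J$ makes an angle at most $\arctan\tfrac{1}{2}+\tfrac{\pi}{10}<\tfrac{\pi}{3}$ with the positive $x$-axis; in particular each such edge is quasihorizontal with strictly positive first coordinate, and $h_x>0$ everywhere on $\J$.

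Next I would use monotonicity. Each edge of $\Lplus\cap\J$ is traversed in the direction $X_{\vartheta}$, so the first coordinate is strictly increasing along $\Lplus\cap\J$, and likewise along $\Lmin\cap\J$; moreover $h$ is strictly increasing along both polylines by Lemma~\ref{lemma:angleVariation}, part~3. Since $\J$ has bounded horizontal extent, neither polyline can remain in $\J$, so each enters $\J$ through the given edge $e$ and leaves $\J$ again. The key claim is that $\Lplus\cap\J$ and $\Lmin\cap\J$ are each a single connected arc, whence each polyline meets $\partial\J$ in exactly two points. Re-entry through the downstream (right) edge of $\J$ is immediately impossible, since an edge re-entering there would have negative first coordinate; for the remaining return patterns one uses that the direction estimate of the first paragraph extends, with slightly larger constants, to a one-grid-box enlargement of $\J$, so the first coordinate keeps increasing along the fences in this enlargement, together with the strict monotonicity of $h$ along the whole fence and the connectedness of the super-level sets $\{h>c\}\cap\J$ (a consequence of $h_x>0$). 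This connectedness/no-return step is the one I expect to require the most care.

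Finally I would identify the exit edge. By the previous step each fence enters $\J$ through $e$ with an edge of positive first coordinate, so $e$ is not the downstream edge of $\J$. As $\J$ occurs as a transition $\varepsilon$-box in the funnel decomposition, one of the fences---say $\Lplus$---meets $\J$ in a single edge contained in a grid box at a corner of $\J$; the only two edges of a corner grid box that lie on $\partial\J$ are perpendicular, so $\Lplus$ crosses $\J$ from $e$ to the perpendicular, hence adjacent, edge $e'$. The polyline $\Lmin\cap\J$ also starts on $e$, has all its edges within $\tfrac{\pi}{10}$ of the direction of that single edge of $\Lplus$, and is disjoint from $\Lplus$ because the two fences bound a funnel; wedged alongside $\Lplus$ and unable to cross it, $\Lmin$ must leave $\J$ through the same edge $e'$. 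Combining the two steps yields the statement: both polylines meet $\partial\J$ in exactly two points, and their terminal vertices lie on the common edge $e'$, which is adjacent to $e$. The direction and $\varepsilon$-box bookkeeping of the first and third paragraphs is routine; the genuine obstacle is the no-return argument of the second.
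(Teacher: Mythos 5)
Your strategy of bounding the edge directions to a narrow arc is the same first step as the paper's: after normalizing one strictly one-type edge, the small angle variation of $\gradient{h}$ over the $\varepsilon$-box together with $\vartheta\le\tfrac{\pi}{40}$ forces every edge of both $\Lplus\cap\J$ and $\Lmin\cap\J$ into a cone of aperture at most $\tfrac{\pi}{10}+2\vartheta$. The paper works in the quasivertical convention and draws a single conclusion from the slope bounds: every edge of $L_{+\vartheta}$ has slope at least $\tfrac{\pi}{4}+\tfrac{\pi}{20}$ and every edge of $L_{-\vartheta}$ slope at least $\tfrac{\pi}{4}$, so neither polyline can reach the edge of $\J$ opposite to $e$ (that would force some edge to have slope below $\tfrac{\pi}{4}$), and from there the adjacency of the exit edge $e'$ and the uniqueness of the two boundary intersections per polyline are read off directly from the same slope estimates.

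You diverge at exactly the step where the exit edge is pinned down, and that step in your write-up is the genuine gap. You appeal to the clause in the $\varepsilon$-funnel definition that one of the two fences meets a transition $\varepsilon$-box in a \emph{single} polyline edge lying in a corner grid box, and you use the perpendicularity of the two $\partial\J$-edges of that corner grid box to force $e'$ to be adjacent to $e$. But the lemma's hypotheses are strictly weaker than the $\varepsilon$-funnel definition: the lemma assumes only that both polylines enter through a common edge $e$ and that some edge of one of them is strictly quasihorizontal or strictly quasivertical. Nothing in those hypotheses gives you the single-edge-at-a-corner structure; that structure is part of what the lemmas in this subsection (including this one) are used to establish, so invoking it here is circular. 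Concretely, in your quasihorizontal normalization nothing in the hypotheses prevents $e$ from being the west (upstream) edge of $\J$, and then all your direction estimates (slope at most roughly $\tfrac{\pi}{4}$, all edges pointing east) are fully consistent with both polylines exiting through the east edge, i.e.\ the edge opposite $e$; the adjacency of $e'$ does \emph{not} follow. The paper avoids this by ruling out the opposite edge from the slope lower bound alone, a step that is not available in your normalization and that you replaced by the funnel-definition clause. Your ``no-return'' paragraph, which you flag as the delicate part, is also left at the level of a sketch (one-grid-box enlargement plus connected super-level sets), but that part is at roughly the same level of detail as the paper's own ``it is easy to see'' and is not the essential difficulty; the essential difficulty is the unjustified use of the corner-single-edge structure.
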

\begin{proof}
  Assume that $L_{+\vartheta}$ has an edge $e_{+}$ which is quasivertical but not
  quasihorizontal.
  We first show that all edges of $L_{+\vartheta}$ are quasivertical (and possibly
  quasihorizontal).
  The angle between $e_{+}$ and the horizontal direction is at
  least $\arctan 2$, which is greater than $\tfrac{\pi}{4} + \tfrac{\pi}{10}$.
  Since the slope of $e_{+}$ is the slope of the vector field
  $X_{\vartheta}$ at the initial vertex of $e_{+}$, and the angle variation of
  $X_{\vartheta}$ over $\J$ is at most $\tfrac{\pi}{20}$, 
  the slope of $X_{\vartheta}$ at any point of $\J$ is at least
  $\tfrac{\pi}{4} + \tfrac{\pi}{20}$.
  Since the slope of an edge of $L_{+\vartheta}$ is the slope of $X_{\vartheta}$ at
  the initial vertex of this edge, we conclude that all edges of $L_{+\vartheta}$ are
  quasivertical.

  All edges of $L_{-\vartheta}$ are also quasivertical (and possibly quasihorizontal).
  To see this, observe that the slope of an edge of $L_{-\vartheta}$ is the slope of
  $X_{-\vartheta}$ at the initial vertex of this edge, and, hence, the slope of 
  $X_{\vartheta}$ at this initial vertex, minus $2\vartheta$.
  In other words, the slope of any edge of $L_{-\vartheta}$ is at least
  $\tfrac{\pi}{4} + \tfrac{\pi}{20} - 2\vartheta$.
  Since $\vartheta \leq \tfrac{\pi}{40}$, this slope is at least $\tfrac{\pi}{4}$.
  Therefore, all edges of $L_{-\vartheta}$ are quasivertical.

  The polylines $L_{+\vartheta}$ and $L_{-\vartheta}$ do not intersect the
  edge of $\J$ opposite $e$, since then at least one of the edges of these polylines
  would have a slope less than $\tfrac{\pi}{4}$.
  Let $e'$ be the edge containing the endpoint of $L_{\vartheta}(\J)\cap \J$.
  Then $e'$ is adjacent to $e$. 
  Given the bounds on the slope variation of the edges of the polylines, it is
  easy to see that 
  \\
  (i) the endpoint of $L_{\vartheta}(\J)$ is the only point of this polyline on $e'$;
  \\
  (ii) the endpoint $L_{-\vartheta}\cap\J$ also lies on $e'$, and this is the only
  point of this polyline on $e'$;
  \\
  (iii) none of the polylines intersects the edge opposite $e'$.
  \\
  This concludes the proof of Lemma~\ref{lemma:epsilonBox}.
\end{proof}


\paragraph{Growth of the width of quasihorizontal and quasivertical funnel parts}
The width of the funnel may grow exponentially in the number of grid
boxes it is traversing. 
The next result gives an upper bound for the growth of this width.
Even though the bounds are conservative, they
provide the tools for the construction of certified funnels for all separatrices.

A gridbox is called \textit{quasihorizontal} (\textit{quasivertical}) if it contains
a point at which $\gradient{h}$ is quasihorizontal (quasivertical). Again, a
gridbox may be both quasihorizontal and quasivertical.

An integral curve of $\gradient{h}$ in a quasihorizontal gridbox $[x_0,x_1] \times
[y_0,y_1]$ is the graph of a function $x\mapsto y(x)$, where  $y(x)$ is a solution of
the differential equation 
\begin{align}
  y'(x) &= F(x,y(x)), \label{eq:unperturbed}\\
  y(x_0) &= y_0, \nonumber
\end{align}
where $F(x,y) = \dfrac{h_y(x,y)}{h_x(x,y)}$.
Here $x$ ranges over the full interval $[x_0,x_1]$ if $y_0 \leq
y(x) \leq y_1$. Otherwise, the range of $x$ is restricted to a suitable
maximal subinterval $[\xi_0,\xi_1]$, such that $(\xi_0,y(\xi_0))$ and
$(\xi_1,y(\xi_1))$ are points on the boundary of the gridbox.
Similarly, a trajectory of $X_{\vartheta}$ in a quasihorizontal gridbox
$[x_0,x_1] \times [y_0,y_1]$ is the graph of a function $x\mapsto y(x)$,
where  $y(x)$ is a solution of the differential equation
\begin{equation}
  \label{eq:rotated}
  \frac{dy}{dx} = F_{\vartheta}(x,y),
\end{equation}
with 
\begin{equation*}
  F_\vartheta(x,y) = 
  \dfrac
  {h_x(x,y)\,\sin\vartheta + h_y(x,y)\,\cos\vartheta}
  {h_x(x,y)\,\cos\vartheta - h_y(x,y)\,\sin\vartheta}.
\end{equation*}

Similarly, a trajectory of $X_{-\vartheta}$ is the graph of a function 
$y\mapsto x(y)$, where  $x(y)$ is a solution of the differential equation
\begin{equation*}
  \frac{dx}{dy} = G_{\vartheta}(x,y),
\end{equation*}
with 
\begin{equation*}
  G_\vartheta(x,y) = 
  \frac{1}{F_\vartheta(x,y)} = 
  \dfrac
  {h_x(x,y)\,\cos\vartheta - h_y(x,y)\,\sin\vartheta}
  {h_x(x,y)\,\sin\vartheta + h_y(x,y)\,\cos\vartheta}.
\end{equation*}
Here $y$ ranges over the full interval $[y_0,y_1]$ if $x_0 \leq
x(y) \leq x_1$, or a suitable maximal subinterval otherwise.

The union of all quasihorizontal gridboxes in $\Ds$ is denoted by
$\Ds_{\qh}$, and the union of all quasivertical gridboxes by $\Ds_{\qv}$.

Even though the width of a funnel may grow exponentially in the number of grid boxes
it traverses, this growth is controlled. To this end, we introduce several 
\textit{computable} constants that only depend on the function $h$ and (the size of)
its domain $\Ds$. 
Let $A_{\qh}$, $A_{\qv}$, $B_{\qh}$, $B_{\qv}$,  $C_{\qh}$ and $C_{\qv}$
be positive constants such that
\begin{align*}
  \max_{(x,y)\in\Ds_{\qh}} |F(x,y)| \leq A_{\qh},
  \quad &
  \max_{(x,y)\in\Ds_{\qv}} |G(x,y)| \leq A_{\qv}, 
  \\[1.2ex]
  \max_{(x,y)\in\Ds_{\qh}} |\frac{\partial F}{\partial x}(x,y)| \leq B_{\qh},
  \quad &
  \max_{(x,y)\in\Ds_{\qv}} |\frac{\partial G}{\partial y}(x,y)| \leq B_{\qv},
  \\[1.2ex]
  \max_{(x,y)\in\Ds_{\qh}} |\frac{\partial F}{\partial y}(x,y)| \leq C_{\qh},
  \quad &
  \max_{(x,y)\in\Ds_{\qv}} |\frac{\partial G}{\partial x}(x,y)| \leq C_{\qv}.
\end{align*}
Note that
\begin{equation*}
  F_\vartheta(x,y) - F(x,y) 
  =
  \dfrac
  {(h_x(x,y)^2+h_y(x,y)^2)\,\sin\vartheta}
  {h_x(x,y)^2\,\cos\vartheta - h_x(x,y)\,h_y(x,y)\,\sin\vartheta}.
\end{equation*}
Let $M^{(1)}_{\qh}$ be a dyadic number such that
\begin{equation}
  \label{eq:L1}
  \max_{(x,y)\in\Ds_{\qh}}\left|\frac{h_y(x,y)}{h_x(x,y)}\right| \leq M^{(1)}_{\qh}.
\end{equation}
Take $\vartheta_{\qh} \in (0,\tfrac{1}{2}\pi)$ such that 
$\tan\vartheta_{\qh} \leq \dfrac{1}{2M^{(1)}_{\qh}}$.
Finally, let $M^{(2)}_{\qh}$ be a constant such that 
\begin{equation}
  \label{eq:M2}
  \max_{(x,y)\in\Ds_{\qh}}
  \left|\frac{h_x(x,y)^2+h_y(x,y)^2}{h_x(x,y)^2\cos\vartheta_{\qh}}\right| \leq M^{(2)}_{\qh}. 
\end{equation}
Taking $M_{\qh} = \dfrac{M^{(2)}_{\qh}}{2 M^{(1)}_{\qh}}$, we have,
for $|\vartheta| \leq \vartheta_{\qh}$:
\begin{equation}
  \label{eq:diffFh}
  \max_{(x,y)\in\Ds_{\qh}}| F_\vartheta(x,y) - F(x,y) | \leq M_{\qh} \sin\vartheta.
\end{equation}
Similarly, there are (computable) constants $M_{\qv}$ and $\vartheta_{\qv}$ such
that
\begin{equation}
  \label{eq:diffFv}
  \max_{(x,y)\in\Ds_{\qv}}| G_\vartheta(x,y) - G(x,y) | \leq M_{\qv} \sin\vartheta,
\end{equation}
for $|\vartheta| \leq \vartheta_{\qv}$.
Finally, let the constants $c_0$, $c_1$ and $\vartheta_0$ be defined by
\begin{align}
  c_0 &= 2\max(C_{\qh}+A_{\qh}B_{\qh},C_{\qv}+A_{\qv}B_{\qv})   \label{eq:c0}\\
  c_1 &= \max(\frac{1}{2M_{\qh}K_{\qh}},\frac{1}{2M_{\qv}K_{\qv}})  \label{eq:c1}\\
  \vartheta_0 &= \max(\vartheta_{\qh},\vartheta_{\qv}). \label{eq:theta0}
\end{align}
%
%
%
%
The next result provides an upper bound for the growth of the funnel
width along a quasihorizontal part of its bounding polylines. We assume
that the funnel runs from left to right, so its initial points are on the line
with smallest $x$-coordinate. If the funnel runs from right to left, a
similar result is obtained.
\begin{lem}
  \label{lemma:fenceWidth}
  Let $y_{\vartheta,w}, y_{-\vartheta,w}:[a,b] \rightarrow [c,d]$ be the
  piecewise linear functions the graphs of which are quasihorizontal
  parts of the polylines $L_{\vartheta}$ and $L_{-\vartheta}$ for a grid
  with edge length $w$, respectively. Let $\Delta$ be an upper bound
  for the distance of the initial points of these polylines, i.e.,
  \begin{equation*}
    |y_{\vartheta,w}(a) - y_{-\vartheta,w}(a)| \leq \Delta.
  \end{equation*}
  Then
  the width of the fence, bounded by $L_{\vartheta}$ and
  $L_{-\vartheta}$, is bounded:
  \begin{equation*}
    |y_{\vartheta,w}(x) - y_{-\vartheta,w}(x)| 
    \leq 
    \Delta\,e^{C_{\qh}(x-a)}
    +
    (c_0w+c_1\sin\vartheta)\,\dfrac{e^{C_{\qh}(x-a)}-1}{C_{\qh}}.
  \end{equation*}
\end{lem}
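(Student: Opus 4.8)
The plan is to set up differential inequalities for the two piecewise-linear functions $y_{\vartheta,w}$ and $y_{-\vartheta,w}$, compare each of them to the exact trajectory of the corresponding rotated vector field, and then combine everything via a Gronwall-type estimate. First I would recall that the graphs of $y_{\vartheta,w}$ and $y_{-\vartheta,w}$ are Euler-type polylines: on each grid box the slope of an edge of $L_{\vartheta}$ equals $F_{\vartheta}$ evaluated at the \emph{initial} vertex of that edge, rather than along the edge. So $y_{\vartheta,w}$ is exactly the polygonal output of Euler's method (with variable but bounded step) applied to the ODE $y' = F_{\vartheta}(x,y)$, and likewise $y_{-\vartheta,w}$ for $y' = F_{\vartheta}$ read against the $-\vartheta$-rotated field. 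The first step is therefore to invoke the quantitative Euler error bound from \cite{jw-dedsa-91} (the ``Error in Euler's method'' theorem quoted in the preamble, and developed in \ref{sec:math}): the distance between the Euler polyline with step $\le w$ and the true solution of $y'=F_{\vartheta}(x,y)$ through the same initial point is at most a constant times $w$, amplified by the usual $e^{L(x-a)}/L$ factor, where $L$ is a Lipschitz constant of $F_{\vartheta}$ in $y$. On $\Ds_{\qh}$ one has $|\partial F_{\vartheta}/\partial y|$ close to $|\partial F/\partial y| \le C_{\qh}$ (the rotation changes this only by $O(\sin\vartheta)$, absorbed into $c_1$), and the local truncation error is controlled by $|\partial F/\partial x| + |F|\,|\partial F/\partial y| \le B_{\qh} + A_{\qh}C_{\qh}$, which is where the constant $c_0 = 2\max(C_{\qh}+A_{\qh}B_{\qh},\dots)$ enters.

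Next I would handle the $\sin\vartheta$ term. The true trajectory of $X_{\vartheta}$ (the graph of the exact solution of $y'=F_{\vartheta}$) and the true trajectory of $X$ through the same initial point differ because $|F_{\vartheta}-F| \le M_{\qh}\sin\vartheta$ on $\Ds_{\qh}$ by inequality (\ref{eq:diffFh}). A standard comparison argument for ODEs with the same Lipschitz constant $C_{\qh}$ in $y$ but right-hand sides differing by at most $M_{\qh}\sin\vartheta$ gives that the two exact solutions separate by at most $M_{\qh}\sin\vartheta\,(e^{C_{\qh}(x-a)}-1)/C_{\qh}$. The same holds, with $M_{\qv}$, for the $-\vartheta$ trajectory. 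Combining: let $u(x)=y_{\vartheta,w}(x)$, let $\tilde u(x)$ be the exact $X$-trajectory from $y_{\vartheta,w}(a)$, let $v(x)=y_{-\vartheta,w}(x)$, and let $\tilde v(x)$ be the exact $X$-trajectory from $y_{-\vartheta,w}(a)$. By the triangle inequality
\[
  |u(x)-v(x)| \le |u(x)-\tilde u(x)| + |\tilde u(x)-\tilde v(x)| + |\tilde v(x)-v(x)|.
\]
The middle term is bounded by $\Delta\,e^{C_{\qh}(x-a)}$ since the exact flow of $X=\gradient h$ contracts/expands distances between solutions at rate $\le C_{\qh}$ (Gronwall on $w=\tilde u-\tilde v$, using the $y$-Lipschitz bound $C_{\qh}$ for $F$). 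The two outer terms are each bounded, via the previous paragraph, by a sum of an Euler-step piece $\lesssim w\,(e^{C_{\qh}(x-a)}-1)/C_{\qh}$ and a rotation piece $\lesssim \sin\vartheta\,(e^{C_{\qh}(x-a)}-1)/C_{\qh}$; collecting the coefficients into $c_0 w + c_1\sin\vartheta$ gives precisely the stated bound.

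The main obstacle I anticipate is keeping the bookkeeping of constants honest: I must make sure the Lipschitz constant used in the exponent really is $C_{\qh}$ for \emph{all three} comparisons (the Euler bound for $F_{\vartheta}$, the $F_{\vartheta}$-vs-$F$ comparison, and the $\tilde u$-vs-$\tilde v$ Gronwall step), which requires checking that the rotation by $\pm\vartheta$ does not inflate $|\partial F_{\vartheta}/\partial y|$ past $C_{\qh}$ — or, more honestly, that any such inflation is $O(\sin\vartheta)$ and can be folded into the $c_1\sin\vartheta$ term without changing the exponent; this is exactly the role of the choices $\tan\vartheta_{\qh}\le 1/(2M^{(1)}_{\qh})$ and the definitions (\ref{eq:c0})–(\ref{eq:theta0}). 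A secondary technical point is that the functions here are only piecewise linear, so the ``differential inequality'' arguments must be phrased integrally (the polyline satisfies $u(x) = u(a) + \int_a^x F_{\vartheta}(\lfloor s\rfloor_{\text{grid}}, u(\lfloor s\rfloor_{\text{grid}}))\,ds$), but this is routine once the Euler error theorem of \ref{sec:math} is in hand. Finally, one should note the quasihorizontal hypothesis guarantees $|F| \le A_{\qh}$ and $F$ is $C^1$ on $\Ds_{\qh}$, so all the invoked bounds are finite; the analogous statement for quasivertical parts follows by exchanging the roles of $x$ and $y$ and replacing $F,A_{\qh},B_{\qh},C_{\qh},M_{\qh}$ by $G,A_{\qv},B_{\qv},C_{\qv},M_{\qv}$.
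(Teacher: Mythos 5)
Your proposal is correct and takes essentially the same route as the paper: the paper also bounds $|y_{\vartheta,w}-y_{-\vartheta,w}|$ by a triangle inequality through the exact trajectories, using the Euler error theorem from \cite{jw-dedsa-91} for the polyline-to-exact-solution terms and the Fundamental Inequality (with $\eps_1=\eps_2=M_{\qh}\sin\vartheta$, $\delta=\Delta$) for the remaining term, and then folds the coefficients into $c_0 w + c_1\sin\vartheta$. Your variant splits the Fundamental Inequality step into a plain Gronwall estimate for two exact $F$-trajectories plus two $F_{\vartheta}$-versus-$F$ perturbation estimates, which yields an identical bound; the concern you flag about the Lipschitz constant of $F_{\vartheta}$ versus that of $F$ is a real (minor) point that the paper glosses over, and your remark that any inflation is $O(\sin\vartheta)$ and absorbable into $c_1$ is the right way to dispose of it.
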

\begin{proof}
Let $y_{\pm\vartheta}(x)$ be the exact solution of the rotated system with
initial condition $y_{\pm\vartheta}(a)$. In particular, 
$|y_{\vartheta}(a) - y_{-\vartheta}(a)| \leq \Delta$.
Then (\ref{eq:diffFh}) implies
\begin{equation*}
  \bigl|\frac{dy_{\pm\vartheta}}{dx}(x) - F(x,y_{\pm\vartheta}(x))\bigr| 
  = 
  \bigl| F_{\pm\vartheta}(x,y_{\pm\vartheta}(x) - F(x,y_{\pm\vartheta}(x))\bigr| 
  \leq 
  M_{\qh}\sin\vartheta.
\end{equation*}
Therefore, according to the Fundamental
Inequality~\cite[Theorem~4.4.1]{jw-dedsa-91}---See also 
\ref{sec:math}---we have
\begin{equation}
  \label{eq:diffSolutions2}
  |y_\vartheta(x) - y_{-\vartheta}(x)| 
  \leq 
  \Delta\,e^{C_{\qh}(x-a)}
  +
  \dfrac{2M_{\qh}\sin\vartheta}{C_{\qh}}\,(e^{C_{\qh}(x-a)}-1).
\end{equation}
The interval $[a,b]$ is subdivided into a finite number of subintervals of length
at most $w$, where the endpoints correspond to the $x$-coordinates of
the breakpoints of the fences $L_{\vartheta}$ and $L_{-\vartheta}$.
Let $y_{\vartheta,w}$ be the Euler approximation to the ordinary differential
equation (\ref{eq:rotated}). Its graph is (a quasihorizontal) part of
the fence $L_{\vartheta}$.
Theorem~4.5.2 in~\cite{jw-dedsa-91}---See also 
\ref{sec:math}---gives the following explicit bound for the
error in Euler's method:
\begin{equation}
  \label{eq:diffSolutions3}
  |y_{\vartheta,w}(x) - y_{\vartheta}(x)|
  \leq 
  w\,\frac{B_{\qh}+A_{\qh}C_{\qh}}{C_{\qh}}\,(e^{C_{\qh}(x-a)}-1).
\end{equation}
We get a similar inequality for the Euler approximation $y_{-\vartheta,w}$
of $y_{-\vartheta}$.
Combining (\ref{eq:diffSolutions2}) and (\ref{eq:diffSolutions3}), and
using (\ref{eq:c0}) and (\ref{eq:c1}), yields
\begin{align*}
  |y_{\vartheta,w}(x) - y_{-\vartheta,w}(x)|
  & \leq 
  \Delta\,e^{C_{\qh}(x-a)}
  + \\
  & 2(w(B_{\qh}+A_{\qh}C_{\qh}) + M_{\qh}
  \sin\vartheta)\,\frac{e^{C_{\qh}(x-a)}-1}{C_{\qh}} \\
  & = 
  \Delta\,e^{C_{\qh}(x-a)}
  +
  (c_0w+c_1\sin\vartheta)\,\dfrac{e^{C_{\qh}(x-a)}-1}{C_{\qh}}.
\end{align*}
\end{proof}

\noindent
A similar result holds for quasivertical trajectories. 
Next we need to control the increase of the funnel width upon 
transition from a quasihorizontal to a quasivertical part its bounding
polylines (or from a quasivertical to a quasihorizontal part).

\paragraph{Transitions: bounded increase of funnel width}
Transition from a quasihorizontal to a quasivertical, or from a quasivertical to a
quasihorizontal part of the funnel takes place at an $\varepsilon$-box. 
If the width of the funnel at the `entry' of the box is less than the width $w$ of a
grid box, then the width may increase, but it will not be greater than $2w$ 
at the exit. This is made more precise by the following result.
\begin{lem}
  \label{lemma:errorOnTurn}
  Let $\J$ be a $\varepsilon$-box as in Lemma~\ref{lemma:epsilonBox}, where,
  moreover, the initial points $p$ and $q$ of $L_{\vartheta}\cap\J$ and
  $L_{-\vartheta}\cap\J$,
  respectively, are on the boundary of the gridboxes containing the vertices of edge
  $e$ of $\J$. 
  If the distance between $p$ and $q$ is at least $w$, then the distance between the
  terminal points  $\ol{p}$ and $\ol{q}$ of $L_{\vartheta}\cap\J$ and
  $L_{-\vartheta}\cap\J$, respectively, is less than the distance of $p$ and $q$.
  If the distance between $p$ and $q$ is less than $w$, then the distance of 
  $\ol{p}$ and $\ol{q}$ is at most $2 w$.
\end{lem}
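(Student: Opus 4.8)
The plan is to extract from Lemma~\ref{lemma:epsilonBox} and its proof a precise picture of how the two fences sit inside $\J$, and then to reduce the claim to an elementary estimate on monotone Lipschitz graphs. First I would adapt coordinates to the box: by Lemma~\ref{lemma:epsilonBox} both $L_{\vartheta}$ and $L_{-\vartheta}$ enter $\J$ through $e$ and leave it through a common edge $e'$ of $\J$ adjacent to $e$; let $O$ be the vertex shared by $e$ and $e'$, put $O$ at the origin, and write $\J=[0,L]\times[0,L]$ with $e=\{0\}\times[0,L]$, $e'=[0,L]\times\{0\}$ and $L=kw\le\varepsilon$ for some integer $k\ge1$. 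Denote the entry points by $p=(0,p_2)$, $q=(0,q_2)$ and the exit points by $\ol p=(\ol p_1,0)$, $\ol q=(\ol q_1,0)$, so $|p-q|=|p_2-q_2|$ and $|\ol p-\ol q|=|\ol p_1-\ol q_1|$.

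The key input is the slope information obtained in the proof of Lemma~\ref{lemma:epsilonBox}: up to interchanging the two fences, every edge of $L_{\vartheta}\cap\J$ makes an angle at least $\tfrac{\pi}{4}+\tfrac{\pi}{20}$ with the horizontal, and every edge of $L_{-\vartheta}\cap\J$ an angle at least $\tfrac{\pi}{4}+\tfrac{\pi}{20}-2\vartheta\ge\tfrac{\pi}{4}$ (here I use $\vartheta\le\tfrac{\pi}{40}$, cf.\ (\ref{eq:theta1})). Since the angle variation of $\gradient h$ over the $\varepsilon$-box $\J$ is at most $\tfrac{\pi}{20}$ (Lemma~\ref{lemma:angleBox} with (\ref{eq:epsilon1})), all these edge directions keep one fixed horizontal sign, so $L_{\vartheta}\cap\J$ and $L_{-\vartheta}\cap\J$ are graphs of monotone functions $x=\xi_{\vartheta}(y)$ and $x=\xi_{-\vartheta}(y)$ over $[0,p_2]$ and $[0,q_2]$ with $|\xi_{\vartheta}'|\le\kappa_{\vartheta}:=\cot(\tfrac{\pi}{4}+\tfrac{\pi}{20})<1$ and $|\xi_{-\vartheta}'|\le\kappa_{-\vartheta}:=\cot(\tfrac{\pi}{4}+\tfrac{\pi}{20}-2\vartheta)\le1$. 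Integrating yields the two inequalities I would rely on, namely $\ol p_1\le\kappa_{\vartheta}\,p_2<p_2$ and $\ol q_1\le\kappa_{-\vartheta}\,q_2\le q_2$, and, since the fences are disjoint inside $\J$, the order of $p,q$ along $e$ is that of $\ol p,\ol q$ along $e'$.

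Next I would feed in the hypothesis that $p$ and $q$ lie in corner grid boxes of $e$, together with the fact that one fence --- say $L_{\vartheta}$ --- meets $\J$ only near the corner $O$; this gives $0<p_2\le w$, and then the displayed bound gives $\ol p_1<p_2$ with $L_{\vartheta}\cap\J\subset[0,w]^2$. The assumption on $q$ leaves two cases. If $q_2\le w$, then $L_{-\vartheta}\cap\J\subset[0,w]^2$ as well, so $\ol q_1\le w$ and $|\ol p-\ol q|\le\max(\ol p_1,\ol q_1)\le w\le2w$; and since $p_2,q_2\in(0,w]$ forces $|p-q|<w$, both assertions hold in this case. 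If instead $q_2\ge L-w$, then, relabelling so that $q$ lies above $p$ on $e$ (hence $\ol q$ to the right of $\ol p$ on $e'$), one has $|p-q|=q_2-p_2\ge L-2w$ and $|\ol p-\ol q|=\ol q_1-\ol p_1\le\kappa_{-\vartheta}q_2$, so
\[
  |p-q|-|\ol p-\ol q|\ \ge\ (1-\kappa_{-\vartheta})\,q_2-p_2\ \ge\ (1-\kappa_{-\vartheta})(L-w)-w,
\]
which I would argue is positive, giving $|\ol p-\ol q|<|p-q|$.

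The main obstacle is precisely this last positivity. Because $\kappa_{-\vartheta}\to1$ as $\vartheta\to\tfrac{\pi}{40}$, the right-hand side above is only obviously positive once $\J$ is a large enough multiple of $w$ wide. There are three ways to finish: (a) strengthen (\ref{eq:theta1}) so that $\vartheta$ is bounded away from $\tfrac{\pi}{40}$ by a fixed gap --- harmless, since $\vartheta$ is a free small parameter --- which makes $1-\kappa_{-\vartheta}$ a fixed positive constant and lets one insist, after at most one extra subdivision, that $L\ge\bigl(2+\tfrac{1}{1-\kappa_{-\vartheta}}\bigr)w$; (b) record in the construction of $\varepsilon$-boxes in Section~\ref{sec:widthFunnel} that transition boxes are taken at least that many grid boxes wide; or (c) observe that when $L$ is too small for (a), both $|p-q|\le L$ and $|\ol p-\ol q|<L$ are $O(w)$, and the crude bounds $\ol q_1<L$, $\ol p_1<p_2$ already yield $|\ol p-\ol q|<\max(|p-q|,2w)$. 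I would carry out option (a) for definiteness. Finally, the statement for a quasivertical-to-quasihorizontal transition follows by the symmetry $x\leftrightarrow y$ (and $\qh\leftrightarrow\qv$), so I would state it and skip the repetition.
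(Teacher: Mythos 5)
You have correctly set up the geometry and correctly diagnosed a gap in your own argument; the trouble is that all three of your proposed fixes change the hypotheses (a stronger bound on $\vartheta$, a constraint on the size of the transition box, or a weakened conclusion), whereas the paper closes the gap without any of these by using two pieces of information that your estimate throws away.

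First, you bound the slope of the single edge of $L_{\vartheta}\cap\J$ only by \emph{quasiverticality}, giving $\kappa_\vartheta=\cot(\tfrac{\pi}{4}+\tfrac{\pi}{20})<1$. But Lemma~\ref{lemma:epsilonBox} gives you more: that edge is quasivertical and \emph{not} quasihorizontal, so its direction $v$ satisfies $|v_1|<\tfrac12|v_2|$, i.e.\ the angle $\beta_+$ it makes with the vertical edge $e$ satisfies $\tan\beta_+\le\tfrac12$. Second, and more importantly, you estimate $\ol q_1$ and $\ol p_1$ \emph{separately} ($|\ol p-\ol q|\le\ol q_1\le\kappa_{-\vartheta}q_2$ simply discards $\ol p_1$), whereas the telescoping quantity $\ol q_1-\ol p_1$ is what is actually controlled. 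Writing $a=p_2$ (so $a\le w$) and $d=|p-q|$, and letting $\beta_-$ denote the angle of the chord $q\ol q$ with $e$, one has
\[
\ol q_1-\ol p_1 \;=\; (d+a)\tan\beta_- - a\tan\beta_+ \;=\; d\,\tan\beta_- \;+\; a\,\bigl(\tan\beta_- - \tan\beta_+\bigr).
\]
Since the angle variation of $\gradient h$ over the $\varepsilon$-box is at most $\tfrac{\pi}{20}$, the chord angle $\beta_-$ satisfies $\beta_-\le\beta_++\tfrac{\pi}{20}$, so $\tan\beta_-\le\tan(\beta_++\tfrac{\pi}{20})\le\tfrac34$ and $\tan\beta_--\tan\beta_+\le\tan(\beta_++\tfrac{\pi}{20})-\tan\beta_+<\tfrac14$ (both numerical bounds rely on $\tan\beta_+\le\tfrac12$). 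Hence $\ol q_1-\ol p_1<\tfrac34 d+\tfrac14 a\le d$ once $w\le d$, with no condition whatsoever on the edge length $L=kw$ of $\J$ and no shrinking of $\vartheta$ beyond (\ref{eq:theta1}). Your estimate fails precisely because $\kappa_{-\vartheta}q_2$ does not telescope against anything, while the paper's expression is, up to the $O(a)$ correction, just $d\tan\beta_-$ with $\tan\beta_-$ bounded strictly below $1$ by the not-quasihorizontal hypothesis. Your handling of the $d<w$ case and the appeal to the $x\leftrightarrow y$ symmetry at the end are fine and match the paper.
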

\begin{proof}
  Assume that the first edge of $L_{\vartheta}$ is quasivertical, but not
  quasihorizontal. 
  Edge $e$ of $\J$ is then vertical.
  Assume that this polyline consists of a single edge, namely the line segment
  $p\ol{p}$.
  \begin{figure}[h]
    \psfrag{p}{$p$}
    \psfrag{q}{$q$}
    \psfrag{op}{$\ol{p}$}
    \psfrag{oq}{$\ol{q}$}
    \psfrag{bp}{$\beta_{+}$}
    \psfrag{bm}{$\beta_{-}$}
    \psfrag{bpl}{}
    \psfrag{bmi}{}
    \psfrag{d}{$d$}
    \psfrag{od}{$\ol{d}$}
    \psfrag{a}{$a$}
    \centering
    \includegraphics[width=0.48\textwidth]{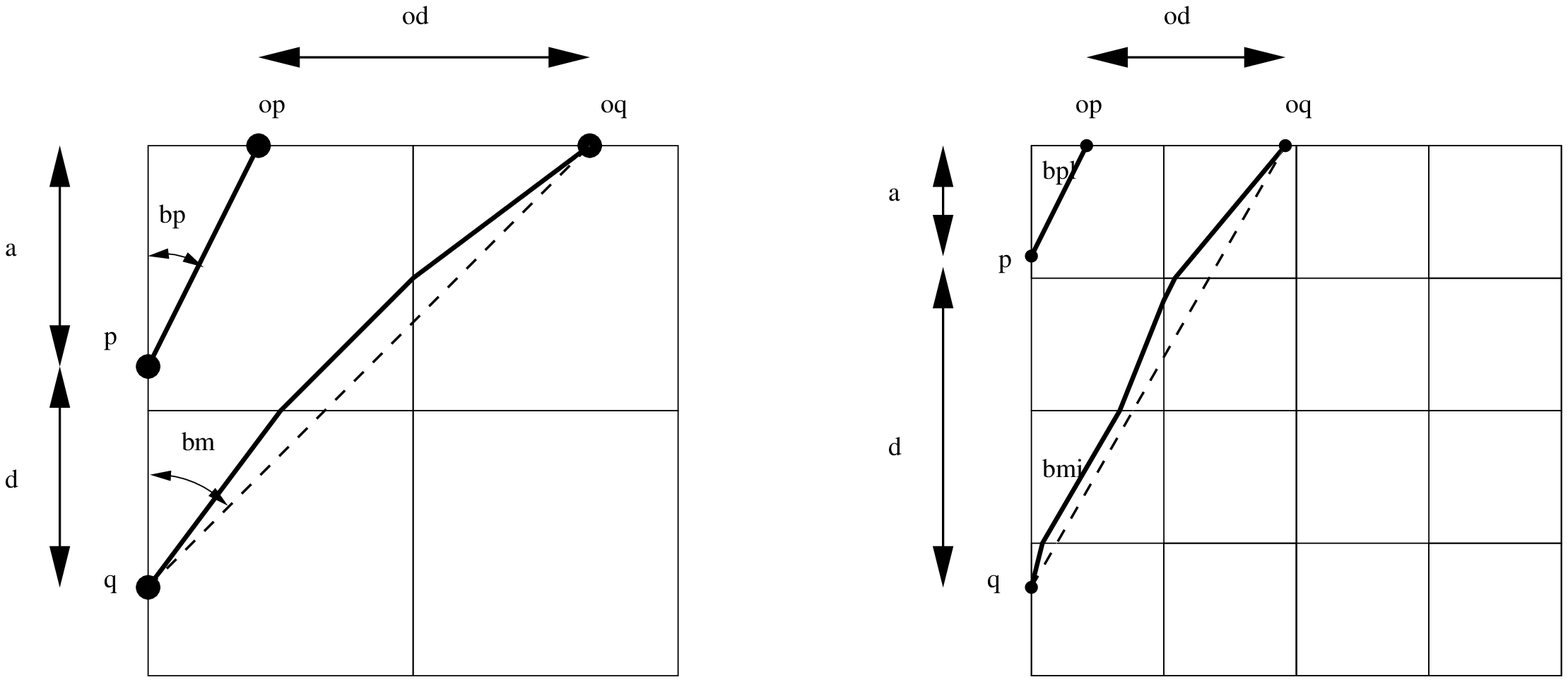}
    \caption{The distance between the two polylines upon entry and exit
      of a box. 
      Left: If the distance $d$ between the initial points $p$ and $q$
      of the polylines is less than the edge-length $w$ of a grid box, then the
      distance $\ol{d}$ between the terminal points $\ol{p}$ and $\ol{q}$ is less
      than $2w$.
      Right: 
      Otherwise, the distance $\ol{d}$ between the terminal points is less than $d$.
    }
    \label{fig:turn_box}
  \end{figure}
  Let $\beta_{+}$ be the angle between $p\ol{p}$ and edge $e$, then
  $\arctan{\frac{1}{2}} \leq \beta_+ \leq \vartheta + \arctan{\frac{1}{2}}$.
  Let $\beta_{-}$ be the angle between the line segment $q\ol{q}$ and edge
  $e$, then $\beta_{-}$ is inbetween the smallest and largest slope of any edge of 
  $L_{-\vartheta}$. Since the angle variation of $X$ over $\J$ is less than
  $\tfrac{\pi}{20}$, the angle $\beta_{-}$ is greater than $\beta_+-\tfrac{\pi}{20}$.
  Let $a$ be the distance of $p$ to the nearest vertex of $e$, then $a \leq w$.
  If $d \geq w$, the distance $\ol{d}$ between $\ol{p}$ and $\ol{q}$ satisfies
  \begin{align*}
    \ol{d} 
    &= 
    (d+a)\,\tan\beta_{-} - a \tan\beta_{+} \\
    &\leq
    d\,\tan(\beta_{+}+\tfrac{\pi}{20}) + 
    a\,(\tan(\beta_{+}+\tfrac{\pi}{20}) - \tan\beta_{+})\\
    &<
    \tfrac{3}{4}d + \tfrac{1}{4}a\\
    &\leq 
    d,
  \end{align*}
  since $a \leq w \leq d$.
  Here we used $\tan\beta_+\leq\tfrac{1}{2}$ to get
  \begin{equation*}
    \tan(\beta_{+}+\tfrac{\pi}{20})
    = 
    \frac
    {\tan\beta_++\tan\frac{\pi}{20}}
    {1-\tan\beta_+\tan\frac{\pi}{20}}
    \leq 
    \frac
    {\tfrac{1}{2}+\tan\frac{\pi}{20}}
    {1-\tfrac{1}{2}\tan\frac{\pi}{20}}
    \leq \tfrac{3}{4}.
  \end{equation*}
  Since $\arctan\frac{1}{2} - \frac{\pi}{40} \leq \arctan\frac{1}{2} - \vartheta \leq
  \beta_{+} \leq \arctan\frac{1}{2}$, a short computation shows that
  $\tan(\beta_{+}+\tfrac{1}{20}\pi)-\tan\beta_{+} < \tfrac{1}{4}$.

  If $d < w$, then $q$ lies in the same gridbox as $p$, or in a gridbox adjacent 
  to it. Then it is easy to see that $\ol{p}$ lies in the
  same grid box as $p$, and $\ol{q}$ also lies in this box, or in a box adjacent to it.
  Therefore, $\ol{d} \leq 2 w$ in this case.

  \noindent
  If $L_{-\vartheta}$ consists of a single edge, then the argument is similar.
\end{proof}

Lemmas~\ref{lemma:fenceWidth} and \ref{lemma:errorOnTurn} provide the
following result on the upper bound on the funnel width of a separatrix
with $M$ transitions between quasihorizontal and quasivertical parts.
\begin{cor}
  \label{cor:totalWidth}
  Let $T$ be the (computable) edge length of a bounding square of the domain $\D$ of the function $h$), 
  and let $M$ be the total number
  of quasihorizontal and quasivertical parts of the polylines bounding
  a separatrix funnel.
  Let $C = \max(C_{\qh},C_{\qv})$ and let $D = \min(C_{\qh},C_{\qv})$.
  Then the width of the funnel does not exceed
  \begin{equation*}
    (c_1\vartheta + c_2w)\,\frac{e^{CMT}}{D},
  \end{equation*}
 provided $\vartheta\leq\vartheta_0$ where $c_2 = 2 + \dfrac{c_0}{D}$, with $c_0$ and $c_1$ given by
  (\ref{eq:c0}) and (\ref{eq:c1}), respectively. 

  \medskip\noindent
  In particular, this width is at most $\varepsilon$ if 
 \begin{equation}
    \label{eq:errorEpsilon}
    c_1\vartheta + c_2w \leq \frac{D}{e^{CMT}}\,\varepsilon.
   \end{equation}
\end{cor}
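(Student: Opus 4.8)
The plan is to chain the two width-growth estimates already available---Lemma~\ref{lemma:fenceWidth} for the monotone (quasihorizontal or quasivertical) stretches of the bounding polylines, and Lemma~\ref{lemma:errorOnTurn} for the transitions---all the way around the funnel. Parametrize the funnel from its initial separatrix interval $\J$ to its terminal curve $\J^\ast$, and write $P_1,\dots,P_M$ for its successive quasihorizontal and quasivertical parts, the transition from $P_k$ to $P_{k+1}$ occurring in an $\varepsilon$-box of the kind treated in Lemma~\ref{lemma:epsilonBox}. Let $V_0$ be an upper bound for the width at the start of $P_1$; since $V_0$ is just the length of the separatrix interval $\J$, Lemma~\ref{lemma:smallSepIntervals} lets us assume $V_0\le w$. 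Let $V_k$ be an upper bound for the width at the start of $P_{k+1}$ (equivalently, just after the $k$-th transition).

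First I would bound the growth across one part. A quasihorizontal part has horizontal extent at most $T$, so Lemma~\ref{lemma:fenceWidth}, together with $\sin\vartheta\le\vartheta$, the standing hypothesis $\vartheta\le\vartheta_0$ (which is what keeps (\ref{eq:diffFh}) and (\ref{eq:diffFv}) available), and $C_{\qh}\ge D$, bounds the width at the end of $P_k$ by $V_{k-1}G_k + (c_0w+c_1\vartheta)\tfrac{G_k-1}{D}$, where $G_k=e^{C_{\qh}T}$; a quasivertical part gives the same with $G_k=e^{C_{\qv}T}$, using the symmetric form of Lemma~\ref{lemma:fenceWidth}. Composing this with Lemma~\ref{lemma:errorOnTurn} at the $k$-th transition yields the recurrence
\[
  V_k \le \max\!\Bigl( V_{k-1}G_k + (c_0w+c_1\vartheta)\tfrac{G_k-1}{D},\ 2w \Bigr),
  \qquad 1\le k\le M,
\]
with $D\le G_k\le e^{CT}$.

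Next I would unfold the recurrence. Because each $G_k\ge 1$, once the width first exceeds $2w$ it never again falls below it, so either the final width is $\le 2w$ (and then trivially below the asserted bound), or there is a largest index $k^\ast$ at which the $\max$ is attained by $2w$, and from $P_{k^\ast}$ onward the recurrence is the plain linear one $V_k\le V_{k-1}G_k+(c_0w+c_1\vartheta)\tfrac{G_k-1}{D}$ with $V_{k^\ast}\le 2w$. Iterating this over the remaining $m\le M$ parts and using the telescoping identity $\sum_j(G_j-1)\prod_{i>j}G_i=\prod_iG_i-1$ together with $\prod_iG_i=e^{(\sum_iC_i)T}\le e^{CMT}$ gives
\[
  V_M \le 2w\,e^{CMT} + \frac{c_0w+c_1\vartheta}{D}\bigl(e^{CMT}-1\bigr),
\]
and collecting terms puts this in the form $(c_1\vartheta+c_2w)\,\dfrac{e^{CMT}}{D}$ with $c_2=2+c_0/D$, which is the claim. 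The final ``in particular'' is then immediate: if $c_1\vartheta+c_2w\le\tfrac{D}{e^{CMT}}\varepsilon$, the right-hand side is at most $\varepsilon$.

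The main obstacle is precisely the bookkeeping of this unfolding: one must see that the number $M$ of parts enters only through the exponent $e^{CMT}$ and not as an extra polynomial factor. This is where both halves of Lemma~\ref{lemma:errorOnTurn} earn their keep---the ``no increase when the width is already $\ge w$'' half prevents the $2w$ jumps from piling up at every transition, so only the jump at $k^\ast$ contributes, while the telescoping identity keeps the geometric amplification of the per-part growth terms from producing a spurious $M$. A secondary technical point is that Lemma~\ref{lemma:fenceWidth} is stated for a funnel ``running from left to right''; for quasivertical parts and for stretches traversed in a decreasing coordinate direction one invokes the stated symmetric variants, valid by the same argument applied to $G_\vartheta$ in place of $F_\vartheta$ and with the roles (or signs) of $x$ and $y$ exchanged.
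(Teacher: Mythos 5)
Your proof follows essentially the same path as the paper's: chain Lemma~\ref{lemma:fenceWidth} over the monotone stretches and Lemma~\ref{lemma:errorOnTurn} at the transitions; your ``last reset plus telescoping'' argument is just a reformulation of the paper's induction on the hypothesis~(\ref{eq:IH}), and both routes deliver the identical intermediate bound $2w\,e^{CMT}+\tfrac{c_0w+c_1\vartheta}{D}(e^{CMT}-1)$. Your final ``collecting terms'' step is the paper's own last line (and shares the same implicit assumption $D\le 1$, without which $2w\,e^{CMT}$ cannot be absorbed into $c_2w\,e^{CMT}/D$ with $c_2=2+c_0/D$; a clean fix in either argument is to take $c_2=2D+c_0$ instead).
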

\begin{proof}
  Let $\Ds \subset [a,b] \times [c,d]$, then $T \leq \max(b-a,d-c)$.
  There are $M-1$ transitions from quasihorizontal to quasivertical
  parts of the funnel, each occurring at an $\varepsilon$-box.
  Let $\Delta_0$ be the width of the initial separatrix interval,
  and let $\Delta_1, \ldots, \Delta_{M-1}$ be the width of the funnel at the
  entry of the corresponding boxes, in other words, $\Delta_k$ is the
  width at the end of the $k$-th part of the funnel.
  Using induction, we will prove that, for $k = 1, \ldots, M$:
  \begin{equation}
    \label{eq:IH}
    \Delta_k \leq 2w\,e^{kCT} + \frac{c_0w+c_1\sin\vartheta}{D}\,(e^{kCT}-1).
  \end{equation}
  
  So assume (\ref{eq:IH}) holds for $k = n-1$.
  If $\Delta_{n-1} > w$, the initial width of the $n$-th part of the
  funnel does not exceed $\Delta_{n-1}$, cf Lemma~\ref{lemma:errorOnTurn}.
  Assume that the $n$-th part of the funnel is quasihorizontal, then
  Lemma~\ref{lemma:fenceWidth} implies that the width of this part 
  at a point with horizontal coordinate $x$ is at most
  \begin{equation*}
    \Delta_{n-1}\,e^{C_{\qh}(x-a)}
    +
    (c_0w+c_1\sin\vartheta)\,\dfrac{e^{C_{\qh}(x-a)}-1}{C_{\qh}},
  \end{equation*}
  so in particular, since $D \leq C_{\qh} \leq C$ and $0 \leq x-a \leq T$:
  \begin{equation*}
    \Delta_n 
    \leq 
    \Delta_{n-1}\,e^{CT}
    +
    \frac{c_0w+c_1\sin\vartheta}{D}\,(e^{CT}-1).
  \end{equation*}
  Therefore, (\ref{eq:IH}) holds for $k = n$.
  If $\Delta_{n-1} \leq w$, then the initial width of the $n$-th part of
  the funnel is at most $2w$, cf Lemma~\ref{lemma:errorOnTurn}.
  Therefore, Lemma~\ref{lemma:fenceWidth} implies that the width of this part 
  at a point with horizontal coordinate $x$ is at most
  \begin{equation*}
    2w\,e^{C_{\qh}(x-a)}
    +
    (c_0w+c_1\sin\vartheta)\,\dfrac{e^{C_{\qh}(x-a)}-1}{C_{\qh}},
  \end{equation*}  
  so in particular
  \begin{align*}
    \Delta_n 
    & \leq 
    2w\,e^{CT}
    +
    \frac{c_0w+c_1\sin\vartheta}{D}\,(e^{CT}-1) \\
    & \leq
    2w\,e^{nCT}
    +
    \frac{c_0w+c_1\sin\vartheta}{D}\,(e^{nCT}-1),
  \end{align*}
Therefore, for $n=M$, we have
\begin{align*}
    \Delta_M 
    & \leq 
    (c_1\sin\vartheta+c_2w)\frac{e^{CMT}-1}{D} \\
    & \leq
    (c_1\vartheta + c_2w)\,\frac{e^{CMT}}{D},
  \end{align*}

  which proves the corollary.
\end{proof}

\begin{remark}
The computable constants $\vartheta_0,\,c_1,\,c_2,\,C\,$ and $D$ depend only on $\Ds$ and $h$.  
\end{remark}

In the next section, we assemble the bits and pieces into a certified
algorithm for the
construction of the MS-complex, and show how the upper
bounds on the funnel width are used to prove that this algorithm terminates.


\subsection{Construction of the MS-complex}
\label{sec:termination}

\paragraph{The Algorithm}
The construction of the MS-complex of $h$ is a rather
straightforward application of the preceding results. It uses a
parameter $M$, the (a priori unknown) number of transitions (at
$\varepsilon$-boxes) between quasihorizontal and quasivertical parts of a funnel.
Let $T$ be the edge length of a bounding square of the domain $\D$ of $h$.
Then the algorithm performs the following steps. 
\begin{description}
\item{Step 1.}
Construct certified isolating boxes $\B_1',\ldots,\B_m'$ for 
the singularities of $\gradient{h}$ (cf~Section~\ref{sec:eqSolving}).

\item{Step 2.}
Let $\Ds$ be the closure of $\D \setminus (B_1' \cup \cdots \cup
\B_m')$. Compute the constants $\vartheta_0$, $c_1$, $c_2$, $C$ and $D$,
which depend only on $h$ and $\Ds$.
Set $\varepsilon$ to the minimum of the width of the source-, sink- and
saddleboxes.

\item{Step 3.} 
Let $\vartheta$ and $w$ be such that 
$w\leq \frac{\vartheta}{2C_0\,\sqrt{2}}$, 
$\vartheta \leq \min(\frac{\pi}{40},\vartheta_0)$, 
and $c_1\vartheta + c_2w \leq \varepsilon \,D\, e^{-CMT}$
(cf Corollary~\ref{cor:totalWidth}).
Subdivide $\Ds$ until all gridboxes have maximum width $w$.
For each saddle box, compute four separatrix intervals on its
boundary, of width at most $w$.

\item{Step 4.}  
For each stable and unstable separatrix interval do the following.
Start the computation of a funnel for a separatrix by setting $M$ to a
small number $M_0$ (say 4).
Compute the fences $L_{-\vartheta}$ and $L_{\vartheta}$, keeping
track of the width of the enclosed funnel under construction and of the
number $m$ of transitions between quasihorizontal and quasivertical
parts of this funnel.

If \textit{the width of the funnel exceeds $\varepsilon$ or the number of
transitions $m$ exceeds $M$}, then abort the computation of the current
funnel, discard all funnels constructed so far, set $M$ to twice its
current value and goto Step 3.

If \textit{the funnel intersects an already constructed funnel, or a source- or
sinkbox on which it does not terminate} (i.e., if only one of its fences
intersects this box), then set $\varepsilon$ to
half its current value, discard all funnels constructed so far, and goto
Step 3.

If \textit{the funnel intersects a saddlebox $\B_i'$}, then 
decrease the size of $\B_i'$ by a factor of two via
\textit{subdivision}, discard all funnels constructed so far, 
set $\varepsilon$ to half its current value, 
and goto Step 2. 
(Note that $\Ds$ gets larger, so the constants in Step 2 have to be recomputed.)

\textit{Otherwise}, the fences end on the same component of the boundary of
$\partial\Ds$.
The enclosed funnel is simply connected, and does not
intersect any of the funnels constructed so far.
Add the funnel to the output, and reset $M$ to $M_0$ (and repeat
until all separatrices have been processed).
\end{description}

\myPara{Termination}
Since the gradient field $\gradient{h}$ is a $2D$ Morse-Smale system, its
separatrices are disjoint. Their intersections with $\Ds$ are compact, 
and have positive distance (although this distance is not known a
priori). 
In the main loop of the algorithm, the maximal funnel width
$\varepsilon$ is bisected if funnels intersect, and saddleboxes
intersected by the funnel are subdivided, so after a finite number
of iterations of the main loop its value is less than half the minimum
distance between any pair of distinct separatrices, and funnels stay
clear from saddleboxes (apart from the one containing the $\alpha$- or
$\omega$-limit of the enclosed separatrix).

Separatrices that intersect $\partial\D$ do so transversally, cf
Remark~\ref{rem:transversality}.
Therefore, after a finite number of subdivision steps, both fences around such
separatrices will intersect $\partial\D$ transversally.
Hence, eventually all funnels become disjoint, at which point the
algorithm terminates after returning a topologically correct MS-complex
for $\gradient{h}$.


\section{Implementation and experimental results}
\label{sec:implementation}
The algorithm has been implemented using the \texttt{Boost} library~\cite{boost} for IA.
All experiments have been performed on a 3GHz Intel Pentium 4 machine
under Linux with 1 GB RAM using the \texttt{g++} compiler, version
3.3.5. 
Figures \ref{fig:output1}\subref{fig:squares}-\subref{fig:circular} 
and \ref{fig:output2}\subref{fig:quartic}-\subref{fig:linear7} 
depict the output of our algorithm, for several Morse-Smale functions.
In our implementation the parameter $\vartheta$, used in the construction
of separatrix-funnels, is $\frac{\pi}{30}$, which is larger than the
theoretical bound given by Corollary~\ref{cor:totalWidth}. The algorithm
halves this angle several times, depending on the input, until the 
funnels are simply connected, mutually disjoint, and connect a
saddle-box to a source-box (for stable separatrices) or sink-box (for
unstable separatrices), in which case a topologically correct MS-complex
has been computed.
%
\begin{figure}[htb]
  \begin{center}
    \subfigure[Contour plot (left) and MS-Complex (right) of 
    $h(x,y) = -10\,x^2+x^4+10\,y^2- y^4+x + x y^2$, on the box
    ${[-4,3.5]\times[-4,3.5]}$.
    CPU-time: 11 seconds.]{
      \def\width{0.21\textwidth}
      \includegraphics[width=\width]{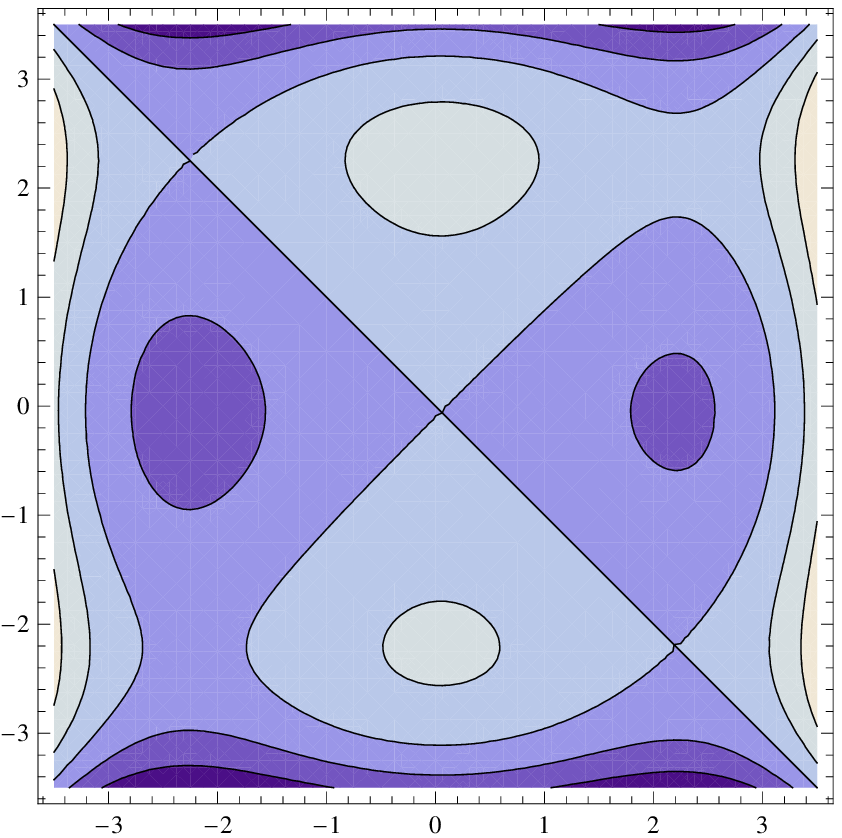}~~
      \includegraphics[width=\width]{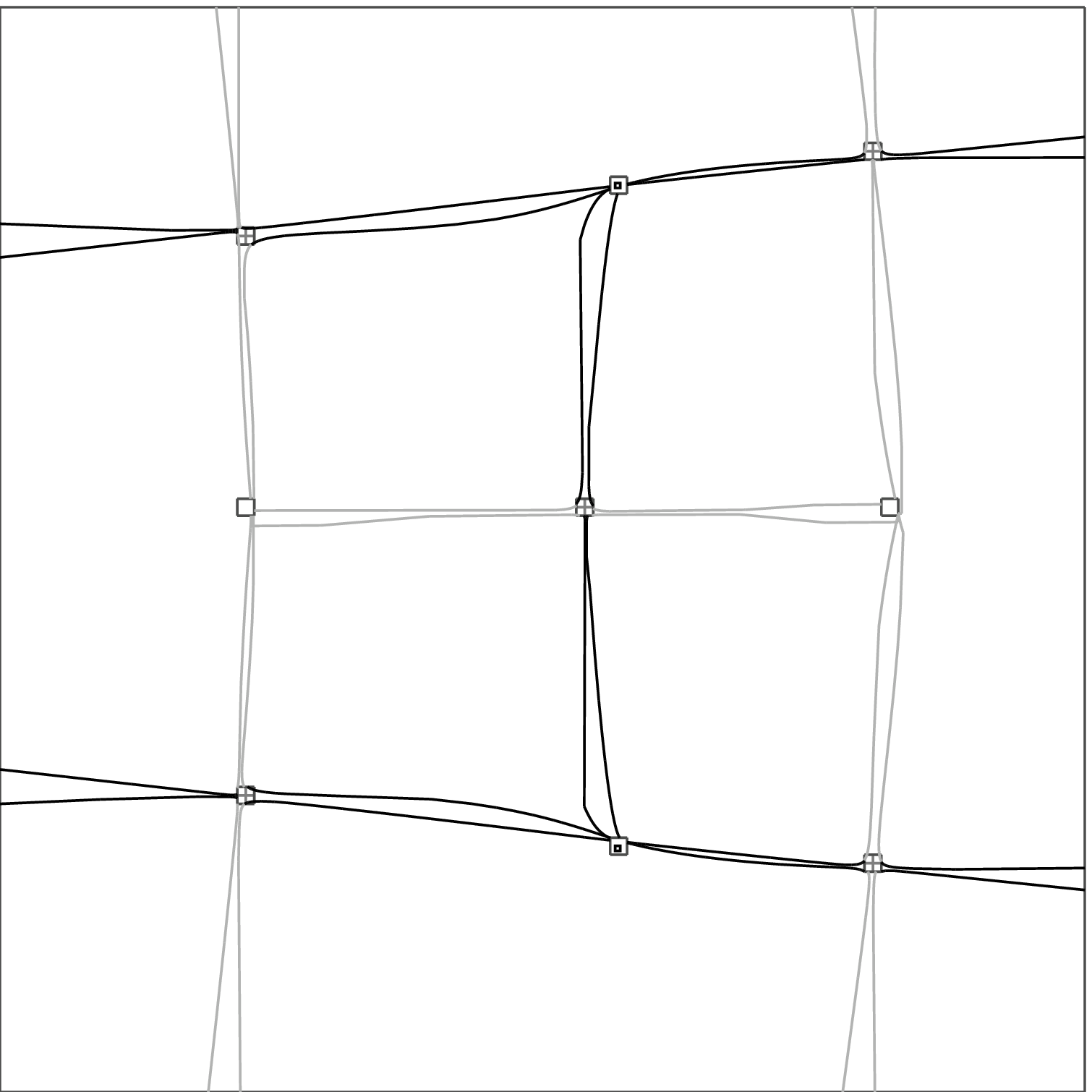}
      \label{fig:quartic}
    }~~~~~\\
  \subfigure[Contour plot (left) and MS-complex (right) of a product of
  seven linear functions, on the box ${[-7,7]\times[-7,7]}$. CPU-time:
  11.5 minutes.]{
    \def\width{0.21\textwidth}
    \includegraphics[width=\width]{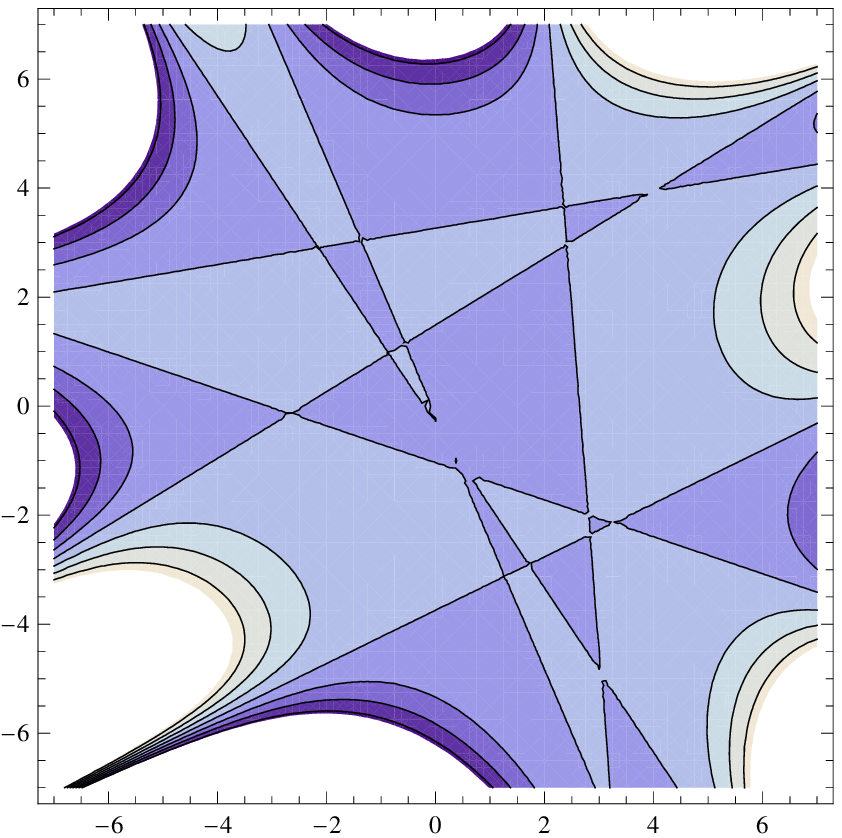}~~
    \includegraphics[width=\width]{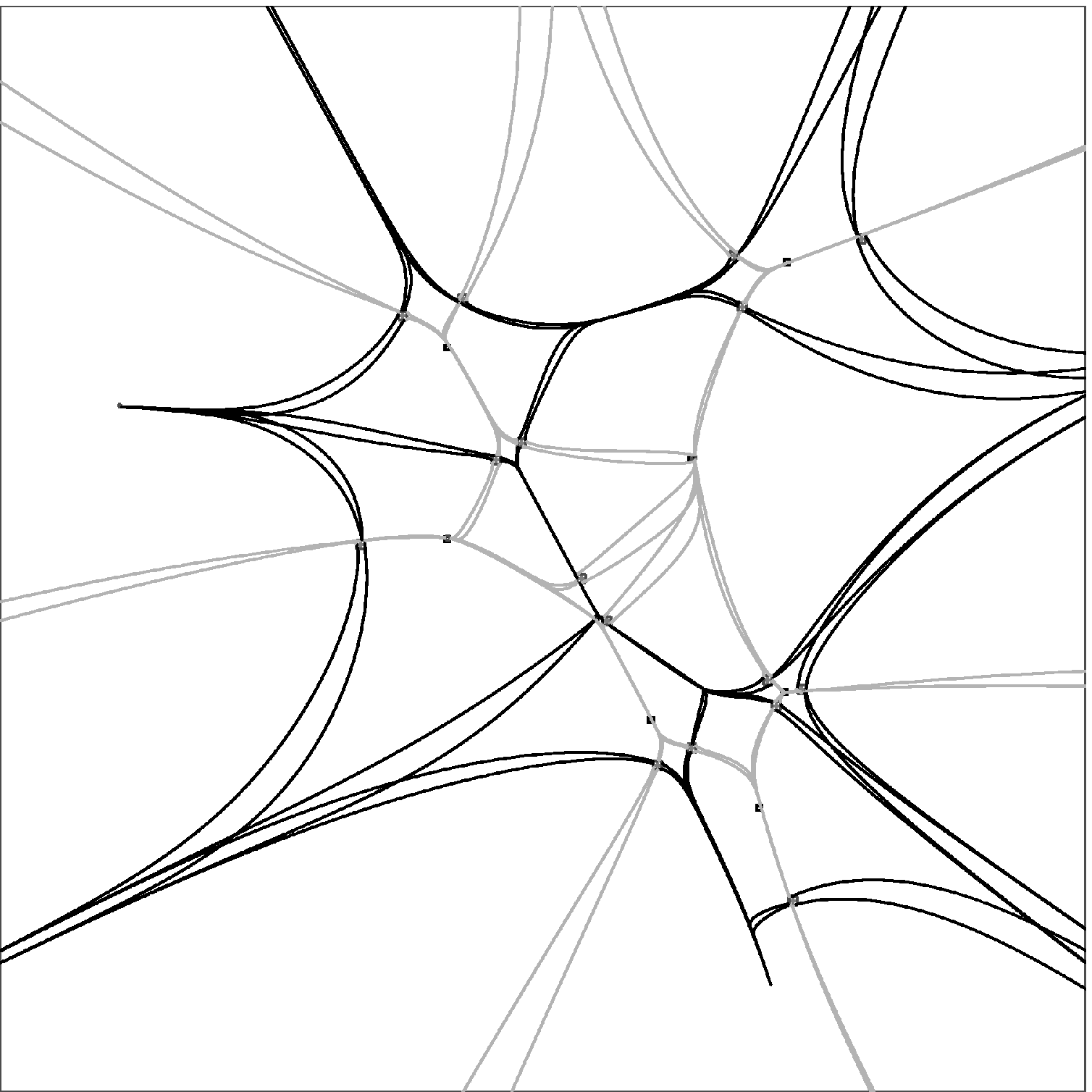}
    \label{fig:linear7}
  }

    \vspace*{-2ex}
    \caption{Contour plots of MS-functions and their Morse-Smale complexes.}
    \label{fig:output2}
  \end{center}
  \vspace*{-2ex}
\end{figure}
%
Each of the funnels with deep black boundaries contains an unstable
separatrix, whereas a funnel with light black boundaries
contains a stable separatrix. 
%
The CPU-time for computing a MS-system increases with the number of
critical points and the complexity of the vector field, as indicated in
the captions of the figures.
\section{Conclusion}
The outcome of our research is two-fold. Firstly, we compute the
topologically correct MS-complex of a Morse-Smale system. The
correct saddle-sink or saddle-source connectivity can also be represented as a
graph, which is of special interest from different application
point of view. On the other hand, depending on a user-specified width of funnel one
can compute a geometrically close approximation of the MS-complex. We
give the proof of convergence of our algorithms. Although the complexity of the
given algorithm depends on the input function and the complexity
of the interval arithmetic library used in the algorithm. As we discussed some of the separatrices
inside  a bounding box $B$ may have discontinuous components. The algorithm we propose
here is able to compute only the part of the separatrices which are connected
to the corresponding saddle. Therefore one open question is how to
compute all the components of separatrices inside a bounding box.



\bibliographystyle{plain}
\bibliography{bib/reference}

\begin{appendices}
\appendix
\section{Mathematical results used in the text}
\label{sec:math}
\paragraph{Error in Euler's method.}
Error bounds for approximate solutions of ordinary differential equation
play a crucial role in the construction of certified funnels for
separatrices. We quote the relevant parts of the book~\cite{jw-dedsa-91}.
\begin{fundIneq}
  Consider the differential equation 
  $$\dfrac{dy}{dx}=F(x,y)$$
  on a box $\B = [a,b]\times[c,d]$, and let $C$ be a constant such that
  \begin{equation*}
    \max_{(x,y)\in\B} |\frac{\partial F}{\partial y}(x,y)| \leq C.
  \end{equation*}
  If $y_1(x)$ and $y_2(x)$ are two approximate piecewise differentiable solutions 
  satisfying 
  \begin{align*}
    |y_1'(x) - F(x,y_1(x))| & \leq \varepsilon_1, \\
    |y_2'(x) - F(x,y_2(x))| & \leq \varepsilon_2
  \end{align*}
  for all $x \in [a,b]$ at which $y_1(x)$ and $y_2(x)$ are differentiable, and if,
  for some $x_0 \in [a,b]$
  \begin{equation*}
    |y_1(x_0) - y_2(x_0)| \leq \delta,
  \end{equation*}
  then, for all $x \in [a,b]$
  \begin{equation*}
    |y_1(x) - y_2(x)| \leq \delta e^{C|x-x_0|} +
    \varepsilon\,\frac{e^{C|x-x_0|} -1}{C},
  \end{equation*}
  where $\varepsilon = \varepsilon_1 + \varepsilon_2$.
\end{fundIneq}

\medskip\noindent
The well-known Euler method for constructing approximate solutions to ordinary
differential equations is also useful for the construction of certified strips.
It proceeds as follows.
For a given initial position $(x_0,y_0)$, define the sequence of points $(x_n,y_n)$
by 
\begin{align*}
  x_n &= x_{n-1} + \eta = x_0 + n\eta\\
  y_n &= y_{n-1} + \eta\,F(x_{n-1},y_{n-1}),
\end{align*}
as long as $(x_n,y_n) \in \B$.
Then the \textit{Euler approximate solution} $y_\eta(x)$ through $(x_0,y_0)$ 
with step $\eta$ is the piecewise linear function the graph of which joins the points
$(x_n,y_n)$, so
\begin{equation*}
  y_h(x) = y_n + (x-x_n)\,F(x_n,y_n) \quad \text{for $x \in [x_n,x_{n+1}]$}.
\end{equation*}
The following result states that the Euler approximate solution converges to the
actual solution as the step tends to zero, and gives a bound for the error.
\begin{Euler}
  Consider the differential equation 
  $$\dfrac{dy}{dx}=F(x,y)$$
  on a box $\B = [a,b]\times[c,d]$, where $F$ is a $C^2$-function on $\B$.
  Let the constants $A$, $B$ and $C$ satisfy
  \begin{equation*}
    \max_{(x,y)\in\B} |F(x,y)| \leq A,
    \quad
    \max_{(x,y)\in\B} |\frac{\partial F}{\partial x}(x,y)| \leq B,
    \quad
    \max_{(x,y)\in\B} |\frac{\partial F}{\partial y}(x,y)| \leq C.
  \end{equation*}
  The deviation of the Euler approximate solution $y_\eta$ with step $\eta$ from a 
  solution $y$ of the differential equation with $|y(a) - y_\eta(a)|
  \leq \Delta$ satisfies
  \begin{equation*}
    |y_\eta(x) - y(x)| \leq 
    \Delta\,e^{C|x-a|}
    +
    \eta\,(B+AC)\,\frac{e^{C|x-a|-1}}{C},
  \end{equation*}
  for all $x \in [a,b]$.
\end{Euler}
The preceding result also holds if, as in the current chapter, $\eta$ is
not the exact step, but an upper bound for a possibly varying step.

\section{Narrowing  separatrix intervals}
\label{sec:smallSepIntervals}
We first sketch the algorithm for narrowing the separatrix intervals.
To this end we subdivide the box $\I$, and hence the box $N(\I)$,
yielding a nested sequence of boxes $\I = \I_0 \supset \I_1 \supset
\ldots $, with surrounding boxes  $N(\I) = N(\I_0) \supset N(\I_1)
\supset \ldots$, such that
\begin{enumerate}
\item
  $\operatorname{width}(\I_{n+1}) = \tfrac{1}{2}\operatorname{width}(\I_n)$
\item
  the saddle point $p$ is contained in box $\I_n$, for all $n$.
\end{enumerate}
See Figure~\ref{fig:zoomSaddle}.
\begin{figure}[h]
  \centering
  \psfrag{xb0}{$x=b_0$}
  \psfrag{xb1}{$x=b_1$}
  \psfrag{xb2}{$x=b_2$}
  \includegraphics[width=0.48\textwidth]{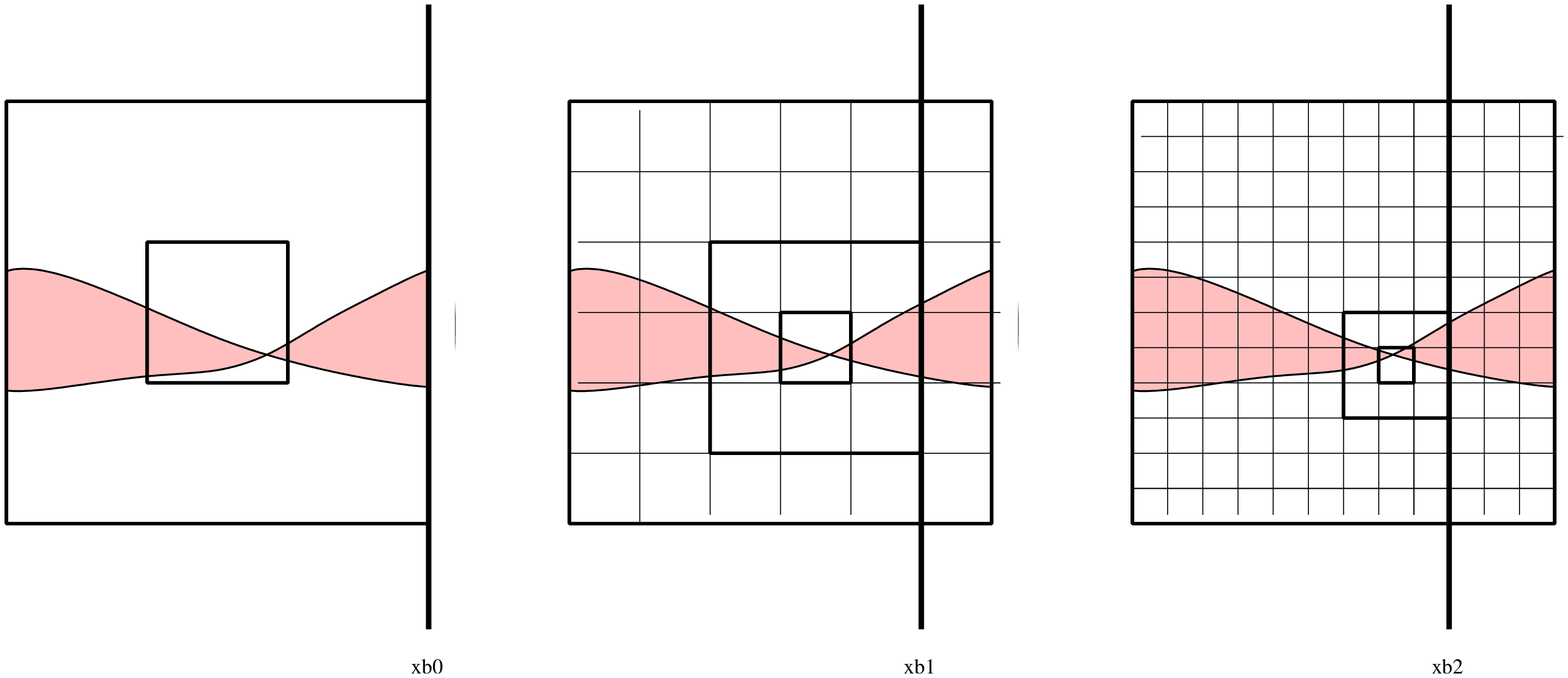}
  \caption{Zooming in on the saddle point by subdivision.}
  \label{fig:zoomSaddle}
\end{figure}
Let $s$ be the $x$-coordinate of the saddle point $p$, and let $b_n$ be the
$x$-coordinate of the rightmost vertical boundary edge of $N(\I_n)$.
Let $w_n$ be the width of $\I_n$, and let $c_n$ be the $x$-coordinate of its center.
Then $b_n = c_n+\tfrac{3}{2}w_n$, and $w_{n+1} = \tfrac{1}{2}w_n$.
Since then $b = b_0 > b_1 > \ldots$, since
\begin{equation*}
  b_{n+1} = c_{n+1}+\tfrac{3}{2}w_{n+1} \leq c_n +\tfrac{1}{4}w_n
  +\tfrac{3}{4}w_n = b_n - \tfrac{1}{2}w_n.
\end{equation*}
Since $|c_n-s| \leq \tfrac{1}{2}w_n$, we get
\begin{equation}
  \label{eq:bn}
  w_n \leq b_n - s \leq 2w_n.
\end{equation}
Consider the forward integral curves of the vector field $\gradient{h}$
through the points of intersection $q^{\pm}_n$ of the line $x = b_n$
and the boundary curves $\Gamma^u_{\pm\beta}$.
See Figure~\ref{fig:narrowingSepInt}.
These curves intersect the rightmost edge of $N(\I)$ in two points
bounding an interval $\J_n$ on this edge.
Arbitrarily good approximations of these integral curves are obtained as follows.
Let $\vartheta_n$ be (an upper bound on) the maximum angle variation of
$\gradient{h}$ over any of the boxes of the $n$-th subdivision of
$N(\I)$.
Since $h$ is $C^2$, the angle variation is a Lipschitz function, so
$\lim_{n\rightarrow\infty}\vartheta_n = 0$. In particular, the rotated
vector fields  $X_{\pm\vartheta_i}$ converge to $\gradient{h}$.
We construct an upper fence with angle $\tfrac{1}{2}\vartheta_n$ for the
upper integral curve, and a lower fence with angle $-\tfrac{1}{2}\vartheta_n$ for the
lower integral curve. See also Section~\ref{sec:ConstructionStrips} for
the construction of a fence. These fences are disjoint, since the angle
variation of $\gradient{h}$ over a grid box is less than $\vartheta_n$.

Since $\lim_{n\rightarrow\infty} b_n = s$, cf (\ref{eq:bn}), the points
$q^{\pm}_n$ converge to the saddle point.
Therefore,  the intervals $\J_0 \supset \J_1 \supset \ldots $, contained
in the intersection of the unstable wedge $C^u_\beta$ and the rightmost
vertical edge of $N(\I)$, converge to the intersection of the unstable
separatrix and the rightmost vertical edge of $N(\I)$.
\begin{figure}[h]
  \centering
  \psfrag{I0}{$J_0$}
  \psfrag{I1}{$J_1$}
  \psfrag{I2}{$J_2$}
  \psfrag{xb0}{$x=b_0$}
  \psfrag{xb1}{$x=b_1$}
  \psfrag{xb2}{$x=b_2$}
  \includegraphics[width=0.44\textwidth]{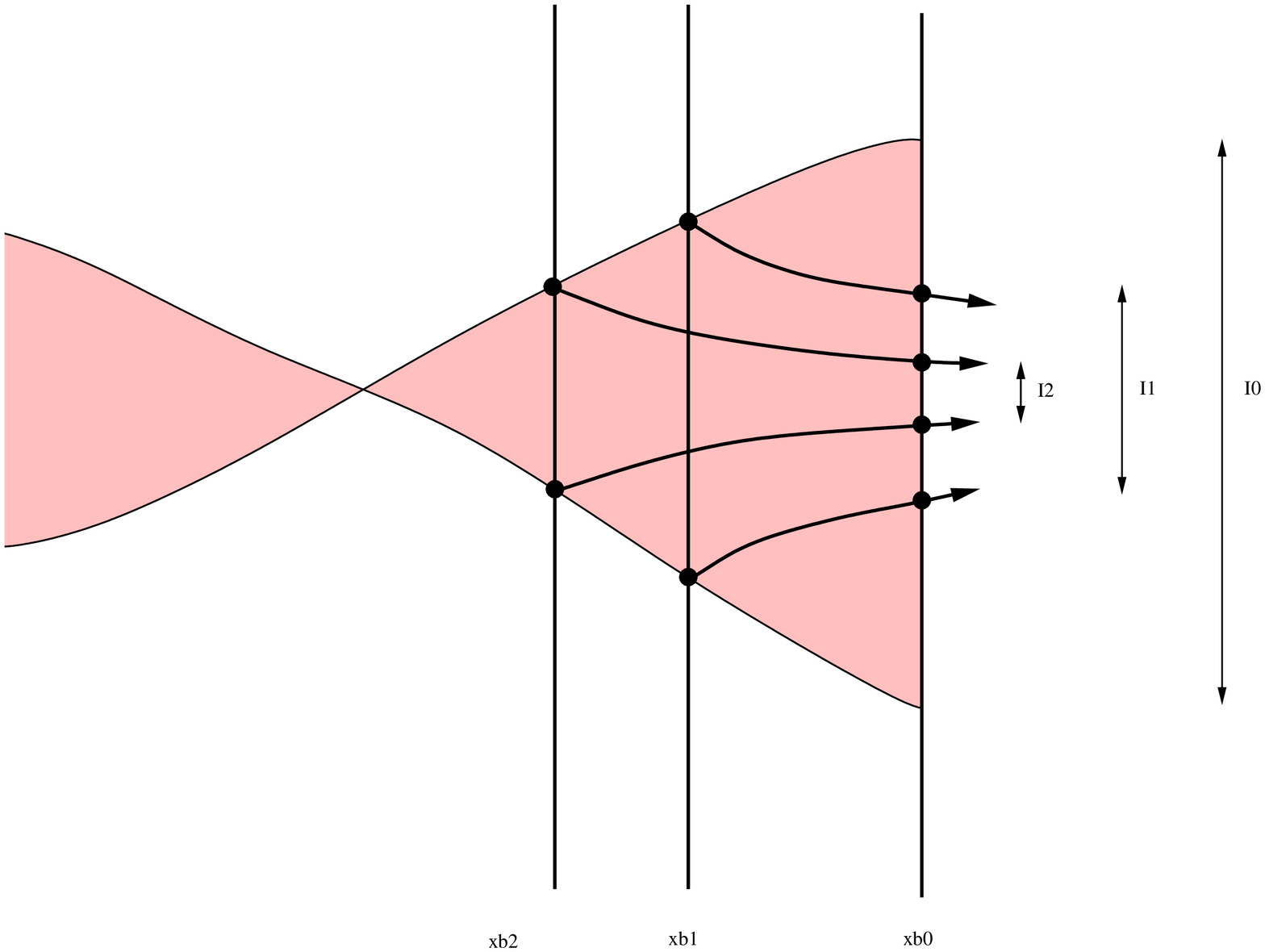}
  \caption{Narrowing separatrix intervals}
  \label{fig:narrowingSepInt}
\end{figure}

\bigskip\noindent
A proof of Lemma~\ref{lemma:smallSepIntervals} can be given along the
lines of~\cite[page 330ff]{CoddLev55} or~\cite[Chapter 3.6]{ch-mbt-82}.
Rather than giving a complete proof we give an example illustrating the
main ideas.
Consider the function $h(x,y) = \tfrac{1}{2}\lambda_u\,x^2 +
\tfrac{1}{2}\lambda_s\,y^2$, with $\lambda_s < 0 < \lambda_u$.
The gradient vector field is given by
$\gradient{h}(x,y) = (\lambda_ux,\lambda_sy)^T$.
Obviously, the origin is a saddle point, with positive eigenvalue
$\lambda_u$ and negative eigenvalue $\lambda_s$, and 
eigenvectors $(1,0)^T$ and $(0,1)^T$, respectively.
The unstable cone of this saddle point is bounded by the curves
$\Gamma^u_{\pm\beta}$, defined implicitly by $  \psi^u_{\pm\beta}(x,y)
=0$, where
\begin{align*}
  \psi^u_{\pm\beta}(x,y) 
  & = 
  \det(V^u,X_\beta(x,y)) \\[1.2ex]
  & =
  \begin{vmatrix}
    1 & (\lambda_u\cos\beta) \, x \mp (\lambda_s\sin\beta) \, y \\[1.2ex]
    0 & (\lambda_u\sin\beta) \, x \pm (\lambda_s\cos\beta) \, y
  \end{vmatrix}\\[1.2ex]
  & = 
  (\lambda_u\sin\beta) \, x \pm (\lambda_s\cos\beta) \, y.
\end{align*}
Therefore, the equation of $\Gamma^u_{\pm\beta}$ is $ y = \pm a x$, 
with $a = - (\tan\beta)\frac{\lambda_u}{\lambda_s} > 0$.
Let the right vertical edge of $\I$ be on the line $x = b$, $b > 0$, and
consider a point $q^{+} = (\xi, a\,\xi)$, with $0 < \xi < b$, 
on the boundary curve $\Gamma^u_\beta$ of the unstable cone.
The integral curve of $\gradient{h}$ through $q^{+}$ satisfies the
differential equation
\begin{equation*}
  \frac{dy}{dx} = \Lambda\,\frac{y}{x},
\end{equation*}
with initial condition $y(\xi) = a\xi$, 
where  $\Lambda = \dfrac{\lambda_s}{\lambda_u} < 0$.
Therefore, 
\begin{equation*}
  y(x) = a\,\xi \bigl(\frac{x}{\xi} \bigr)^{\Lambda}.
\end{equation*}
The integral curve through $q^{+}$ intersects the rightmost edge of $\I$ in
the point $(b,\delta(\xi))$, where
\begin{equation*}
  \delta(\xi) = \frac{ab^{\Lambda}}{\xi^{\Lambda-1}}.
\end{equation*}
Similarly, the integral curve through $q^{-} = (\xi,-a\xi) \in
\Gamma^u_{-\beta}$ intersects the rightmost edge of $\I$ in
the point $(b,-\delta(\xi))$.
Now let $\xi$ range over the sequence $b_0, b_1,\ldots$.
Then the interval $\J_n$ has endpoints $(b_n,\pm\delta(b_n))$, so its
width is $2\delta(b_n)$.
In view of (\ref{eq:bn}), with $s=0$, we have
\begin{equation*}
  2 \frac{ab^\Lambda}{w_n^{\Lambda-1}}
  \leq
  \operatorname{width}(\J_n) 
  \leq
  2 \frac{ab^\Lambda}{(2w_n)^{\Lambda-1}}.
\end{equation*}
In other words, with $K = 1 - \Lambda > 1$ and $c = ab^\Lambda \, w_0^K$,
\begin{equation*}
  c \,\bigl(\tfrac{1}{2}\bigr)^{Kn}
  \leq
  \operatorname{width}(\J_n) 
  \leq
  2^K\,c \,\bigl(\tfrac{1}{2}\bigr)^{nK}
\end{equation*}
Hence,
\begin{equation*}
    \operatorname{width}(\J_{n+2}) \leq \dfrac{1}{2^K} \operatorname{width}(\J_n).
\end{equation*}
Since $K>1$, after two subdivision steps the size of the separatrix
interval reduces by more than a factor two. Hence interval arithmetic
provides an arbitrarily good approximation of the intersection of the
unstable separatrix and the boundary of the saddle box.
A similar observation holds for the intersection of the other
separatrices and the boundary of their saddle box.


\end{appendices}

\end{document}